\documentclass[11pt]{article}

\usepackage{booktabs} 

\usepackage{amsmath,amsthm,amsfonts,amssymb,fullpage}
\usepackage{dsfont}
\usepackage{color}
\definecolor{mylinkcol}{RGB}{0,30,70}
\usepackage{tikz,pgfplots}
\usetikzlibrary{shapes}
\usepackage[font=small]{caption}
\usepackage{enumitem}
\usepackage[numbers]{natbib}
\usepackage[colorlinks=true,breaklinks=true,bookmarks=true,urlcolor=blue,
     citecolor=blue,linkcolor=blue,bookmarksopen=false,draft=false]{hyperref} 

\usetikzlibrary{external}
\tikzexternalize

\usepackage[colorinlistoftodos,textsize=tiny]{todonotes}
\newcommand{\Comments}{0} 
\newcommand{\FutureComments}{0} 
\definecolor{gray}{gray}{0.5}
\definecolor{darkgreen}{rgb}{0,0.5,0}
\definecolor{bo}{rgb}{0.5,0.5,0}
\definecolor{raf}{rgb}{0,0.5,0.5}
\newcommand{\mynote}[2]{\ifnum\Comments=1\textcolor{#1}{#2}\fi}
\newcommand{\mytodo}[2]{\ifnum\Comments=1%
  \todo[linecolor=#1!80!black,backgroundcolor=#1,bordercolor=#1!80!black]{#2}\fi}

\newcommand{\future}[1]{\ifnum\FutureComments=1{FUTURE: #1]}\fi}

\newtheorem{theorem}{Theorem}[section]

\newtheorem{lemma}{Lemma}[section]
\newtheorem{proposition}{Proposition}[section]

\newtheorem*{theorem*}{Theorem}
\newtheorem*{proposition*}{Proposition}

\theoremstyle{definition}
\newtheorem{definition}{Definition}
\newtheorem{axiom}{Axiom}

\newcommand{\interior}{\mathrm{int}}
\newcommand{\tr}{\top}

\newcommand{\E}{\mathbb{E}}
\renewcommand{\P}{\mathcal{P}}
\newcommand{\R}{\mathcal{R}}

\newcommand{\Y}{\mathcal{Y}}

\newcommand{\conv}{\mathrm{conv}}
\newcommand{\clo}{\mathrm{cl}}
\newcommand{\toto}{\rightrightarrows}

\newcommand{\defeq}{\doteq}
\newcommand{\ones}{\mathds{1}}

\newcommand{\reals}{\mathbb{R}}

\newcommand{\outcomes}{\mathcal{Y}}

\newcommand{\AnySrange}[1]{\mathcal{#1}}
\newcommand{\Srange}{\AnySrange{S}}
\newcommand{\Dc}{\mathcal{D}}
\newcommand{\Hc}{\mathcal{H}}
\newcommand{\Qc}{\mathcal{Q}}
\newcommand{\Xc}{\mathcal{X}}

\newcommand{\argmax}{\mathop{\mathrm{argmax}}}

\def\FUNF(#1,#2,#3){(\FUNS(#1,#3))-(\FUNS(#1,#2))}

\begin{document}
\title{An Axiomatic Study of Scoring Rule Markets}
\author{Rafael Frongillo \\CU Boulder \and Bo Waggoner\\UPenn}
\date{}
\maketitle

\begin{abstract}
Prediction markets are well-studied in the case where predictions are probabilities or expectations of future random variables. In 2008, Lambert, et al. proposed a generalization, which we call ``scoring rule markets'' (SRMs), in which traders predict the value of arbitrary statistics of the random variables, provided these statistics can be elicited by a scoring rule. Surprisingly, despite active recent work on prediction markets, there has not yet been any investigation into the properties of more general SRMs.
To initiate such a study, we ask the following question: in what sense are SRMs ``markets''? We classify SRMs according to several axioms that capture potentially desirable qualities of a market, such as the ability to freely exchange goods (contracts) for money. Not all SRMs satisfy our axioms: once a contract is purchased in any market for prediction the median of some variable, there will not necessarily be any way to sell that contract back, even in a very weak sense. Our main result is a characterization showing that slight generalizations of cost-function-based markets are the only markets to satisfy all of our axioms for finite-outcome random variables. Nonetheless, we find that several SRMs satisfy weaker versions of our axioms, including a novel share-based market mechanism for ratios of expected values.%
  
\end{abstract}

\section{Introduction}
\label{sec:msr-introduction}
The goal of a prediction market is to collect and aggregate predictions about some future outcome $Y$ taking values in $\outcomes$; common examples arise from sporting, political, meteorological, or financial events.
Prediction markets work by offering financial contracts whose payoffs are contingent on the eventually-observed value of $Y$.
The agent's choices are interpreted, by revealed preference, as predictions about $Y$, and thus the final state of the market is interpreted as an aggregation of agent beliefs.

Hanson~\citep{hanson2003combinatorial} observed that one can design a prediction market using a \emph{proper scoring rule}, which is a contract $S(p',y)$ that scores the accuracy of prediction $p'$ (a probability distribution over outcomes) upon outcome $Y=y$.  The relevant guarantee is that the expected score $\E_{Y\sim p} S(p',Y)$ is maximized when the agent reports their true belief $p'=p$.  In Hanson's \emph{scoring rule market (SRM)},\footnote{We will use this term in lieu of the standard \emph{market scoring rule (MSR)}, as the latter could refer to either the scoring rule or the market mechanism.} traders arrive sequentially and report their belief $p_t$, and are eventually paid $S(p_t,y)-S(p_{t-1},y)$ when $Y=y$ is revealed.

In many cases, a market designer may only be interested in specific (e.g.\ summary) statistics, called \emph{properties}, of trader beliefs, rather than the entire distribution over the outcome $Y$.
Lambert et al.~\citep{lambert2008eliciting} observed that SRMs can be generalized to arbitrary scoring rules $S(r,y)$, wherein traders are paid $S(r_t,y)-S(r_{t-1},y)$, and if the expected value of $S$ is maximized by reporting the value of a particular property, in which case we say $S$ \emph{elicits} the property, then the market should intuitively aggregate trader beliefs about the property in question.
For example, one could design a ``median market'' by leveraging the fact that $S(r,y) = -|r-y|$ elicits the median of the distribution of $Y$, and we might expect the corresponding SRM to aggregate predictions about the median of $Y$.

Surprisingly, apart from exploring which properties are elicitable, practically nothing is known about these general SRMs apart from one property: expected values~\citep{hanson2003combinatorial,chen2007utility,abernethy2013efficient}.
Here SRMs can be rephrased into a dual ``cost-function-based'' formulation where traders buy and sell \emph{shares} in some securities $\phi_1,\ldots,\phi_k \in \reals^\Y$, and the market prices reveal the trader's belief about the expected value of $\phi: \outcomes \to \reals^k$ (\S~\ref{sec:autom-mark-makers},~\ref{sec:char-tn}).
The literature on prediction markets focuses on this setting, in which one can more easily study traditional market quantities like liquidity and depth.

In this paper, we step beyond expected values and study SRMs as a whole.
Our first contribution is identifying which questions to ask; in particular we ask the following.
\begin{enumerate}[label=(\arabic*),itemsep=0pt,topsep=4pt]
\item In what sense can we think of SRMs as ``markets'' in the traditional sense?
\item Specifically, which SRMs behave like share-based markets, in the sense that traders can always ``sell'' their shares for some nontrivial price?
\item Which SRMs if any maintain the known desirable characteristics of the cost-function-based framework, such as bounded worst-case loss and no arbitrage?
\end{enumerate}
To answer these questions, we introduce axioms, and then study SRMs based on the property they elicit---as we will see, for example, markets eliciting modes will satisfy different axioms than those for medians, regardless of the scoring rules chosen.
Our primary focus is the \emph{trade neutralization (TN)} axiom, which captures question (2) above: traders holding a contract from the market maker can ``sell'' it for a nontrivial price, receiving more from transaction than the worst-case payoff of the contract.
(Otherwise, traders would just keep their holdings regardless of their belief.)

\paragraph{Summary and results.}
After reviewing prediction markets and scoring rules (\S~\ref{sec:prediction-market-overview}), we introduce our axioms in \S \ref{sec:axioms}.
We show that under some reasonably mild and standard axioms, incentive compatibility (IC) and path independence (PI), any mechanism that elicits predictions by offering contracts must be an SRM, thus justifying our focus on them.
We then classify modes, finite properties, medians, quantiles, and expectations, according to what axioms their markets can satisfy (\S~\ref{sec:first-examples}).

In \S~\ref{sec:char-tn}, we give our main result: \emph{any SRM satisfying trade neutralization must be a certain form of generalized cost-function-based market} (Theorem~\ref{thm:tn-ic-implies-cost-func}).
We note that unlike in prior work on cost functions~\citep{abernethy2013efficient}, this result applies to any SRM for any property; so in a sense, it says that only close relatives of expectations can be elicited by any market satisfying trade neutralization. 

Our exploration into non-expected-value SRMs yields a new prediction market mechanism worthy of attention on its own, for eliciting the ratio of expectations of random variables.
As we describe in \S~\ref{sec:ratios}, this new market framework exchanges a security (random variable) not for cash but for units of another security, thereby revealing trader beliefs about the \emph{ratio} of their expectations.%
\footnote{To see this, note that a trader willing to ``buy'' one unit of $d\in\reals^\Y$ in exchange for $c$ units of $b\in\reals^\Y$ is effectively expressing the belief $\E\,d > c \, \E\,b$, i.e., $\E\,d / \E\,b > c$.  Similarly, selling at this ``price'' reveals the belief $\E\,d / \E\,b < c$.}
Finally, we conclude in \S~\ref{sec:discussion} with a discussion of markets for other properties, elicitation complexity, and alternative market formulations.
The Appendix contains all omitted proofs.

\paragraph{Relation to prior work.}
While we discuss related work in \S~\ref{sec:prediction-market-overview}, it is important to distinguish our main result from prior characterizations of cost-function-based markets in the literature.
In particular, as mentioned above, cost-function-based markets are known to be equivalent to scoring rules that elicit expected values~\citep{hanson2003combinatorial,chen2007utility,abernethy2013efficient}.
In contrast, we ask which among a very broad family of mechanisms satisfy certain basic ``market'' axioms, and find that such mechanisms must fall into (a minor generalization of) the cost-function-based framework.
Indeed, based on the above known equivalence, we can further conclude that our axioms imply the elicitation of expected values, though weakening them slightly allows us to elicit ratios of expectations (\S~\ref{sec:ratios}) and expectiles (\S~\ref{app:expectiles}) while still retaining some sense of a ``market''.

\section{Aggregating Information with Prediction Markets}
\label{sec:prediction-market-overview}

The goal of a prediction market is to crowdsource and aggregate beliefs of participants about some future event.
It does so by allowing participants to buy and sell contracts which have different payoffs depending on the outcome of the event, and inferring a prediction from the participants' choices.
In this section, we formally define the class of such markets that we study, \emph{scoring rule markets (SRMs)}, with references to previous work.
For other related work, see above and \S~\ref{sec:discussion}.

\subsection{Outcomes and contracts}
\label{sec:outcomes-contracts}
Let $\Y$ denote the \emph{outcome space} of interest to the market designer and $Y$ a future event or random variable taking values in $\mathcal{Y}$.
The designer will in general allow traders to select \emph{contracts} $d \in \reals^{\outcomes}$ from a list offered by the market.
The interpretation of a contract $d$ is that, when $Y$ is eventually revealed, the market maker will pay the agent a net payoff $d(Y)$ (which may be negative).

Note that under this formulation, any initial payment the agent might make is folded in to $d$.  Specifically, letting $\ones\in\reals^\Y$ denote the ``all-ones'' contract $\ones(y) = 1 \,\forall y\in\Y$, then a contract $d$ could be written $d = d'-\alpha\ones$, interpreted as paying a price of $\alpha$ now for the contract $d'$, whose payoff will be revealed when $Y$ is observed.
We assume agents are indifferent to timing of payments and just wish to maximize total expected payoff.

\subsection{Automated market makers}
\label{sec:autom-mark-makers}
When designing prediction markets, rather than a typical continuous double auction (``stock exchange'') mechanism, it is common to employ a centralized \emph{automated market maker}, which offers to buy or sell any available contracts, and through which all trades are executed.  (See \cite{abernethy2013efficient} for practical reasons behind this design choice.)

Formally, a sequence of participants (traders) arrive at times $t=1,\dots,T$ and each selects a contract from a list offered by the market at that time.
It will be convenient to let the set of contracts available be indexed at each time by some \emph{report space} $\R \subseteq \reals^k$.
Following Abernethy et al.~\citep{abernethy2014general}, we consider a generic market making algorithm, termed a \emph{mechanism}, that specifies the set of contracts available at each time.
In general, this may depend on the entire past history of the market, and is represented as a mapping $F$ that, given a report $r \in \R$ of the participant, returns the corresponding contract.
More formally, the contract given to a participant who chooses report $r_t$ given the past history of reports $r_1,\ldots,r_{t-1}$ is denoted $\vec F(r_t|r_1,\ldots,r_{t-1}) \in \reals^\Y$.
The net payoff to the trader upon outcome $y\in\Y$ will be denoted $F(r_t,y|r_1,\ldots,r_{t-1})$.

A popular instantiation of such a mechanism is the \emph{cost-function-based market maker}, in which $\R=\reals^k$ and $F(r_t,y|r_1,\ldots,r_{t-1}) = (r_t - r_{t-1})\cdot \phi(y) - (C(r_t) - C(r_{t-1}))$, where $C:\reals^k\to\reals$ is convex and $\phi:\Y\to\reals^k$~\citep{abernethy2013efficient}.
This payoff function corresponds to a trader making a fixed payment $C(r_t) - C(r_{t-1})$ in return for a bundle $r = r_t - r_{t-1} \in\reals^k$ of \emph{shares} in the securities $\phi_1,\ldots,\phi_k\in\reals^\Y$, i.e.\ $r_i$ units of security $\phi_i$, for an up-front cost paid to the market maker in terms of $C$.
Among the several nice properties of this market maker, one can see that a trader who believes $\E\,\phi = x$ has an incentive to buy or sell securities until $\nabla C(r_t) = x$, thereby revealing their belief.

\subsection{Scoring rule markets for properties}
\label{sec:srm-properties}

The goal of the market is to incentivize a good prediction for some property or statistic of $Y$, such as the median or mean.
Thus, much work considers prediction markets relying on \emph{proper scoring rules}, which are contracts designed to elicit a single agent's belief (i.e. the case $T=1$ of a market)~\cite{brier1950verification,gneiting2007strictly}.  While originally designed to elicit an entire distribution over the outcome $Y$, in many cases, for example when $\Y$ is very large or even infinite, one may be interested in obtaining only summary information about this distribution.  It is therefore natural to consider scoring rules which elicit such statistics, or \emph{properties}, of distributions.

Here and throughout the paper, $\P$ is a set of distributions of interest on the domain $\Y$, for example, the distributions with full support, with finite expectation, or so on.

\begin{definition}
  A \emph{property} is a function $\Gamma : \P \toto \R$, which associates a set of correct report values to each distribution.  We require $\Gamma$ to be \emph{non-redundant}, meaning $\Gamma^{-1}(r) \not\subseteq \Gamma^{-1}(r')$ for any reports $r,r'\in\R$ (i.e. we cannot have $r\in\Gamma(p) \implies r'\in\Gamma(p)$ for all $p$).
  A property is \emph{single-valued} if each $p$ maps to exactly one report, in which case we write $\Gamma$ as a function $\Gamma : \P \to \R$.
\end{definition}

A scoring rule $S(r,y)$ simply provides a payoff based on a reported value of the property and the observed outcome $y$.  We say the scoring rule \emph{elicits} the property if the correct report is incentivized.

\begin{definition}
  \label{def:elic}
  A scoring rule $S:\R\times\Y\to\reals$ \emph{elicits} a property $\Gamma:\P\toto \R$ if for all $p\in\P$,
  $\Gamma(p) = \argmax_{r\in\R} \E_p S(r,Y)$
  where $Y\sim p$.  When $\Gamma$ is single-valued, this condition becomes $\{\Gamma(p)\} = \argmax \left[\cdots\right]$.
  A property is \emph{elicitable} if some scoring rule elicits it.
\end{definition}

Not all properties are elicitable, meaning there is no way to score reports of their value based on an observed outcome in a manner which incentivizes truthfulness.  A classic example is the variance of a distribution, which does not have convex level sets (mixtures of distributions with the same variance have higher variance in general), a necessary condition for elicitability~\cite{lambert2008eliciting}.  (For such non-elicitable statistics, one could still discuss their elicitation \emph{complexity}, the number of reports needed to compute the desired property post-hoc; we discuss this concept in \S~\ref{sec:discussion}.)
However, several well-known statistics are elicitable properties, including expected values and means, medians, quantiles, expectiles, and ratios of expectations.

Combining the concept of scoring rules for properties with scoring rule markets yields the following natural prediction market mechanism for an arbitrary property $\Gamma:\P\toto\reals$ elicited by $S$~\cite{hanson2003combinatorial,lambert2008eliciting,abernethy2011collaborative}.
Initialize the market state at some $r_0 \in \R$.  When trader $t=1,\ldots,T$ arrives, they can choose to update the market state to any $r_t\in\R$, and once the outcome $y\in\Y$ is revealed, the market maker will pay the trader $S(r_t,y) - S(r_{t-1},y)$.
We can again express this mechanism using our $F$ notation above (we will later relate $F$ to $\Gamma$).

\begin{definition}
  \label{def:srm}
  A \emph{scoring rule market} for scoring rule $S:\R\times\Y\to\reals$ and initial state $r_0 \in \R$ is the mechanism $F(r_t,y|r_1,\ldots,r_{t-1}) = S(r_t,y) - S(r_{t-1},y)$.
\end{definition}

For brevity, we will simply write $F(r',y|r) = S(r',y) - S(r,y)$, or using contract notation $\vec F(r'|r) = \vec S(r') - \vec S(r)$, where of course $\vec S(r)_y = S(r,y)$.

\section{Axioms and Preliminaries}
\label{sec:axioms}
To motivate our choice of SRMs, we consider two preliminary axioms in Appendix \ref{app:derive-srms} that turn out to characterize SRMs; here, we briefly summarize.
\emph{Incentive-compatibility (IC)} states that agents maximize expected payment by choosing contracts that reveal their true belief about $\Gamma(p)$, for the property $\Gamma$ chosen by the market maker.
(Without IC, it is not clear in what sense a market reveals useful information.)
\emph{Path-independence (PI)}, an axiom appearing in prior work on prediction markets~\citep{abernethy2013efficient}, ensures that a participant cannot gain more by making multiple trades in a row than by simply making the single optimal trade immediately.
In Appendix \ref{app:derive-srms}, we show that any mechanism which satisfies both PI and IC is an SRM for a property $\Gamma$.
It is conceptually similar to results for markets eliciting the mean~\citep{abernethy2013efficient,abernethy2014general}, but more general as it holds for any property.
\begin{theorem} \label{thm:char-srm}
  A mechanism satisfies PI and IC for property $\Gamma$ if and only if it is a scoring rule based market (SRM) with some scoring rule $S$ that elicits $\Gamma$.
\end{theorem}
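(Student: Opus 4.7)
The equivalence splits into a straightforward direction and a more substantive direction.

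For the \textbf{if direction}, suppose $F$ is an SRM for some scoring rule $S$ eliciting $\Gamma$. Then $F(r_t, y | r_1, \ldots, r_{t-1}) = S(r_t, y) - S(r_{t-1}, y)$ by Definition~\ref{def:srm}, so any sequence of trades telescopes and the cumulative payoff equals the single-trade payoff $S(r_T, y) - S(r_0, y)$, giving PI. For IC, a trader with belief $p$ and previous state $r_{t-1}$ maximizes $\E_p S(r_t, Y) - \E_p S(r_{t-1}, Y)$; the second term does not depend on $r_t$, so the argmax is $\argmax_{r_t} \E_p S(r_t, Y) = \Gamma(p)$ since $S$ elicits $\Gamma$.

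For the \textbf{only-if direction}, my plan is to first use PI alone to recover the SRM form and then invoke IC to identify the property. PI, stated precisely, should imply that the cumulative payoff of moving the market through a sequence $r_0 \to r_1 \to \cdots \to r_n$ depends only on the endpoints $(r_0, r_n)$, not on intermediate states or earlier history, and in particular is additive across decompositions $r_0 \to r_1 \to r_n$. Writing $H(r, r', y) \defeq F(r', y | r)$, this yields $H(r_0, r_2, y) = H(r_0, r_1, y) + H(r_1, r_2, y)$. Fixing any reference report $r^* \in \R$ and setting $S(r, y) \defeq H(r^*, r, y)$, additivity along $r^* \to r \to r'$ gives $S(r', y) = S(r, y) + H(r, r', y)$, i.e., $F(r', y | r) = S(r', y) - S(r, y)$, so $F$ is the SRM for $S$. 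Finally, IC at any previous state $r$ reads $\Gamma(p) = \argmax_{r'} \E_p [S(r', Y) - S(r, Y)] = \argmax_{r'} \E_p S(r', Y)$, matching Definition~\ref{def:elic}.

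The conceptual work lies in extracting additivity and history-independence of the incremental payoff from the precise statement of PI in Appendix~\ref{app:derive-srms}. If PI is formulated as a two-sided equality between the cumulative payoff of a multi-step path and the single-step payoff between its endpoints, additivity is immediate; if it is only the one-sided ``cannot gain by splitting'' inequality suggested by the prose, one must upgrade it to equality by considering round-trip paths (such as $r \to r' \to r$) together with IC to rule out nontrivial slack. Non-redundancy of $\Gamma$ plays no essential role beyond ensuring that distinct reports correspond to distinct argmax sets. Everything else---telescoping, the reference-point construction of $S$, and the cancellation when applying IC---is routine.
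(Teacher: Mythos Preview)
Your outline is essentially the paper's approach, with one acknowledged but unfilled gap. The ``if'' direction matches the paper. For the ``only if'' direction, you correctly identify that the crux is extracting history-independence from PI, but you do not carry it out: your additivity claim $H(r_0,r_2,y)=H(r_0,r_1,y)+H(r_1,r_2,y)$ already presupposes that $F(r_2,y\mid r_0,r_1)=F(r_2,y\mid r_1)$, whereas PI only gives $F(r_2,y\mid r_0)=F(r_1,y\mid r_0)+F(r_2,y\mid r_0,r_1)$, with the two-element history still present in the last term. So ``additivity is immediate'' is not quite right even with the two-sided PI.

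The paper closes this gap by anchoring at the \emph{empty} history: set $\vec S(r)\defeq\vec F(r\mid\emptyset)$, then induct on the history length $t$. The $t=0$ instance of PI gives $\vec F(r\mid r_1)=\vec F(r\mid\emptyset)-\vec F(r_1\mid\emptyset)=\vec S(r)-\vec S(r_1)$ directly; for longer histories, PI rewrites $\vec F(r\mid r_1,\ldots,r_t)$ as $\vec F(r\mid r_1,\ldots,r_{t-1})-\vec F(r_t\mid r_1,\ldots,r_{t-1})$, and the inductive hypothesis collapses both terms to $\vec S(r)-\vec S(r_t)$. This simultaneously establishes history-independence and the SRM form. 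Your reference point $r^*\in\R$ would work too, but to relate $F(\cdot\mid r^*)$ to $F(\cdot\mid r_1)$ you still end up passing through the empty-history instance of PI, so nothing is gained. Finally, the paper's PI axiom (Axiom~\ref{ax:pi}) is stated as an equality, so your contingency plan involving round trips and IC to upgrade a one-sided inequality is unnecessary here.
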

\noindent
Henceforth, unless otherwise noted we will assume PI and IC, which is to say we focus on SRMs.

\subsection{Arbitrage and Bounded Loss}
\label{sec:arb-loss}

We now express two time-honored axioms in our notation: no arbitrage, and bounded worst-case loss.  Recall that we write $\ones\in\reals^\Y$ to mean the contract paying out $\ones(y) = 1$ for each $y\in\Y$.  We also will write $\inf d = \inf_{y\in\Y} d(y)$ and $\sup d = \sup_{y\in\Y} d(y)$ to denote the payout bounds of contract $d\in\reals^\Y$; note that we use infima and suprema as $\Y$ may be infinite, for example when $\Y=\reals$.

The following axiom has been a desiderata since the inception of prediction markets: the market maker should not risk losing an unbounded amount of money.
\begin{axiom}[Worst-Case Loss (WCL)] \label{ax:wcl}
  An SRM $F$, initialized at $r_0$, satisfies WCL with bound $B \geq 0$ if for all $r\in\R$, $\sup \vec F(r|r_0) \leq B$.
\end{axiom}
\noindent
As of course $\vec F(r|r_0) = \vec S(r) - \vec S(r_0)$, WCL simply says that the scoring rule $S$ is bounded relative to the initial score $S(r_0,\cdot)$.

Another classic condition for a market mechanism is that a trader should never be able to profit regardless of the outcome $Y$.  In our contract notation, traders should never be able to purchase a contract $d$ for less than its minimum payoff $\inf d$, or equivalently, the net contract provided to traders should not be unconditionally positive.
\begin{axiom}[No Arbitrage (ARB)] \label{ax:arb}
  SRM $F$ satisfies ARB if $\forall\;r,r'\in\R$, $\inf \vec F(r'|r) \leq 0$.
\end{axiom}

\noindent
The no-arbitrage condition was crucial in deriving cost-function-based markets in terms of convex conjugate duality~\cite{abernethy2013efficient}. Surprisingly, it turns out that any SRM satisfies no-arbitrage automatically if it is incentive-compatible.
\begin{proposition}
  \label{prop:arb}
  SRM $F$ satisfies ARB if it is IC for some property $\Gamma\toto\R$.
\end{proposition}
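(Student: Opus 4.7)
The plan is to argue by contradiction, using the scoring rule formulation of $F$ together with the non-redundancy requirement on $\Gamma$.

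First I would unfold the definitions: since $F$ is an SRM, $\vec F(r'|r) = \vec S(r') - \vec S(r)$, so ARB fails iff there exist $r, r' \in \R$ with $\inf_{y \in \Y}\bigl(S(r',y) - S(r,y)\bigr) > 0$. Assume this for contradiction, and let $c > 0$ denote that infimum. Then $S(r',y) \geq S(r,y) + c$ pointwise in $y$, and taking expectations under any $p \in \P$ yields
\[
\E_p S(r',Y) \;\geq\; \E_p S(r,Y) + c \;>\; \E_p S(r,Y).
\]

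Next I would invoke IC. By Definition~\ref{def:elic}, $\Gamma(p) = \argmax_{r'' \in \R} \E_p S(r'',Y)$. The strict inequality above shows that $r$ is strictly dominated by $r'$ for every $p \in \P$, so $r \notin \Gamma(p)$ for any $p$, i.e.\ $\Gamma^{-1}(r) = \emptyset$. But then $\Gamma^{-1}(r) = \emptyset \subseteq \Gamma^{-1}(r')$ trivially, contradicting the non-redundancy requirement in the definition of a property. Hence no such $(r,r')$ exists and ARB holds.

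I do not expect any real obstacle here; the only subtle point is recognizing that non-redundancy is precisely what rules out a report $r$ that is uniformly dominated, and that this is what converts the pointwise inequality into a contradiction without needing any richness assumption on $\P$ beyond non-emptiness of each $\Gamma^{-1}$. If one were worried about the equality case $c = 0$, note that ARB is stated as $\inf \vec F(r'|r) \leq 0$, so only strictly positive infima need to be ruled out, and that is exactly what the strict domination argument delivers.
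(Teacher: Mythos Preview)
Your proof is correct and follows essentially the same route as the paper's: assume for contradiction that $\inf \vec F(r'|r) > 0$, deduce pointwise and hence expected strict domination of $r$ by $r'$, and conclude $\Gamma^{-1}(r) = \emptyset \subseteq \Gamma^{-1}(r')$, contradicting non-redundancy. The only difference is that you name the infimum $c$ explicitly, which is a cosmetic elaboration of the same argument.
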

\begin{proof}
  If $0 < \inf \vec F(r'|r) = \inf [\vec S(r') - \vec S(r)]$, then for all $y\in\Y$, $S(r',y) > S(r,y)$.  Thus, for any $p\in\P$, $\E_p S(r',Y) > \E_p S(r,Y)$, so $r$ cannot be an optimal report for any $p$, contradicting non-redundancy (as $\Gamma^{-1}(r) = \emptyset \subseteq \Gamma^{-1}(r')$).
\end{proof}

\subsection{New Axioms}
\label{sec:new-axioms}

We now identify several new axioms that capture desirable characteristics of prediction markets.
The first is motivated by traditional markets, wherein a trader can always ``neutralize'' or ``liquidate'' their holdings.
For example, if a trader buys a bar of gold, apart from apocalyptic scenarios, she can always sell it at any time for a strictly positive price.
As another example, a trader holding an Arrow-Debreu contract, paying \$1 upon event $E$ and \$0 otherwise, should be able to sell it for some nonzero (perhaps very low) price.
More generally, if the contract purchased is $d\in\reals^\Y$ and is non-constant, the trader should receive strictly more than $\inf d$ in cash (in both examples above, $\inf d = 0$).
This condition is the \emph{trade neutralization (TN)} axiom, which we now give along with two variants, one stronger and one weaker.
\begin{axiom}
  \label{ax:tn}
  An SRM satisfies \emph{Trade Neutralization (TN)} if for all trades $r_1\to r_1'$, and all reports $r_2$, there is a trade $r_2\to r_2'$ such that $\vec F(r_1'|r_1) + \vec F(r_2'|r_2) = c \ones$ for some scalar $c > \inf \vec F(r_1'|r_1)$.
\end{axiom}
\begin{axiom}
  \label{ax:pn}
  An SRM satisfies \emph{Portfolio Neutralization (PN)} if for all sets of trades $r_i\to r_i'$, $1\leq i\leq m$, and all reports $r$, there is a trade $r\to r'$ such that $\vec F(r'|r) + \sum_i \vec F(r_i'|r_i) = c\ones$ for some scalar $c > \inf \left[\sum_{i=1}^m \vec F(r_i'|r_i)\right]$.
\end{axiom}
\begin{axiom}
  \label{ax:wn}
  An SRM satisfies \emph{Weak Neutralization (WN)} if for all trades $r_1\to r_1'$, and all reports $r_2$, there is a trade $r_2\to r_2'$ such that $\inf \left[ F(r_1'|r_1)+F(r_2'|r_2)\right] > \inf F(r_1'|r_1)$.
\end{axiom}
Portfolio neutralization (PN) is stronger than TN: for any \emph{set} of purchased contracts, a trader should be able to sell the entire portfolio for at least its minimum payoff, all in one go, at any time.
Finally, weak neutralization (WN) asks just that she be able to ``make progress'' by making some trade that improves her worst-case payoff.  More precisely, WN states that traders holding a non-constant contract $d$ should be able to purchase a contract $d'$ such that the minimum payoff of their portfolio strictly increases, i.e., $\inf [d+d'] > \inf d$.
Note that WN is not far off from TN in the sense that a market maker allowing traders to exchange a contract $d$ for $(\inf d)\ones$, i.e., ``cash in'' $d$ for its worst-case payout, then the mechanism would effectively satisfy TN.
Naturally, we have PN $\implies$ TN $\implies$ WN;
surprisingly, we will see that in fact TN $\iff$ PN (Theorem~\ref{thm:tn-ic-implies-cost-func}).

Note that the inequalities above are all strict.
If we replaced them by weak inequalities, then WN would be trivially satisfied by
essentially every conceivable mechanism: by just keeping your current contract $d$, you are guaranteed a payoff of at least $\inf d$, by definition.
This makes intuitive sense, as unless the contract $d$ is constant, it strictly dominates $(\inf d)\ones$, so no rational trader would ever consent to trading $d$ for $(\inf d)\ones$.
The strict inequality thus captures a reasonable possibility for traders to ``sell back'' their previously-purchased contracts, in the sense that there is some belief a trader could hold where such a trade would be beneficial.

Our final axiom says that traders with (arbitrarily) small budgets can still expect to make money by participating in the mechanism.
\begin{axiom} \label{ax:sf}
  A market $F$ satisfies \emph{bounded trader budget (BTB)} if, for all market states $r$ and beliefs $p$ with $\Gamma(p) \neq r$, and for all $\epsilon > 0$, there exists a contract $F(r'|r)$ with $\inf F(r'|r) > - \epsilon$ and $\E_p F(r',Y|r) > 0$. 
\end{axiom}
One can equivalently interpret BTB as stating that, as the overall market scales larger relative to the budget (or maximum allowable loss) $\epsilon$ of individual traders, they still have incentives to participate.

\section{Central Examples}
\label{sec:first-examples}

With our axioms in hand, we now turn to specific properties: the mode, median, quantiles, and expectations.  For each, we wish to understand which axioms the corresponding SRMs can satisfy.  As we will see, each of these properties has a unique signature with respect to our axioms.

\subsection{Mode and Finite Property Markets}
\label{sec:mode}
Perhaps the simplest example of any SRM is the canonical mode market for a distribution on $k$ outcomes $\Y = \{1,\ldots,k\}$, with report space $\R=\Y$ and $S(r,y) = \ones\{y=r\}$ (``\$1 iff you guess correctly'').
Here we find that even the weakest version of neutralization, WN, is violated: a trade moving $r$ from $1$ to $2$ yields payoff $\ones\{y=2\}-\ones\{y=1\}$, meaning this contract will lose the owner $1$ if $Y=1$.
But if the current market state is, say, $3$, then no report the agent can make will avoid this loss of $1$ when $Y=1$.
For any report $r'$, the final payoff is $\ones\{y=2\}-\ones\{y=1\}-\ones\{y=3\}+\ones\{y=r'\}$.  See Figure \ref{fig:mode-position}.

\begin{figure}[t]
\begin{center}
  \includegraphics{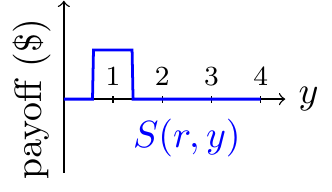}
  %
  \qquad
  \includegraphics{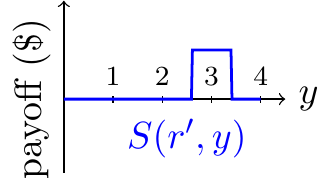}
  %
  \qquad
  \includegraphics{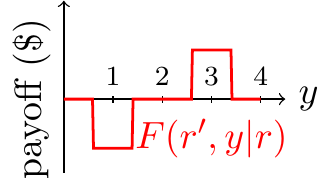}
  \end{center}
  \caption{\textbf{Visualizing trader ``position'' in a mode market.}
    Each curve gives the payoff of a given scoring rule or contract as a function of the outcome $y$.
    Here $S(r,y) = 1 \iff r=y$.
    Left: A trader who has reported $r=1$ stands to gain $1$ if $Y=1$ and gain $0$ otherwise.
    Center: Similarly for $r=3$.
    Right: A trader who moves the market from $r$ to $r'$ gets the function $F(r',\cdot|r) = S(r',\cdot) - S(r, \cdot)$.
    She stands to gain $1$ if $Y=r'$, lose $1$ if $Y=r$, and gain $0$ otherwise.
    } 
    \label{fig:mode-position}
\end{figure}
Intuitively, the lack of WN has negative implications for agents, who must take on risk that cannot be mitigated later.
It also violates BTB: The potential agent loss from any trade is a constant $1$, so an agent with a budget smaller than $1$ will not be able to participate.
This causes significant problems in practice for market designers as well, because the only possible solution, scaling down the scoring rule, is also unpleasant: many agents will be able to participate without much risk, so the market prediction will flip from outcome to outcome without necessarily aggregating information.

We emphasize that these characteristics are inherent to the mode, in the sense that they hold for any other scoring rule eliciting it.
In fact, the same conclusions hold for markets eliciting any \emph{finite property} $\Gamma:\P\to\{1,\ldots,k\}$, i.e. a property with a finite set of possible values.\footnote{See~\citet{lambert2009eliciting} for motivation and examples for the finite case.}

\begin{theorem} \label{thm:finite-markets}
  Any market for a finite property satisfies WCL, but not TN or BTB.
\end{theorem}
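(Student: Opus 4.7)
The plan is to address each of the three claims separately; they exploit the finiteness of $\R$ in different ways.

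For WCL, since $|\R| < \infty$, we have $\sup_{r} \sup_y F(r,y|r_0) = \max_{r \in \R} \sup_y [S(r,y) - S(r_0,y)]$, a maximum over finitely many values. Assuming each $\vec S(r) \in \reals^\Y$ is bounded above (which is standard, as otherwise the expectations in IC may fail to be well-defined), this maximum is finite, so WCL holds with $B$ equal to it.

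For BTB, the idea is that only finitely many contracts $\vec F(r'|r)$ are available, and each nontrivial one has worst-case payoff bounded strictly below zero. Specifically, for each $r' \neq r$, non-redundancy yields some $p$ with $r \in \Gamma(p)$ and $r' \notin \Gamma(p)$; by IC this forces $\E_p S(r,Y) > \E_p S(r',Y)$, hence $\inf \vec F(r'|r) < 0$. Since $\R \setminus \{r\}$ is finite, $\epsilon_0 := \min_{r' \neq r} (-\inf \vec F(r'|r)) > 0$. For any $\epsilon < \epsilon_0$, no contract $\vec F(r'|r)$ with $r' \neq r$ satisfies $\inf \vec F(r'|r) > -\epsilon$, while the trivial choice $r' = r$ has zero expected payoff under every $p$. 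This directly contradicts BTB for any $p$ with $\Gamma(p) \neq r$.

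For TN, I would use a reflection argument in the quotient space $\reals^\Y / (\reals \ones)$. Define $\Phi : \R \to \reals^\Y / (\reals \ones)$ by $\Phi(r) = [\vec S(r)]$; non-redundancy and IC together show $\Phi$ is injective (otherwise $\vec S(r) - \vec S(r')$ is a constant for some $r \neq r'$, so the argmax defining $\Gamma$ either ties between $r$ and $r'$ on every $p$ or excludes one of them on every $p$, each violating non-redundancy). In particular, $|\Phi(\R)| = k$. Now specializing TN to $r_2 = r_1$ gives, for each $r_1 \neq r_1'$, some $r_2' \in \R$ with $[\vec S(r_1') - \vec S(r_1)] + [\vec S(r_2') - \vec S(r_1)] = c\ones$, which modulo constants reads $\Phi(r_2') = 2\Phi(r_1) - \Phi(r_1')$. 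Hence $\Phi(\R)$ is closed under reflecting any one of its elements through any other. Starting from distinct $a, b \in \Phi(\R)$ (which exist since $k \geq 2$ and $\Phi$ is injective) and iterating this reflection---feeding the two most recent points back into TN at each step---builds the infinite arithmetic progression $a, b, 2a-b, 3a-2b, 4a-3b, \ldots$ inside $\Phi(\R)$, contradicting $|\Phi(\R)| = k$.

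The hard part is executing the TN reflection iteration cleanly: one must verify at each step that the two most recent generated points are distinct (so they correspond to distinct reports in $\R$ and TN can be applied to a nontrivial trade), and observe that the strict inequality $c > \inf \vec F(r_1'|r_1)$ in TN---while essential to the axiom's intended meaning---plays no role in the algebraic identity $\Phi(r_2') = 2\Phi(r_1) - \Phi(r_1')$ that drives the contradiction.
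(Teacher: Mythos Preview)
Your proof is correct. For WCL and BTB you follow the same line as the paper, though you supply details the paper leaves implicit (in particular, deriving $\inf \vec F(r'|r) < 0$ for each $r' \neq r$ from non-redundancy and IC, rather than just asserting that the finitely many nonzero contracts have uniformly negative worst case).

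For the failure of TN the two arguments diverge in packaging. The paper appeals to Lemma~\ref{lem:d-group}, which shows that TN forces the projected contract set $\Dc$ to be an additive subgroup of $\reals^\Y$; any nontrivial such subgroup is infinite, contradicting $|\R| < \infty$. Your route is more direct and self-contained: specializing TN to $r_2 = r_1$ shows $\Phi(\R)$ is closed under reflection through any of its points, and iterating this reflection yields an explicit infinite arithmetic progression inside $\Phi(\R)$. This extracts exactly the fragment of Lemma~\ref{lem:d-group} needed here (closure under $h_1, h_2 \mapsto 2h_1 - h_2$) without proving the full group structure. The paper's approach has the virtue of reusing a lemma it needs anyway for the main characterization (Theorem~\ref{thm:tn-ic-implies-cost-func}); yours is the more elementary standalone argument. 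One small cosmetic point: your recursion ``feed the two most recent points'' is slightly misdescribed for the first step (the step $a,b \mapsto 2a-b$ reflects $b$ through $a$, whereas from $t_2$ onward you genuinely use $t_{n+1} = 2t_n - t_{n-1}$), but the generated points $na - (n-1)b$ are correct and distinct, so the contradiction stands.
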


\subsection{Mean and expectation markets} \label{sec:expectation}

Consider now the mean of a random variable over the reals, $\R = \Y = \reals$.\footnote{In this setting, we naturally take $\P$ to be the set of beliefs with well-defined expectations.}
As an illustrative example, consider the scoring rule $S(r,y) = 2ry-r^2$, which elicits the mean.
The corresponding SRM takes the form $F(r',y|r) = r^2 - (r')^2 + 2y(r'-r)$. Perhaps the first observation one makes is that, because $y\in\reals$, any nontrivial trade leaves the trader exposed to unbounded potential loss, as well as unbounded potential gain.
This implies that BTB and WCL cannot hold.

What about trade neutralization?
Consider a trader holding the contract $\vec{F}(r_1',y|r_1) = \alpha_1 + \alpha_2 y$ for constants $\alpha_1 = (r_1)^2 - (r_1')^2$ and $\alpha_2 = 2(r_1'-r_1)$.
She would like to neutralize this position, so she must purchase some other contract of the form $\alpha_3 - \alpha_2y$, so that her net position will be the constant $\alpha_1 + \alpha_2$.
If the current market state is $r_2$, our hero can neutralize her previous trade by choosing $r_2'=(r_1-r_1')$, so that $2y(r_2' - r_2) = 2y(r_1' - r_1) = 0$.
Her worst-case liability decreases from $-\infty$ to a constant, showing that both WN and TN are satisfied.
And in fact, even if she holds a set of contracts, her net position is simply the sum and can still be written in the form $\alpha_1 + \alpha_2 y$ and the same argument goes through; this shows that PN is also satisfied.

\begin{figure}
\centering
\includegraphics{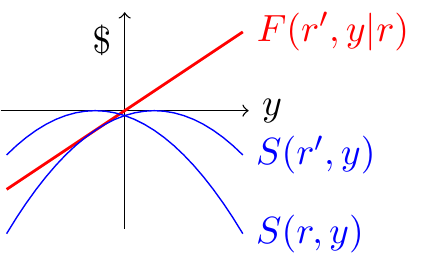}
%
\qquad
\includegraphics{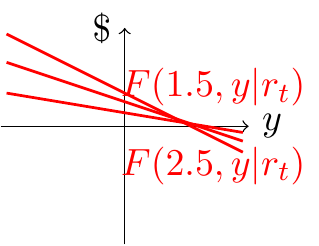}
%
\qquad
\includegraphics{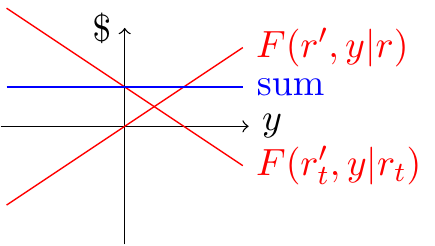}
%
  \caption{\textbf{Trader position in a mean market.}
    $S(r,y) = -(r-y)^2$.
    Left: Moving the market from $r=-1$ to $r'=1$ gives a position $F(r',\cdot|r)$ that pays off linearly in $y$.
    Center: At $r_t$, the trader chooses from a set of possible contracts, depending on if she reports $1.5$, $2.5$, etc.
    Right: choosing the contract $r_t'$ that neutralizes the previous trade $r \to r'$.
    }  
  \label{fig:mean-position}
\end{figure}

This example raises the questions of whether other markets eliciting the mean (if any) would have similar properties, and how this might depend on the random variable in question.

Motivated by this example, we now consider a much more general setting.  Let $\phi:\Y\to\reals^k$ be a given ``random variable'' and let $\Gamma(p) = \E_p[\phi(Y)]$ be its expected value.
Such a $\Gamma$ is called a \emph{linear property}.
We call SRM $S$ an \emph{expectation market} if $S$ is IC for such a $\Gamma$.
For ease of exposition (i.e.~to avoid relative interiors) we will assume that the range $\R=\Gamma(\P)$ is full-dimensional in $\reals^k$. 

Capitalizing on the nice characterization of scoring rules for expectations due to \cite{abernethy2012characterization,frongillo2015vector-valued} (Theorem~\ref{thm:mean-sr-char}), we know that any scoring rule $S$ eliciting $\Gamma$ takes the form
\begin{equation}
  \label{eq:sr-expectation}
  S(r,y) = G(r)+dG_r\cdot(\phi(y)-r) + f(y)
\end{equation}
for $G$ strictly convex with subgradients $\{dG_r\}_{r\in\R}$ and $f$ an arbitrary $\P$-integrable function.  Note however that the $f$ term will vanish in the definition of $F(r',y|r)$, so without loss of generality we may take $f(y)=0$ for all $y$.  We thus refer to expectation markets as being \emph{defined by $G$} if they satisfy Equation \ref{eq:sr-expectation} for that $G$ (for any $f$).
Letting $\conv(\phi(\Y))$ be the convex hull of the set of outcomes of the random variable $\phi$, we have the following characterization of worst-case loss for expectation markets:

\begin{proposition} \label{prop:msr-expectation-axioms}
  The linear property $\Gamma(p) = \E_p \phi(Y)$ has an expectation market satisfying WCL if and only if its domain $\conv(\phi(\Y))$ is bounded, in which case the market defined by any bounded $G$ satisfies WCL.
\end{proposition}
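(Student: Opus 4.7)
The plan is to prove the two directions separately using the characterization $S(r,y) = G(r) + dG_r\cdot(\phi(y)-r)$ from Equation \eqref{eq:sr-expectation}, which gives
\[ F(r,y\mid r_0) \;=\; G(r) - G(r_0) + (dG_r - dG_{r_0})\cdot \phi(y) - dG_r\cdot r + dG_{r_0}\cdot r_0, \]
so WCL is equivalent to bounding $\sup_{r,y} F(r,y\mid r_0)$ from above.

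For the forward direction, assume $K := \conv(\phi(\Y))$ is bounded and let $G$ be any bounded convex function on a set containing $K$. The subgradient inequality applied to $S(r,y) = G(r) + dG_r\cdot(\phi(y)-r)$ gives $S(r,y) \le G(\phi(y))$, which is at most $\sup_K G < \infty$, uniformly in $r$. Meanwhile $S(r_0, y) = G(r_0) + dG_{r_0}\cdot(\phi(y) - r_0)$ is bounded below in $y$, since $\phi(y) \in K$ is bounded and $G(r_0), dG_{r_0}, r_0$ are fixed. Subtracting yields the uniform upper bound on $F(r,y\mid r_0)$ required by WCL.

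For the backward direction, suppose an expectation market satisfies WCL with bound $B$. For each $r \in \R$, the displayed identity gives $(dG_r - dG_{r_0})\cdot \phi(y) \le A(r)$ for some finite $A(r)$, uniformly in $y$. Hence the support function $\sigma_K(v) := \sup_{x\in K} v\cdot x$ is finite at $v = dG_r - dG_{r_0}$, which is standardly equivalent to $(dG_r - dG_{r_0})\cdot d \le 0$ for every $d$ in the recession cone $K_\infty$ of $K$. Suppose for contradiction that $K$ is unbounded, so $K_\infty$ contains some $d \neq 0$. Since $r_0 \in \interior(\R)$ (a full-dimensional set), the point $r := r_0 + td$ lies in $\R$ for all sufficiently small $t > 0$. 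Adding the subgradient inequality $G(r) \ge G(r_0) + dG_{r_0}\cdot(r-r_0)$ to its strict counterpart $G(r_0) > G(r) + dG_r\cdot(r_0 - r)$ (strict by strict convexity of $G$) yields $(dG_r - dG_{r_0})\cdot(r - r_0) > 0$, i.e., $(dG_r - dG_{r_0})\cdot d > 0$, contradicting the support-function bound.

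The main obstacle, and the only nontrivial step, is executing the strict-convexity argument without assuming $G$ is differentiable; adding the pair of subgradient inequalities as above handles this cleanly and uses strict convexity in exactly the minimal way. A secondary subtlety is that $d$ is merely a recession direction of $K$ and need not belong to $\R$; the argument only uses $r_0 + td \in \R$ for small $t$, which follows from $r_0$ being interior.
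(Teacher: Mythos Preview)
Your proof is correct and follows essentially the same approach as the paper. The forward direction matches the paper's Lemma~\ref{lem:msr-g-bound-wcl}: both bound $S(r,y)$ above by $G(\phi(y))$ via the subgradient inequality and bound $S(r_0,y)$ below using boundedness of $\phi(\Y)$. For the unbounded case, the paper (Lemma~\ref{lem:msr-domain-wcl-btb}) takes a sequence $\phi(y_i)$ going to infinity, extracts a limit direction $v$ by compactness of the sphere, and shows the loss at final state $r_0+v$ grows linearly in $\|\phi(y_i)\|$ thanks to $(dG_{r_0+v}-dG_{r_0})\cdot v > 0$; you encode the same idea more abstractly via the recession cone, taking $r = r_0 + td$ and reaching the identical strict-monotonicity contradiction. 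Both arguments rely on the implicit assumption that $r_0$ lies in the interior of $\R$, which the paper also states explicitly in its lemma.

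One small imprecision: your claim that $\sigma_K(v)<\infty$ is ``standardly equivalent'' to $v\cdot d\le 0$ for all $d\in K_\infty$ is not an equivalence in general (the barrier cone can be strictly smaller than the polar of the recession cone, e.g.\ $K=\{(x,y):y\ge e^x\}$ with $v=(1,0)$). You only use the implication $\sigma_K(v)<\infty \Rightarrow v\cdot d\le 0$, which is valid, so the argument stands.
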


\begin{proposition} \label{prop:msr-expectation-btb}
  The linear property $\Gamma(p) = \E_p \phi(Y)$ has an expectation market satisfying BTB if its domain $\conv(\phi(\Y))$ is bounded, in which case the market defined by any differentiable $G$ satisfies BTB.
\end{proposition}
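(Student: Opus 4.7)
The plan is: given a market state $r \in \R$, belief $p$ with $x := \Gamma(p) \neq r$, and $\epsilon > 0$, exhibit a small trade $r \to r_t$ along the direction from $r$ to $x$ whose expected payoff is strictly positive but whose worst-case loss is within $\epsilon$ of zero. By Theorem~\ref{thm:mean-sr-char} (Equation~\ref{eq:sr-expectation}) I may take $S(r,y) = G(r) + \nabla G(r)\cdot(\phi(y)-r)$, so that
\[ \E_p F(r',Y|r) \;=\; \tilde{S}(r',x) - \tilde{S}(r,x), \qquad \tilde{S}(r,x) := G(r) + \nabla G(r)\cdot(x-r). \]

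The candidate trade is $r_t := r + t(x-r)$ for small $t \in (0,1)$, which lies in $\R = \conv(\phi(\Y))$ by convexity of $\R$. The first main step is to verify $\E_p F(r_t,Y|r) > 0$. Using $x - r_t = (1-t)(x-r)$ together with the subgradient inequality $G(r_t) \geq G(r) + t\nabla G(r)\cdot(x-r)$, one obtains
\[ \tilde{S}(r_t,x) - \tilde{S}(r,x) \;\geq\; (1-t)\bigl[\nabla G(r_t) - \nabla G(r)\bigr]\cdot(x-r). \]
Since $G$ is strictly convex and differentiable, $\nabla G$ is strictly monotone, so $[\nabla G(r_t) - \nabla G(r)]\cdot t(x-r) > 0$; dividing by $t>0$ and multiplying by $(1-t)>0$ yields strict positivity.

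The second main step is to bound the worst-case loss uniformly in $y$. Expanding
\[ F(r_t,y|r) = G(r_t) - G(r) - \nabla G(r_t)\cdot r_t + \nabla G(r)\cdot r + \bigl[\nabla G(r_t) - \nabla G(r)\bigr]\cdot\phi(y), \]
continuity of $G$ and continuity of $\nabla G$ (automatic for a differentiable convex function on an open set) imply that each of the four non-$y$-dependent terms converges as $t \to 0^+$, and boundedness of $\phi(\Y)$ makes the last term vanish uniformly in $y$. Hence $\sup_y \lvert F(r_t,y|r)\rvert \to 0$, so I can pick $t > 0$ small enough that $\inf F(r_t,\cdot|r) > -\epsilon$ while still $\E_p F(r_t,Y|r) > 0$, establishing BTB.

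The main obstacle, and the reason for routing through $r_t = r + t(x-r)$ rather than directly setting $r'=x$ (which already gives positive expected payoff via $D_G(x,r) > 0$ but may have arbitrarily large worst-case loss), is obtaining strict positivity of the expected gain for arbitrarily small trades without assuming $G \in C^2$ or a positive-definite Hessian. Strict monotonicity of $\nabla G$ is precisely the right tool: any positive displacement in the direction $x-r$ yields a strictly positive inner product with $\nabla G(r_t) - \nabla G(r)$, and this survives the limit $t\to 0^+$ relatively, even if $\nabla^2 G(r)$ is degenerate or fails to exist. A minor technicality, namely that $G$ be differentiable at each $r_t$, is automatic when $r$ lies in the interior of $\R$ and can be finessed otherwise by mild regularity.
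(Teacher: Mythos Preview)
Your proof is correct and follows essentially the same approach as the paper: both move the report along the segment $r_t = r + t(x-r)$ toward the belief's mean, invoke strict monotonicity of $\nabla G$ (equivalently, strict convexity) to guarantee positive expected payoff, and use continuity of $G$ and $\nabla G$ together with boundedness of $\phi(\Y)$ to drive the worst-case loss below $\epsilon$. Your argument is in fact more explicit than the paper's, which simply asserts ``by monotonicity of strictly convex functions'' for the positivity step; your derivation of the lower bound $(1-t)[\nabla G(r_t)-\nabla G(r)]\cdot(x-r)$ via the subgradient inequality makes that step transparent.
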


To illustrate, let us revisit the example at the beginning of this subsection, i.e. $\phi(y) = y$ and $S(r,x) = 2ry - r^2$.
(This corresponds to the convex function $G(r) = r^2$, up to a shift.)
We saw that this market satisfies neither BTB nor WCL for $\Y = \reals$, the entire real line.
However, for $\Y = (0,1)$, it satisfies both BTB and WCL, as $\conv(\phi(\Y)) = (0,1)$ and $G$ is bounded and differentiable on this domain.
Finally, recall that it also (intuitively) satisfied TN, and furthermore, PN.
This property can be shown to generalize, in particular, prior work has shown that linear properties have nice \emph{cost function} based markets.
Such markets take the form $F(r',y\mid r) = C(r) - C(r') + (r-r') \cdot \phi(y)$, for some convex function $C$ and $\R = \reals^k$.
It then follows relatively directly that such markets satisfy PN, as any position (set of contracts held by a trader) can be written in the form $d(y) = \alpha_1 + \alpha_2 \cdot \phi(y)$.
This can be interpreted as a fixed payment of $\alpha$ and $\alpha_{2,i}$ ``shares'' in a ``security'' $\phi(Y)_i$.
To neutralize, when the current market state is $r'$, it turns out to suffice to select the $r$ satisfying $r-r' = -\alpha_2$.
\begin{theorem}  \label{thm:linear-finiteY-pn}
  On a finite outcome space, i.e. $|\Y| < \infty$, for any linear property there exists an expectation market satisfying PN, WCL, and BTB.
\end{theorem}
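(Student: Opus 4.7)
The plan is to construct a single cost-function-style market and verify all three axioms directly. By the standard Legendre duality between the Savage form of Equation~\ref{eq:sr-expectation} and the cost-function form~\citep{abernethy2013efficient}, any such market can be reparameterized as $\vec F(r'\mid r)(y) = (r'-r)\cdot\phi(y) - (C(r')-C(r))$ with $r,r' \in \reals^k$ in the ``share space'' and $C:\reals^k\to\reals$ convex, and PN, WCL, BTB all transfer across this reparameterization. I would pick the LMSR-style log-partition function
\[
C(r) \;=\; b\log\sum_{y\in\Y} e^{r\cdot\phi(y)/b}, \qquad b>0,
\]
so that $\nabla C(r) = \sum_y q_y(r)\,\phi(y)$ with strictly positive weights $q_y(r)\propto e^{r\cdot\phi(y)/b}$, placing $\nabla C(r) \in \relint\,\conv(\phi(\Y))$ for every $r$. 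The associated Savage potential $G = C^*$ is (up to scaling) a negative entropy, hence bounded and differentiable on the bounded polytope $\conv(\phi(\Y))$; so WCL and BTB follow immediately from Propositions~\ref{prop:msr-expectation-axioms} and~\ref{prop:msr-expectation-btb}.

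For PN, I would first put an arbitrary portfolio in canonical form: any $\sum_i \vec F(r_i'\mid r_i) = A\ones + B\cdot\phi$, where $B = \sum_i(r_i'-r_i)\in\reals^k$ is the aggregate share vector and $A = \sum_i[C(r_i) - C(r_i')]$ is a scalar. From any current market state $r$, the natural neutralizing move is $r \to r-B$, which yields $\vec F(r-B\mid r) = -B\cdot\phi + (C(r)-C(r-B))\ones$ and hence a combined portfolio equal to the constant contract $c\ones$ with $c = A + C(r) - C(r-B)$. The axiom then reduces to the strict inequality
\[
C(r) - C(r-B) \;>\; \min_{y\in\Y} B\cdot\phi(y) \qquad (B \neq 0),
\]
the $B=0$ case being vacuous since the portfolio is already constant.

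The main obstacle is this strict inequality, and it is precisely where the log-partition choice of $C$ matters. Using the full-dimensionality assumption on $\conv(\phi(\Y))$ from \S\ref{sec:expectation}, when $B \neq 0$ the linear functional $x\mapsto B\cdot x$ is non-constant on the polytope and attains its minimum only on a proper face, so $B\cdot x > \min_y B\cdot\phi(y)$ strictly for every $x\in\relint\,\conv(\phi(\Y))$. Writing
\[
C(r) - C(r-B) \;=\; \int_0^1 B\cdot\nabla C(r-(1-s)B)\,ds
\]
and using that $\nabla C$ stays in the relative interior along the entire segment by our choice of $C$, each integrand value exceeds $\min_y B\cdot\phi(y)$ strictly, giving the required inequality after integration. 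A generic choice such as a quadratic $C$ would yield the same neutralization algebra but could fail strictness when $\nabla C$ reaches the boundary of the polytope, which is exactly why an ``interior-keeping'' choice like LMSR is the right tool.
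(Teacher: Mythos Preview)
Your proposal is correct and mirrors the paper's own proof: both pick the exponential-family/LMSR log-partition cost function, invoke Propositions~\ref{prop:msr-expectation-axioms} and~\ref{prop:msr-expectation-btb} for WCL and BTB via boundedness and differentiability of the negative-entropy conjugate, and establish PN by the same ``gradients stay in the interior, integrate along the segment'' argument. The only cosmetic difference is that the paper packages the PN step as a citation to Theorem~\ref{thm:cost-pn} (openness $\Rightarrow$ PN) and Lemma~\ref{lem:cost-function-prices}, whereas you inline that computation directly.
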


\subsection{Median and Quantile Markets}
\label{sec:median}
We previously gave an example of a ``mean market'' given by $S(r,y) = 2ry-r^2$, the scoring rule analog of squared loss $L(r,y)=(r-y)^2$.
Perhaps the most natural statistic to investigate next is the median, elicited by the analog of ``absolute loss'', $S(r,y)=-|r-y|$.
What we find is surprising: unlike squared loss, the absolute loss market does not satisfy PN, and in fact does not even satisfy WN.
We show a general version of this result: no market for the median, or indeed any quantile, can satisfy WN.
On the other hand, while the mean market could not satisfy WCL except on bounded domains, there are median and quantile markets that can.

Our setting in this section is as follows.  Letting $\R=\outcomes=\reals$ and $\alpha \in (0,1)$, the \emph{$\alpha$-quantile} of probability distribution $p$ with continuous CDF\footnote{This assumption is dropped in Appendix \ref{app:quantile}.} is the $q_{\alpha}$ satisfying $\Pr_p[Y\leq q_\alpha(p)] = \alpha$.
Of course, the median is simply $q_\alpha$ for $\alpha=1/2$.

\begin{figure}[t]
\centering
\includegraphics{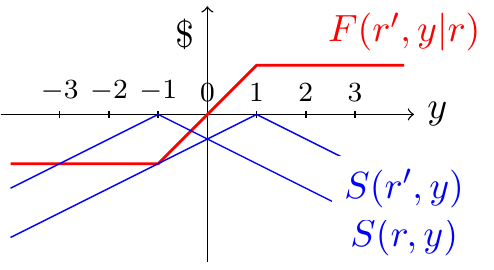}
%
%
\hspace*{48pt}
\includegraphics{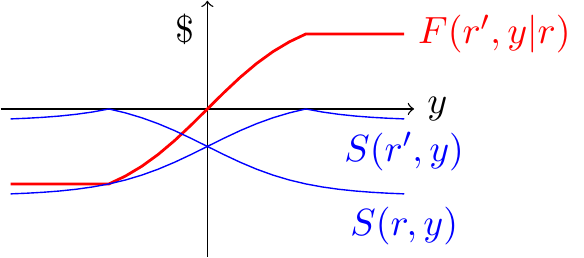}
%
%
\\[20pt]
\includegraphics{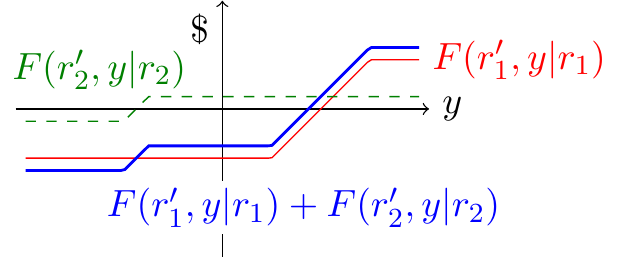}
%
\hspace*{20pt}
\includegraphics{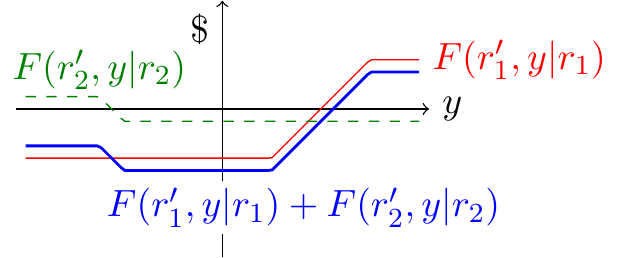}
%

\caption{\textbf{Trader position in a median market.}
  Here $S(r,y) = -|r-y|$.
  Top left: The payoffs for $S(r,\cdot)$ and $S(r', \cdot)$, and the contract that moves the market $r\to r'$, namely $F(r',\cdot|r) = S(r',\cdot) - S(r,\cdot)$.
  Top right: The same example but with $S(r,y) = -|g(r)-g(y)|$ with $g$ the sigmoid function.
  Bottom: Two examples where a trader with position ``red'' $F(r_1',y|r_1)$ considers a potential contract ``green'' $F(r_2',y|r_2)$.
  If purchased, the net position will be the blue curve.
  It sometimes falls below the original position, meaning the trader's worst case has gotten worse.} 
  \label{fig:median-position}
\end{figure}

We first show that quantile markets do not satisfy WN: there may not be trades which improve the trader liability at all.
Figure~\ref{fig:median-position} gives an example with absolute loss.
\begin{theorem} \label{thm:quantile-tn}
  No SRM for any $\alpha$-quantile satisfies WN.
\end{theorem}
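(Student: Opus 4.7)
My plan is to exploit the classical characterization of quantile scoring rules (recalled in Appendix~\ref{app:quantile}): any $S$ eliciting $q_\alpha$ on continuous-CDF distributions has the form
\[
  S(r,y) = \bigl(\alpha - \ones\{y \leq r\}\bigr)\bigl(g(r) - g(y)\bigr) + h(y),
\]
for some nondecreasing $g$ and $\P$-integrable $h$, with $g$ strictly increasing (else two reports would score identically on some continuous $p$, violating non-redundancy). The key consequence is that $\vec F(r'\mid r) = \vec S(r') - \vec S(r)$ is \emph{constant in the tails}: for $y < \min(r,r')$ both indicators equal $1$ and the $g(y)$ and $h(y)$ terms cancel, giving $F(r',y\mid r) = (\alpha - 1)(g(r')-g(r))$; symmetrically, for $y > \max(r,r')$ we get $F(r',y\mid r) = \alpha(g(r')-g(r))$. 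These two tail values have opposite signs whenever $g(r') \ne g(r)$.

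To exhibit failure of WN, I would fix $r_1 < r_1'$ with $g(r_1) < g(r_1')$ and set $L := (\alpha-1)(g(r_1')-g(r_1)) < 0$. A quick check on the middle region $y \in (r_1, r_1']$ (where $F(r_1',\cdot\mid r_1)$ is nondecreasing in $g(y)$, interpolating from $L$ to $\alpha(g(r_1')-g(r_1))$) shows $\inf \vec F(r_1'\mid r_1) = L$, attained at every $y \leq r_1$. Now pick a market state $r_2 < r_1$ with $g(r_2) < g(r_1)$; such $r_2$ exists by strict monotonicity of $g$. I claim no follow-up trade $r_2 \to r_2'$ can strictly improve this infimum.

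The argument is a short case split on $r_2'$ versus $r_2$. (i) If $r_2' > r_2$, then for any $y < r_2$ we have $F(r_1',y\mid r_1) = L$ and $F(r_2',y\mid r_2) = (\alpha-1)(g(r_2')-g(r_2)) \leq 0$, so the sum is at most $L$. (ii) If $r_2' < r_2$, then for any $y \in (r_2, r_1)$ (nonempty since $\Y = \reals$), $F(r_1',y \mid r_1) = L$, and since $y > \max(r_2,r_2') = r_2$ we get $F(r_2',y\mid r_2) = \alpha(g(r_2')-g(r_2)) \leq 0$, again summing to at most $L$. (iii) $r_2' = r_2$ adds zero. In every case $\inf\bigl[\vec F(r_1'\mid r_1) + \vec F(r_2'\mid r_2)\bigr] \leq L$, contradicting the strict inequality required by WN.

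The main obstacle is really just justifying that $g$ is strictly monotone so that $r_1, r_1', r_2$ with $g(r_2) < g(r_1) < g(r_1')$ can be chosen. This is standard: if $g$ were constant on an interval containing multiple candidate reports, those reports would produce identical expected scores on any continuous distribution whose $\alpha$-quantile lies in the interval, contradicting non-redundancy (and thus contradicting that $S$ elicits $q_\alpha$ at all). Once strict monotonicity is in hand, the tail-cancellation structure of $F$ is a blunt instrument: each candidate follow-up trade leaves a ``dead zone'' (either $y < r_2$ or $y \in (r_2, r_1)$) where the added contract is non-positive and the original is at its infimum, foreclosing any strict improvement.
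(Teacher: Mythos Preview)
Your proof is correct and follows essentially the same approach as the paper's: both place the current market state $r_2$ strictly below the original trade's starting point $r_1$, then do a two-case split on whether the follow-up trade $r_2 \to r_2'$ moves up or down, in each case exhibiting an outcome $y$ (either $y < r_2$ or $y \in (r_2, r_1)$) where the original contract is already at its infimum $L$ and the added contract is non-positive. Your explicit identification of the two constant tail values of $\vec F(r'\mid r)$ plays the role of the paper's Lemma~\ref{lem:msr-quantile-min}, and your added justification that $g$ must be strictly increasing is a nice touch the paper leaves implicit.
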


Despite this negative result, quantile markets satisfy a surprisingly strong positive property.
Recall that the squared loss market with $S(r,y) = 2ry - r^2$, which elicits the mean of $Y$, could not hope to satisfy bounded worst-case loss if $\outcomes = \reals$.
And indeed, the absolute loss market $S(r,y) = -|r-y|$ shares this issue.
There is an elegant work-around, however: use the sigmoid function $g(r) = e^r/(1+e^r)$, or another strictly monotone transformation, to map reports continuously into the interval $(0,1)$.
Then $S(r,y) = -|g(r) - g(y)|$ is still proper, as strictly monotone functions commute with the median: for any $y$, all $y' \leq y$ are mapped below $y$ and all $y' \geq y$ are mapped above $y$, so the quantiles are simply mapped as well.
\begin{theorem} \label{thm:quantile-wcl}
  For all $\alpha\in(0,1)$, there is an SRM for the $\alpha$-quantile satisfying WCL.
\end{theorem}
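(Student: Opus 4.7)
My plan is to give a completely explicit construction that follows the sketch immediately preceding the theorem: compose a standard proper scoring rule for the $\alpha$-quantile with a strictly increasing bijection $g\colon\reals\to(0,1)$ of bounded range. Concretely, I would take the sigmoid $g(r)=e^r/(1+e^r)$ and define
\[
S(r,y) \;=\; -\alpha\bigl(g(y)-g(r)\bigr)^+ - (1-\alpha)\bigl(g(r)-g(y)\bigr)^+,
\]
which is the pinball loss evaluated in transformed coordinates. For $\alpha=1/2$ this specializes to $-\tfrac12|g(r)-g(y)|$, matching the motivating discussion. The associated SRM is then $F(r,y|r_0)=S(r,y)-S(r_0,y)$.

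To verify that $S$ elicits the $\alpha$-quantile, I would introduce the auxiliary scoring rule $\tilde S(s,z) = -\alpha(z-s)^+ - (1-\alpha)(s-z)^+$ on $(0,1)\times(0,1)$, the textbook pinball rule, and note that $S(r,y)=\tilde S(g(r),g(y))$. Since $g$ is a continuous strictly increasing bijection, it commutes with quantiles: for any $p\in\P$ with continuous CDF, the $\alpha$-quantile of $g(Y)$ under $Y\sim p$ equals $g(q_\alpha(p))$. The change of variables $s=g(r)$ then turns $\argmax_{r\in\reals}\E_p S(r,Y)$ into $g^{-1}\bigl(\argmax_{s\in(0,1)}\E_p \tilde S(s,g(Y))\bigr) = g^{-1}(g(q_\alpha(p))) = q_\alpha(p)$, as required by Definition~\ref{def:elic}. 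The general case with atoms (flagged by the footnote as deferred to Appendix~\ref{app:quantile}) follows by applying the same set-valued argument to $\tilde S$ and pulling back through $g$.

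Finally, WCL is immediate from boundedness of $g$. Since $g(r),g(y)\in(0,1)$ we have $|g(r)-g(y)|<1$, so $|S(r,y)|\leq \max(\alpha,1-\alpha)\leq 1$. Hence for any $r,r_0\in\reals$,
\[
\sup_{y\in\Y}\bigl[S(r,y)-S(r_0,y)\bigr] \;\leq\; 2,
\]
so the SRM satisfies Axiom~\ref{ax:wcl} with $B=2$. There is no real obstacle in this proof; the only minor subtlety is ensuring the pinball argmax on $(0,1)$ lies in the open interior so that $g^{-1}$ applies, which holds because $g(Y)\in(0,1)$ almost surely and $\alpha\in(0,1)$. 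Any strictly increasing bijection from $\reals$ onto a bounded interval would work equally well in place of the sigmoid.
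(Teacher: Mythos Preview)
Your proposal is correct and essentially identical to the paper's proof: both construct the scoring rule $S(r,y)=(\alpha-\ones\{r\ge y\})(g(r)-g(y))$ (your pinball-loss expression is just a rewriting of this) with $g$ the sigmoid, then observe that $|S|$ is uniformly bounded so $S(r,y)-S(r_0,y)$ is bounded, giving WCL. The only cosmetic difference is that the paper appeals to the Schervish et al.\ characterization to certify that this $S$ elicits $q_\alpha$, whereas you spell out the ``monotone bijections commute with quantiles'' argument directly; both are the same underlying fact.
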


Finally, quantiles also behave nicely with respect to bounded-budget traders.
To see this for $S(r,y) = -|r-y|$, notice that a small trade $r \to r+\epsilon$ carries a liability of only $\epsilon$, in the case $Y < r$.
(See Figure \ref{fig:median-position}.)
But any trade smaller than $\epsilon$ can carry positive expected payoff for a trader if it moves the market closer to her belief.
\begin{theorem} \label{thm:quantile-btb}
  If distributions in $\P$ do not contain point masses, then any $\alpha$-quantile market satisfies BTB.
\end{theorem}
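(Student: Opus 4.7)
The plan is to show that, given market state $r$ and belief $p$ with $q_\alpha(p) \neq r$, a tiny trade in the direction of $q_\alpha(p)$ carries worst-case loss proportional to the trade size but strictly positive expected payoff of the same order; shrinking the trade then delivers both conditions of BTB. Without loss of generality I would assume $q_\alpha(p) > r$ and take $r' = r + \delta$ for small $\delta > 0$; the case $q_\alpha(p) < r$ is symmetric, using $r - \delta$.

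First I would invoke the standard Thomson--Saerens--Gneiting characterization of proper $\alpha$-quantile scoring rules: up to an additive function of $y$ alone, which cancels in $F$, every such $S$ has the form $S(r, y) = (\alpha - \ones\{y \leq r\})(g(r) - g(y))$ for some strictly increasing $g : \R \to \reals$. A crucial preliminary step is to argue that strict propriety on an atomless class $\P$ forces $g$ to be continuous on $\R$: a jump in $g$ at some $r_0$, combined with an atomless $p \in \P$ whose quantile equals $r_0$, would yield a small move in one direction that leaves the expected score unchanged to first order, contradicting uniqueness of the argmax. I expect this continuity step to be the main technical hurdle; once in hand, the remaining argument is direct computation.

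Setting $\Delta \defeq g(r+\delta) - g(r)$ and case-splitting on whether $y \leq r$, $r < y \leq r + \delta$, or $y > r + \delta$, direct substitution gives $F(r+\delta, y | r) = -(1-\alpha)\Delta$ on the first case, $\alpha\Delta$ on the third, and $g(y) - (1-\alpha) g(r+\delta) - \alpha g(r) \in [-(1-\alpha)\Delta,\, \alpha\Delta]$ on the middle case by monotonicity of $g$. In particular $\inf_y F(r+\delta, y | r) \geq -(1-\alpha)\Delta$, which, by continuity of $g$ at $r$, can be made smaller in absolute value than any prescribed $\epsilon > 0$ by shrinking $\delta$.

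For the expected payoff, let $F_p$ denote the CDF of $p$, continuous by the no-point-mass hypothesis. The tail cases contribute $\Delta \cdot [\alpha(1 - F_p(r + \delta)) - (1 - \alpha) F_p(r)]$, which converges to $\Delta \cdot (\alpha - F_p(r))$ as $\delta \to 0^+$, while the middle case is bounded in magnitude by $\Delta \cdot (F_p(r + \delta) - F_p(r)) = o(\Delta)$. Since $q_\alpha(p) > r$ and $F_p$ is continuous, $F_p(r) < \alpha$, so the limiting coefficient $\alpha - F_p(r)$ is strictly positive. Choosing $\delta$ small enough to guarantee both $(1-\alpha)\Delta < \epsilon$ and $\E_p F(r+\delta, Y | r) \geq \frac{1}{2} \Delta \, (\alpha - F_p(r)) > 0$ then completes the proof.
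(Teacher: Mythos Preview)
Your core argument---move a tiny step from $r$ toward $q_\alpha(p)$, bound the worst case by $(1-\alpha)\Delta$ with $\Delta = g(r+\delta)-g(r)$, and show expected payoff is $\Delta(\alpha - F_p(r)) + o(\Delta) > 0$---is exactly the paper's approach. The paper phrases the positivity step slightly differently (it chooses $r' \in (r,r^*]$ so that $F_p(r') = b$ lies strictly between $F_p(r)$ and $\alpha$, obtaining the clean bound $\epsilon'(\alpha-b)$ rather than an asymptotic one), but the content is the same.

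The one substantive divergence is your attempt to \emph{derive} continuity of $g$ from strict propriety on an atomless $\P$. That argument does not go through as sketched: a jump in $g$ at $r_0$ does not create a tie in the argmax. Concretely, take $\alpha=1/2$, $g(r)=r$ for $r\le 0$ and $g(r)=r+1$ for $r>0$, and $p$ uniform on $[-1,1]$; one checks directly that $\E_p S(\pm\epsilon,Y) = \E_p S(0,Y) - \epsilon^2/4$, so $r=0$ is still the strict maximizer. ``Unchanged to first order'' is not the same as ``tied,'' and here the second-order term does the work. So your proposed mechanism for forcing continuity fails. The paper, for its part, simply writes ``by continuity of $g$'' without justification, so neither proof fully closes this point; but you have \emph{asserted} an argument that is incorrect, which is worse than leaving it as an assumption. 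If you want to keep the proof self-contained, either state continuity of $g$ as a standing hypothesis (this is what the paper effectively does), or give a correct argument---and note that if $g$ genuinely has a right-jump at the current state $r$ and $q_\alpha(p)>r$, BTB can actually fail, so some such hypothesis is needed.
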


\section{Characterizing Trade Neutralization}
\label{sec:char-tn}

In this section, we will characterize mechanisms that satisfy trade neutralization (TN) when the set of outcomes is finite, $|\Y| = n < \infty$.
Recall that TN captures the quality of traditional markets, particularly in the sense of a ``market maker'': the mechanism is always willing to buy back a contract it has previously sold, for a reasonable price.

TN may seem intuitively easy to satisfy, or at least might not appear to impose much structure on the market maker.
In fact, we will see that the opposite is the case.
To gain intuition for why TN might impose structure, recall that trades made by a participant must be interpretable as beliefs about the property the market is predicting.
This applies to both the purchase of a contract and its sale back to the mechanism.
Furthermore, this ``canceling'' trade must be available regardless of the current state of the market.
We will leverage this intuition to show that the mechanisms satisfying TN are quite special and closely related to expectation markets.

To characterize the SRMs that satisfy TN, it will prove useful to separate out ``contracts'' $d\in\reals^\Y$ from ``cash'' $\ones\in\reals^\Y$, even though technically both reside in the same space.  In particular, we would like a canonical way to take a contract $d$ and separate out its ``cash'' component as $d = d_0 + c\ones$, where intuitively $\$c$ is always paid regardless of the outcome, but $d_0$ depends on outcome.  To do this, we simply define $d_0$ as the projection of $d$ onto the hyperplane normal to $\ones$.

In general, for a scoring rule\footnote{Recall that given $\vec F$, one can define $\vec S(r) = \vec F(r|\emptyset)$. We will indicate throughout where we assume IC.} $S$, we define its range $\Srange = \{\vec{S}(r) : r \in \R\} \subseteq \reals^\Y$, and define the corresponding ``cashless'' contract space by $\Hc \defeq \Srange / \ones = \{\vec S(r) - (\vec S(r)\cdot\ones/|\outcomes|)\ones : r\in\R\}$.
Again, $\Hc$ is simply the projection of the range of $\vec S$ onto the hyperplane normal to $\ones$.
Similarly, we define $\Dc \defeq \Hc-\Hc = \{h_2-h_1 : h_1,h_2\in \Hc\}$ to be the set of possible score differences modulo $\ones$.
Equivalently, $\Dc$ is the projection of the range of $\vec F$ onto the same hyperplane.

We now show that if an SRM satisfies TN, its corresponding set $\Dc$ must have considerable structure: it must form an additive subgroup of $\reals^\Y$.
\begin{lemma} \label{lem:d-group}
  If SRM $F$ satisfies TN, then $\Dc$ is an additive subgroup of $\reals^\Y$ in the sense that $d,d' \in \Dc$ implies $-d, d+d' \in\Dc$.
  Moreover, the entire set $\Dc$ of contracts is available at all times: for any $h \in \Hc$, we have $\Dc = \Hc - \{h\}$.
\end{lemma}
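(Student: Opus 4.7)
The plan is to distill the TN axiom down to a single algebraic identity on the projected space $\ones^\perp$, from which every claim in the lemma follows by a few lines of set arithmetic. The main work, such as it is, happens in the setup step; everything after that is bookkeeping.

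First I would project the defining TN equation onto the hyperplane $\ones^\perp$ normal to $\ones$. Since $[\vec F(r'|r)]_0 = [\vec S(r')]_0 - [\vec S(r)]_0 = h' - h$ (the projection is linear and $[\ones]_0 = 0$), the TN equation $\vec F(r_1'|r_1) + \vec F(r_2'|r_2) = c\ones$ collapses to
\[
(h_1' - h_1) + (h_2' - h_2) = 0, \qquad \text{i.e.,} \quad h_2' = h_2 - (h_1' - h_1).
\]
As $r_1 \to r_1'$ ranges over all trades, $d := h_1' - h_1$ ranges over all of $\mathcal{D}$, and as $r_2$ ranges over $\mathcal{R}$, $h_2$ ranges over all of $\mathcal{H}$. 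Since the equation asserts that the resulting $h_2'$ lies in $\mathcal{H}$, TN is equivalent to the key subtraction property
\[
(\ast) \qquad h - d \in \mathcal{H} \quad \text{for every } h \in \mathcal{H}, \, d \in \mathcal{D}.
\]

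From $(\ast)$, closure under negation is immediate: picking any $h_0 \in \mathcal{H}$ (nonempty since $\mathcal{R}$ is), $(\ast)$ gives $h_0 - d \in \mathcal{H}$, and so $-d = (h_0 - d) - h_0 \in \mathcal{H} - \mathcal{H} = \mathcal{D}$. Closure under addition comes from iterating $(\ast)$: given $d_1, d_2 \in \mathcal{D}$ and any $h \in \mathcal{H}$, two applications yield $h - d_1 \in \mathcal{H}$ and then $(h - d_1) - d_2 = h - (d_1 + d_2) \in \mathcal{H}$, so $d_1 + d_2 = h - [h - (d_1 + d_2)] \in \mathcal{D}$. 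This establishes the subgroup property.

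For the ``moreover'' clause, the inclusion $\mathcal{H} - \{h\} \subseteq \mathcal{H} - \mathcal{H} = \mathcal{D}$ is by definition. For the reverse, given $d \in \mathcal{D}$ we have $-d \in \mathcal{D}$ by the negation step, so $(\ast)$ applied with $-d$ yields $h + d = h - (-d) \in \mathcal{H}$, and hence $d = (h + d) - h \in \mathcal{H} - \{h\}$. I do not expect a real obstacle here: the only subtle point is recognizing that the strict inequality $c > \inf \vec F(r_1'|r_1)$ in TN is irrelevant for this lemma (we only use the existence of a scalar $c$), and that the quantification ``for all $r_2$'' in TN is precisely what lets $h_2$ sweep out all of $\mathcal{H}$ in $(\ast)$.
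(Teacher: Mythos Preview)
Your proposal is correct and follows essentially the same approach as the paper: both project the TN identity onto $\ones^\perp$ to obtain the closure property $h_1,h_2,h_3\in\Hc \implies h_3+h_1-h_2\in\Hc$ (your $(\ast)$ is just this statement with $d=h_2-h_1$), and then deduce the group axioms and the ``moreover'' clause by the same elementary set arithmetic. Your packaging via $(\ast)$ is slightly cleaner, and you are right that the strict inequality in TN plays no role here.
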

\begin{proof}
  From the definition of TN, we must have the following: $\forall r_1,r_2,r_3\exists r_4$ with $F(r_2|r_1) + F(r_4|r_3) = c\ones$ for some $c\in\reals$.
  In terms of $\Hc$, as we have taken everything modulo $\ones$, this implies $\forall h_1,h_2,h_3\in \Hc \;\exists h_4\in \Hc$ such that $h_2-h_1+h_4-h_3 = 0$.  In other words, $h_1,h_2,h_3\in \Hc \implies h_3+h_1-h_2 \in \Hc$.

  To show that $\Dc$ is a group under vector addition, we must show closure under addition and the existence of additive inverses.  The latter is simple, as given any $d\in \Dc$ we can write $d = h_2-h_1$, so of course $-d = h_1-h_2 \in \Dc$.  We also have $0 = h_1 - h_1 \in \Dc$.  For closure, consider $d,d'\in \Dc$, and write $d=h_2-h_1$ and $d'=h_2'-h_1'$.  By the above, $(h_2+h_2'-h_1')\in \Hc$, so $h_2 - (h_1+h_1'-h_2') = d + d' \in \Dc$.

  Finally, we must show that for any $h\in h$, we have $\Dc = \Hc - \{h\}$.  Clearly $\Hc - \{h\} \subseteq \Dc$; for the converse, consider $d = h_1-h_2$.  By the above, $h + h_1 - h_2 \in \Hc$, and thus $d = (h + h_1 - h_2) - h \in \Hc - \{h\}$.
\end{proof}
For example, a construction allowed by Lemma \ref{lem:d-group} is $\Dc = \mathbb{Z}^k$ (the integer lattice), and this is indeed possible.
We give a $k=1$ example in Figure \ref{fig:discretized}.

We will need the following generalization of standard cost-function based markets:
\begin{definition}
  The \emph{generalized cost-function based market} parameterized by a convex ``cost function'' $C:\reals^k \to \reals$, ``securities'' $\phi:\Y\to\reals^k$, and ``share space'' $\Qc \subseteq \reals^k$ is the SRM of the form $F(r',y\mid r) = C(r) - C(r') + (r-r')\cdot\phi(y)$ with report space at market state $r$ given by $\R = \{r + q : q \in \Qc\}$.
\end{definition}
Recall that in standard cost-function markets, $\Qc = \reals^k$, so that any trade $d = r-r' \in \reals^k$ is allowable.
Furthermore, here $d_i$ is interpreted as the number of shares purchased of security $\phi_i$.
(For example, in the construction mentioned above, $\Dc = \mathbb{Z}^k$, we also have $\Qc = \mathbb{Z}^k$, so traders may only purchase integer numbers of shares in this example.)

\begin{definition}
  \label{def:open-market}
  A cost-function based market defined by $C$ and $\phi$ is \emph{open} if $C$ is differentiable and\footnote{We write $\mathrm{int}(A)$ for the interior of the set $A$ and $\conv(A)$ for its convex hull.} $\{\nabla C(q) : q \in \reals^k\} = \mathrm{int}(\conv(\phi(\Y)))$.
  It is \emph{quasi-open} if for every pair of ``share vectors'' $q,v\in\Qc$ and subgradient $x\in\partial C(q)$, we have $x\cdot v < \max_{y\in\Y} v\cdot \phi(y)$.
\end{definition}
To see that quasi-openness is a relaxation of openness, we comment that if we required the condition to hold for all $v \in \reals^k$, then it would be equivalent to $\cup_{q\in\Qc} \partial C(q) \subseteq \mathrm{int}(\conv(\phi(\Y)))$, which is a clear relaxation of openness; and quasi-openness requires somewhat less, as it only must hold for all $v \in \Qc$.

\begin{theorem} \label{thm:tn-ic-implies-cost-func}
  Let SRM $F$ which is IC for $\Gamma$ be given.  Then $F$ satisfies TN if and only if it is a generalized cost-function-based market for securities $\phi:\Y\to\reals^k$, where:
  \begin{enumerate}[leftmargin=0.8cm,label={(\roman*)}]
  \item The share space $\Qc$ of possible purchases is an additive subgroup of $\reals^k$, and
  \item The market is quasi-open.
  \end{enumerate}
  Moreover, TN implies PN.
\end{theorem}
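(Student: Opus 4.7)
The plan is to prove the equivalence in both directions and obtain TN$\Rightarrow$PN as a corollary. The structural engine is Lemma \ref{lem:d-group}: the subgroup property of $\Dc$ forces contracts to decompose into a fixed finite basis of securities, while the availability statement $\Dc = \Hc - \{h\}$ makes the full trade set reachable from every state, which is exactly what enables neutralizing trades from any current position.

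For the forward direction, I pick a basis $\phi_1,\ldots,\phi_k\in\reals^\Y$ for $\spn_\reals(\Dc)\subseteq\ones^\perp$, viewed as a map $\phi:\Y\to\reals^k$, and identify $\Dc$ with an additive subgroup $\Qc\subseteq\reals^k$ via $d = q\cdot\phi$. Fixing $r_0\in\R$, I write $\vec S(r) = \vec S(r_0) + q(r)\cdot\phi + c(r)\ones$; non-redundancy makes $r\mapsto q(r)$ injective (a $\ones$-multiple discrepancy in $\vec S$ would yield identical expected scores and force $r = r'$), and Lemma \ref{lem:d-group} makes it surjective onto $\Qc$. Expanding $\vec F(r'|r)$ then puts the market in generalized cost-function form with $C(q) := -c(q)$, share space $\Qc$, and securities $\phi$. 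Convexity of $C$ on $\Qc$ follows from IC: for every $q\in\Qc$ non-redundancy supplies a belief $p$ whose $\mu := \E_p\phi$ satisfies $\mu\cdot(q'-q) \leq C(q')-C(q)$ for all $q'\in\Qc$, so defining $C$ on $\reals^k$ as the pointwise supremum of these supporting affine functionals yields a convex extension agreeing with $-c$ on $\Qc$. For quasi-openness I run a reverse-trade argument: for nonzero $v\in\Qc$ the contract $\vec F(r_q|r_{q+v}) = -v\cdot\phi + (C(q+v)-C(q))\ones$ must have $\inf < 0$ strictly, since $\inf \geq 0$ would imply $\vec S(r_q) \geq \vec S(r_{q+v})$ pointwise, forcing $r_q = r_{q+v}$ by the non-redundancy argument of Proposition \ref{prop:arb}. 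This gives $C(q+v)-C(q) < \max_y v\cdot\phi(y)$, and combined with the subgradient inequality $x\cdot v \leq C(q+v)-C(q)$ for $x\in\partial C(q)$ we conclude $x\cdot v < \max_y v\cdot\phi(y)$.

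For the reverse direction, suppose $F$ is a generalized cost-function market with $\Qc$ a subgroup and quasi-open. Given any trade $r_1\to r_1'$ with share change $v_1\in\Qc$ and current state $r_2$, I take $r_2' := r_2 - v_1 \in \R$ (valid since $-v_1\in\Qc$); the $\phi$-dependent parts of $\vec F(r_1'|r_1) + \vec F(r_2'|r_2)$ cancel, leaving $c\ones$, and the required strictness $c > \inf\vec F(r_1'|r_1)$ reduces to $C(r_2)-C(r_2-v_1) > \min_y v_1\cdot\phi(y)$, which follows from the subgradient inequality at $r_2 - v_1$ combined with quasi-openness there in direction $-v_1\in\Qc$. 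The same argument with $v_1$ replaced by the aggregate $V := \sum_i v_i \in \Qc$ (closure under addition) yields PN, so TN$\Rightarrow$PN drops out of the forward direction immediately.

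The main technical obstacle is the convex-extension step: since $\Qc$ can be exotic (an integer lattice, an incommensurable lattice, or a dense subgroup of $\reals^k$), one has to verify that the pointwise supremum of supporting affine functionals over $\Qc$ is finite everywhere on $\reals^k$ and agrees with $-c$ on $\Qc$. A secondary subtlety is the strict version of no-arbitrage used in deriving quasi-openness, which holds uniformly even for set-valued $\Gamma$, because $\inf\vec F(r|r')=0$ would force $\vec S(r) = \vec S(r')$ and hence $\Gamma^{-1}(r) = \Gamma^{-1}(r')$, contradicting non-redundancy whenever $r \neq r'$.
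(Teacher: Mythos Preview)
Your proof follows the same architecture as the paper's---invoke Lemma~\ref{lem:d-group} for the subgroup structure of $\Dc$, choose a basis to define $\phi$, decompose $\vec S$ into a $\phi$-part and a $\ones$-part, establish the bijection $\R\leftrightarrow\Qc$ via non-redundancy, extract a convex cost function, verify quasi-openness, and handle the converse and PN by explicit neutralization with the trade $-v$---but two steps are executed more directly. For the convex extension of $C$ to $\reals^k$, the paper passes through the dual, defining $G(x) = \sup_{r} v(r)\cdot x + g(r)$ on $\Xc=\conv(\phi(\Y))$ and taking $C = G^*$; this makes finiteness on $\reals^k$ immediate, since $G$ is bounded below on compact $\Xc$ by any single affine minorant, and your flagged ``main technical obstacle'' dissolves by exactly the same observation (your slopes $\mu_q$ lie in $\Xc$ and your intercepts equal $-G(\mu_q)$). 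For quasi-openness, the paper argues by contradiction from TN, whereas you derive it from the strict form of no-arbitrage, which follows from IC alone; this is cleaner and in fact shows that condition~(ii) is automatic once the market has cost-function form with $\Qc$ a subgroup, so the theorem could be stated with~(i) alone. Similarly, for the converse strictness the paper integrates subgradients along a segment (via Lemma~\ref{lem:cost-function-prices}), while your one-step subgradient inequality at $r_2 - v_1$ combined with quasi-openness in direction $-v_1\in\Qc$ already suffices. One wording correction: $\inf\vec F(r|r') \geq 0$ gives only $\vec S(r) \geq \vec S(r')$ pointwise, not equality; but that already yields $\Gamma^{-1}(r') \subseteq \Gamma^{-1}(r)$, which is the contradiction with non-redundancy you need.
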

\begin{proof}
  For the forward direction, that TN implies cost-function-based and PN, we proceed in five parts: (1) construct a basis for $\Dc$, which will be the securities $\phi$; (2) rewrite $\vec F$ in terms of that basis; (3) show that the map from reports to ``shares'' is bijective; (4) extract the cost function $C$ from the share representation of $F$; (5) show that $C$ satisfies our relaxed version of openness.  For the converse, we will appeal to Theorem~\ref{thm:cost-pn}.

  \emph{1. Basis of $\Dc$.}
  \quad
  By Lemma~\ref{lem:d-group}, $\Dc$ is an additive subgroup of $\reals^\Y$.  Let $d^1,\ldots,d^k \in \Dc$ be a basis for the linear span of $\Dc$.  We can now write our securities in terms of this basis: define $\phi:\Y\to\reals^k$ by $\phi(y)_i = d^i_y$.  It will be convenient to work with $\phi$ in matrix form $\Phi\in\reals^{n\times k}$, which naturally we define as $\Phi_{y,i} = \phi(y)_i$.  Thus, we now have $\Dc = \{\Phi \cdot q : q\in \Qc\}$ where $\Qc$ is an additive subgroup of $\reals^k$.

  \emph{2. Rewrite $\vec F$.}
  \quad
By the definition of $\Dc$ as the range of $F$ modulo $\ones$, and the decomposition above, we know that for every $r,r'\in\R$ we can write $\vec F(r'|r) = \Phi \cdot v + c\ones$ for some $v\in\reals^k$ and $c\in\reals$.  Thus, letting $r_0$ be the initial state of the given SRM, we have functions $g:\R\to\reals$ and $v:\R\to\reals^k$ such that for all $r\in\R$,
$\vec F(r|r_0) = \Phi \cdot v(r) + g(r) \ones$.  By definition $\Phi \cdot v(r) \in \Dc$ and thus we must have $v(r) \in \Qc$, meaning we can write $v : \R \to \Qc$.  Note that the expected payoff when $Y\sim p$ can now be written $\E_p F(r,Y|r_0) = p^\tr \vec F(r|r_0) = p^\tr \Phi\,v(r) + g(r) p^\tr \ones = \E_p \phi \cdot v(r) + g(r)$.  In particular, $\Gamma$ can only depend on $p$ through $\E_p\phi$, so letting $\Xc = \conv(\phi(\Y))$ as before, we have some $\psi:\Xc\toto\R$ such that $\Gamma(p) = \psi(\E_p \phi)$.

  \emph{3. The map $v : \R \to \Qc$ is a bijection.}
  \quad
  Surjectivity of $v$ follows from the fact that transformation by the change of basis matrix $\Phi$ is a bijection, as by definition we are simply rewriting elements of $\Dc$ via elements of $\Qc$.    For injectivity, we will use IC.  Suppose $v(r) = v(r')$.  If $g(r)=g(r')$, $\Gamma$ is not non-redundant as $\vec F(r|r_0) = \vec F(r'|r_0)$, so clearly $\Gamma(p) = r \iff \Gamma(p) = r'$.  Thus without loss of generality $g(r) > g(r')$, which implies $r' \notin \argmax_{x\in\R} v(x) \cdot \E_{p'} \phi(Y) + g(x) = \argmax_{x\in\R} \E_p F(x,Y|r_0)$ for any $p'\in\P$, so $r'$ cannot be in the range of $\Gamma$ at all (i.e.~the report $r$ dominates $r'$), a contradiction.  Thus, $v$ is injective and hence a bijection.

  \emph{4. Extracting the cost function.}
  \quad
  Now that we have a bijection from reports to ``shares'', intuitively, we should just be able to take $C(q) = -g(v^{-1}(q))$.  More care is needed, however, as $C$ must be convex with the correct subgradients.  Let us define a convex function $G:\Xc\to\reals$ by $G(x) = \sup_r v(r)\cdot x + g(r)$, which is the optimal expected score when $\E_p \phi = x$.  By IC and the definition of $\psi$ above, we have $G(x) = v(r)\cdot x + g(r)$ if and only if $r\in\psi(x)$.  Now we can define $C:\reals^k\to\reals$ as the conjugate of $G$, namely $C(q) = \sup_{x\in\Xc} q\cdot x - G(x)$.  Thus, we need only show that $C(v(r)) = -g(r)$, as alluded to above, as this would imply $F(r,y|r_0) = v(r)\cdot\phi(y) - C(v(r))$ as desired.

To make headway, we will appeal to convex analysis.  First, we will establish that $G(x) = v(r)\cdot x + g(r) \iff v(r) \in \partial G(x)$, meaning if $r$ is the optimal report for expected value $x$, then $v(r)$ is a subgradient of $G$.  This follows directly from the subgradient inequality; if $r$ is optimal for $x$, then for all $x'$, $G(x') = sup_{r'} v(r')\cdot x + g(r') \geq v(r)\cdot x + g(r) = G(x) + v(r)\cdot(x'-x)$, so $v(r)\in\partial G(x)$.  Conversely, suppose $v(r)\in\partial G(x)$ but $r$ is optimal for some $x'$, so $G(x') = v(r)\cdot x' + g(r)$.  The subgradient inequality gives $v(r)\cdot x' + g(r) = G(x') \geq G(x) + v(r)\cdot(x'-x)$, which implies $v(r)\cdot x + g(r) \geq G(x)$, so $r$ must be optimal for $x$ as well: $G(x) = v(r)\cdot x + g(r)$.

In summary, we have $r \in \psi(x) \iff G(x) = v(r)\cdot x + g(r) \iff v(r) \in \partial G(x)$.  By \cite[Thm E.1.4.1]{urruty2001fundamentals}, we immediately have $C(q) = q\cdot x - G(x) \iff q\in\partial G(x)$.
Putting these together, we have $r \in \psi(x) \iff v(r) \in \partial G(x) \iff C(v(r)) = v(r)\cdot x - G(x) = -g(r)$.  Thus, as every $r\in\R$ is in the image of $\psi$, we must have $C(v(r)) = -g(r)$ for all $r$.

  \emph{5. $C$ is (almost) open.}
  \quad
  We have established that $F$ can be written as a cost-function-based market with securities $\phi$ and a potentially nondifferentiable $C$.
To conclude this direction of the proof, suppose that for some $x\in\partial C(q)$ we have $x\cdot v \geq \max_{y\in\Y} v\cdot\phi(y)$.
Following the proof of Theorem~\ref{thm:cost-pn}, we start the trader at $q$ selling $v$ (as $(-v)\in\Qc$) and place the state back at $q$.
To neutralize, the trader must now purchase $v\in\Qc$, but by the same argument as before (using the fact that $\partial C(\reals^k) \subseteq \Xc$ by construction of $C$), $C(q+v)-C(q) = x\cdot v \geq \max_{y\in\Y} v\cdot\phi(y)$, which contradicts TN.

  \emph{Converse.}
  \quad
  Finally, for the reverse direction, we can simply repeat the proof of Theorem~\ref{thm:cost-pn}: if the trades $v_i$ are elements of the group $\Qc$ for all $v_i$, then so is $v = \sum_{i=1}^w v_i$.
It only remains to be shown that the price is less than the worst-case payoff of $v$, which follows from integrating the guarantee in (ii) as in Lemma~\ref{lem:cost-function-prices}.
\end{proof}

\begin{figure}
  \includegraphics[width=0.49\textwidth]{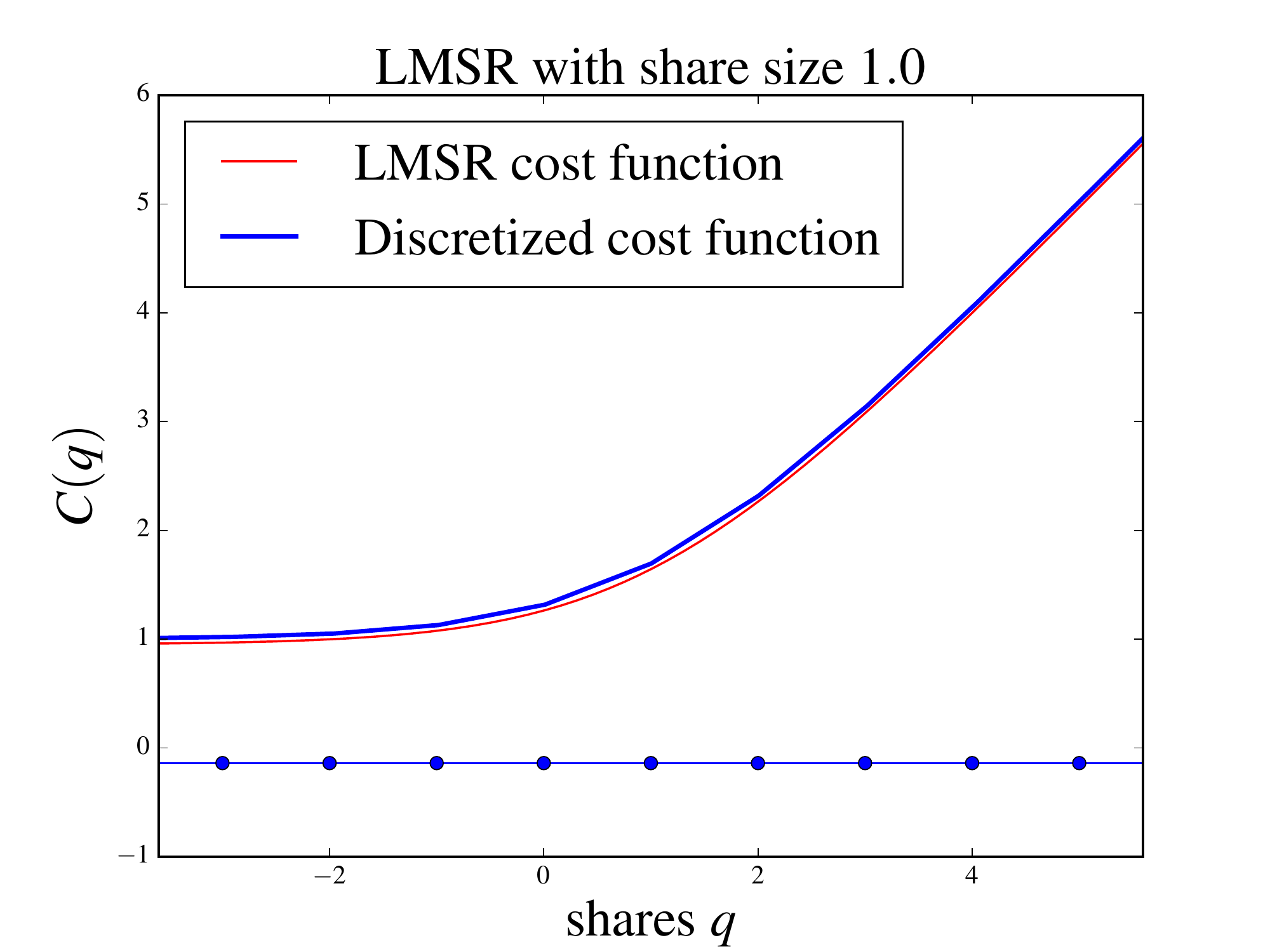}
  \hfill
  \includegraphics[width=0.49\textwidth]{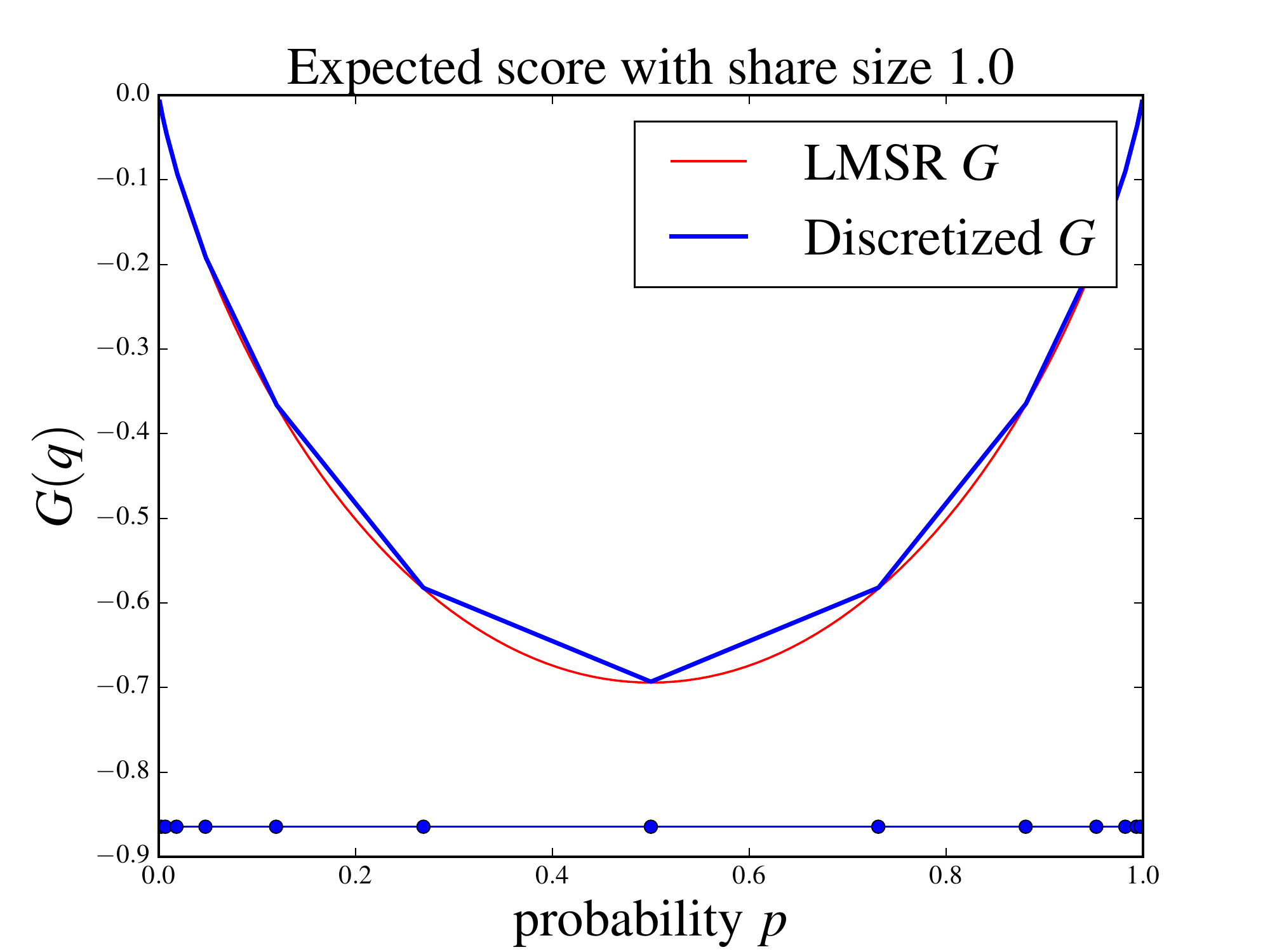}
  \caption{\textbf{The ``discretized'' LMSR.}
    This market for $\Pr[Y=1]$ allows only integer trades of a single security, using the LMSR cost function $C(q) = \log(1 + e^q)$ and its conjugate $G(p) = p \log(p) + (1-p)\log(1-p)$.
    Traders may move the market from any ``dot'' to any other.
    The possible market states are equally spaced in share space (horizontal axis, left figure), translating to \emph{non}-equally spaced \emph{prices} (horizontal axis, right figure).
    Theorem \ref{thm:tn-ic-implies-cost-func} implies that all mechanisms satisfying PI, IC, and TN have this general format, being discretizations of cost-function markets.
    } 
  \label{fig:discretized}
\end{figure}

This result shows that, despite the seeming weakness of the TN condition, the only SRMs satisfying TN are cost-function-based, or variations thereof (see Figure~\ref{fig:discretized}).
The result also raises the question of whether any market can satisfy WN, or if perhaps WN is equivalent to TN in a formal sense. We will conclude by mentioning some interesting additional results answering this question and discussing future directions.

\section{Share-Like Market for Ratios of Expectations} \label{sec:ratios}

From Theorem~\ref{thm:tn-ic-implies-cost-func}, we know that any SRM satisfying the TN axiom will effectively be a cost-function-based market dealing in shares of some set of securities, perhaps restricting trades to some discretization of the share space.
As a corollary, all such TN markets must elicit an expected value either directly or indirectly, or perhaps a discretization thereof.
Theorem~\ref{thm:tn-ic-implies-cost-func} leaves open the possibility, however, of a ``share-like'' market for a property other than an expected value which satisfies WN but not TN.
We now give one such example: a ratio of expectations.

Before diving into the formalism, let us expand on the intuition given in \S~\ref{sec:msr-introduction}.
In a cost-function-based market, traders purchase bundles $r\in\reals^k$ of securities $\phi:\Y\to\reals^k$, in exchange for paying some up-front cost $C(r_t+r) - C(r_t)$ to the market maker in the form of cash.
In other words, traders are exchanging the bundle $r$ of securities $\phi$ for an amount $C(r_t+r) - C(r_t)$ of the security $\ones$.
The ratio of expectations market can be thought of similarly, but now traders exchange the bundle $r$ of securities $\phi$ for an amount $C(r_t+r) - C(r_t)$ of some new  security $b$, which by convention will be nonnegative (and bounded away from $0$ for infinite $\Y$).
In other words, the new market is exactly the same as the old, but the cost function is in units of the security $b$ rather than cash ($\ones$).
Immediately we can glean that, just as in a cost-function-based market traders with belief $p$ had an incentive to trade until $\nabla C(r_t) = \E_p \phi$, i.e.\ the change in cost is equal to the expected security payoff, now traders will trade until $\nabla C(r_t) \E_p b = \E_p \phi$.
Thus, the prices of the new market will be the market's consensus about the ratio of expectations of $\phi$ and $b$, $\nabla C(r_t) = \E_p \phi / \E_p b$.

What axioms does this new market satisfy?
It would seem that its share-like nature could somehow circumvent Theorem~\ref{thm:tn-ic-implies-cost-func} and satisfy TN, but of course this is not the case.
As we will show, however, the ratio of expectations market does satisfy WN.
First, we must formally define our new market.

Let $\Y$ be a finite set.
For $\phi:\Y\to\reals^k$ and $b:\Y\to\reals$ with $\inf b > 0$, we define the ratio of expectations $\Gamma(p) = \E_p \phi / \E_p b$.
As usual we assume that $\phi$ is affinely independent, so $\{\E_p \phi : p\in\P\}$ is full-dimensional, and that $\inf b > 0$, which since $\Y$ is finite is equivalent to the payoffs $b(y)$ being nonnegative for all $y\in\Y$.
A characterization of scoring rules eliciting $\Gamma$ was shown by Frongillo and Kash~\citep{frongillo2015vector-valued}, which extended the real-valued case given by Gneiting~\citep{gneiting2011making}: a scoring rule $S$ elicits $\Gamma$ if and only if
\begin{equation}
  \label{eq:ratio}
  S(r,y) = b(y) G(r) + dG_r \cdot (\phi(y) - r b(y)) + f(y)~,
\end{equation}
where as in Theorem~\ref{thm:mean-sr-char}, $G:\R\to\reals$ is strictly convex with selection of subgradients $dG$, and $f$ is arbitrary (and, as usual, irrelevant for SRMs).

Our main result for this section is that essentially all ``open'' markets for $\Gamma$ satisfy WN.
The proof first applies convex conjugate duality to arrive at the cost-function-like market described above, and then applies facts about the usual cost-function-based framework (e.g. Lemma~\ref{lem:cost-function-prices}) to show that the prices for trading in a bundle must be strictly less than its worst-case payoff.

\begin{theorem}
  \label{thm:ratio-tn}
  A ratio-of-expectations market for differentiable $G$ and $\nabla G(\R)=\reals^k$ satisfies WN.
\end{theorem}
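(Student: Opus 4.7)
The plan is to reformulate the ratio-of-expectations market as a ``cost-function market in units of $b$'' via Legendre--Fenchel conjugate duality, then exhibit an improving trade by an argument analogous to the standard cost-function case. Write $\tilde\phi := \phi/b$.

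First, since $G$ is strictly convex and differentiable with $\nabla G(\R)=\reals^k$, setting $C := G^*$ makes $\nabla C$ a bijection $\reals^k\to\R$, and $G(r) = q\cdot r - C(q)$ for $q = \nabla G(r)$. Substituting this identity into~\eqref{eq:ratio} causes the $b(y) G(r)$ and $-r b(y)\cdot q$ terms to cancel, yielding $S(r,y) = q\cdot \phi(y) - b(y) C(q) + f(y)$ and hence
\[
F(r',y\mid r) = v\cdot \phi(y) - b(y)\bigl(C(q') - C(q)\bigr), \qquad v := q' - q.
\]
A trade therefore corresponds to acquiring $v$ shares of $\phi$ in exchange for $C(q+v)-C(q)$ units of the security $b$, with $\R = \interior(\conv(\tilde\phi(\Y)))$.

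Second, I would prove the ratio analog of Lemma~\ref{lem:cost-function-prices}: for any $v \ne 0$,
\[
\min_y v\cdot \tilde\phi(y) \;<\; C(q+v) - C(q) \;<\; \max_y v\cdot \tilde\phi(y).
\]
This follows from the subgradient bound $C(q+v) - C(q) \le v\cdot\nabla C(q+v)$ together with $\nabla C(q+v) \in \interior(\conv(\tilde\phi(\Y)))$, which forces strict inequality at the max; the lower bound is symmetric. Given an initial position $d_1 = v_1\cdot \phi - \alpha b$ arising from a trade with $v_1 \ne 0$ and $\alpha = C(q_1+v_1)-C(q_1)$, and a current state $r_2$ with $q_2 = \nabla G(r_2)$, the natural improving trade is $v_2 = -v_1$ (attainable since $\nabla G(\R)=\reals^k$), which cancels the $\phi$-exposure and produces the net position $\gamma b$ with $\gamma = (C(q_2) - C(q_2-v_1)) - \alpha$. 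Combining the strict-arbitrage bound $\inf d_1 < 0$ (from Proposition~\ref{prop:arb}, non-redundancy of $\Gamma$, and $v_1\ne 0$) with the price-lemma bounds on $\alpha$ and $C(q_2)-C(q_2-v_1)$, the goal is to conclude $\inf(\gamma b) > \inf d_1$.

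The main obstacle is this final comparison. In the cost-function case $b\equiv\ones$ the two price inequalities combine immediately to give $\gamma > \inf d_1$, which is essentially the standard cost-function WN proof; when $b$ is nonconstant, however, $\inf(\gamma b)$ equals $\gamma \sup b$ or $\gamma\inf b$ depending on $\sgn(\gamma)$, and the clean cancellation fails. I therefore expect the proof either to refine the trade by adding a small marginal perturbation in a direction $w$ with $w\cdot(\phi(y^*) - r_2 b(y^*)) > 0$ for each $y^*\in\arginf d_1$---exploiting that this vector is a supergradient of the concave map $v_2 \mapsto \inf[d_1 + d_2(v_2)]$ at the origin---or to argue by cases on the sign of $\gamma$, invoking the price bounds in each case to upgrade the $\le$ provided by duality to the strict $>$ demanded by WN.
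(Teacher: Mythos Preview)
Your conjugate-duality setup, the strict price bounds $\min_y v\cdot\tilde\phi(y)<C(q+v)-C(q)<\max_y v\cdot\tilde\phi(y)$, and the candidate neutralizing trade $v_2=-v_1$ all match the paper's proof exactly. The paper then asserts it suffices that the increment $d_2(y)=(C(q_2)-C(q_2-v))\,b(y)-v\cdot\phi(y)$ be positive for every $y$, and purports to derive this from the price bound; but that derivation slides from ``$<\max_y$'' to ``$<$ for all $y$,'' and indeed $d_2>0$ everywhere would contradict ARB (Proposition~\ref{prop:arb}). So the obstacle you flag is exactly the point at which the paper's written argument is defective.

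The full-cancellation trade $v_2=-v_1$ genuinely fails in general: with $\Y=\{1,2\}$, $\phi=(0,1)$, $b=(1,10)$, $C(q)=\log(1+e^{q/10})$, $v=1$, $q_1=0$, and $q_2\to-\infty$, one computes $\inf d_1\approx-0.05$ while $\inf(\gamma b)\approx-0.51$. Hence your fix (b), a sign analysis on $\gamma$, cannot salvage that choice of trade. Your fix (a)---abandoning $-v_1$ and instead taking a small trade from the null position in a direction $w$ with $w\cdot(\phi(y^*)-r_2\,b(y^*))>0$ for every $y^*\in\arginf d_1$---is the right repair and departs from what the paper actually argues. To complete it you must show such a $w$ exists, i.e.\ that $0\notin\conv\{\phi(y^*)-r_2\,b(y^*):y^*\in\arginf d_1\}$; this is where an affine-independence hypothesis on $\phi$ (or an argument that $\arginf d_1$ is a singleton) would enter.
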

\begin{proof}
  We construct a ``cost function'' $C(q) = \sup_{r\in\R} r\cdot q - G(r)$ as the convex conjugate of $G$.
  The corresponding scoring rule is $S_C(q,y) = \phi(y)\cdot q - C(q)b(y)$.
  By assumption, $\nabla G:\R\to\reals^k$ is a bijection, and thus markets $F_G$ with $G$ and $F_C$ with $C$ are equivalent.  
  It therefore suffices to show that $F_C$ satisfies WN.

  Consider any trade $q_1\to q_1'$ and let $v = q_1'-q_1$.
  For any $q_2$, we will show that the trade $q_2' = q_2-v$ satisfies WN.
  The difference in net payoff between the first and second trade is $(C(q_2)-C(q_2-v))b(y) - v\cdot \phi(y)$, so it suffices to show that this is always positive.

  By the proof of Lemma~\ref{lem:cost-function-prices}, we then have for any $q,w\in\reals^k$ 
  \begin{equation}
    \label{eq:ratio-prices}
    C(q+w) - C(q) < \sup_{x \in \R} x\cdot w = \max_{y\in\Y} \; (\phi(y)/b(y))\cdot w~.
  \end{equation}
  Taking $q=q_2$ and $w=-v$, we have $C(q_2)-C(q_2-v) < \phi(y)\cdot (-v)/b(y)$ for all $y$, which is equivalent to $(C(q_2-v) - C(q_2))b(y) >  \phi(y)\cdot v$ for all $y$, thus establishing WN.
\end{proof}

By establishing WN for SRMs eliciting a ratio of expectations, we are essentially saying that these markets are ``reasonable'' and could plausibly aggregate trader beliefs effectively.
It would be interesting to see how such markets perform in practice.

\section{Discussion and Directions}
\label{sec:discussion}

We have presented market axioms for scoring rule markets (SRMs) and studied a handful of well-known statistics/properties to see which of these axioms their corresponding markets can satisfy.  Interestingly, we have seen wide variation among the satisfied axioms, most dramatic of which are the neutralization axioms: TN is satisfied essentially exclusively by cost-function-based markets, whereas median/quantile and mode markets do not even satisfy its weakest version WN.

Perhaps the first question that comes to mind in light of our results is, are there any non-cost-function-based markets satisfying WN?  We now give two positive examples: expectiles and ratios of expectations.  As discussed in \S~\ref{sec:axioms}, WN is essentially equivalent to TN, and thus these markets could be expected to perform similarly to cost-function-based markets.  We then conclude with remarks and future directions.

\paragraph{Other properties.}

We have studied the mean (expected values), median, mode, and ratio of expectations in this paper, but clearly many other properties of interest remain.
One interesting example is the $\tau$-expectile, a type of generalized quantile introduced by Newey and Powell~\citep{newey1987asymmetric}, is defined as a solution $x=\mu_\tau$ to the equation $\E_p\left[|\ones_{x\geq y}-\tau|(x-y)\right]=0$.
Scoring rules for expectiles take a hybrid form of those for means and quantiles, e.g.~the \emph{asymmetric squared error}, $S(r,y) = -|\ones\{y\leq r\} - \tau|(y-r)^2$.
We show in Appendix~\ref{app:expectiles} that any SRM for expectiles satisfies WN, as the score is eventually linear, and thus can be neutralized by simply matching the slopes (analogous to Figure~\ref{fig:mean-position}).

\paragraph{Other market forms.}

\citet{wolfers2004prediction} study prediction markets from a more empirical perspective, and suggest a number of possible market formulations.
One which is not covered here is to offer contracts $d^r$ such that $\E_p d^r = 0 \iff \Gamma(p) = r$.
(This $V(r,y) = d^r(y)$ is called an \emph{identification function}.)
They illustrate this idea with a market eliciting the median of $Y$, where $d^r$ pays $\$1$ if $Y>r$ and $-\$1$ otherwise.
While the properties of such a contract space would be interesting to study in a continuous double-auction setting, one may ask how to translate it to an automated market maker setting.
Here the market would maintain a centralized median $r_t$ and traders could either buy or sell $d^{r_t}$, moving $r_{t+1} \gets r_t \pm \epsilon$.
Unfortunately, for any $\epsilon>0$, such a market has unbounded worst-case loss even on a bounded domain (traders buy and sell between $r=0$ and $r=\epsilon$ and $y=\epsilon/2$).
For infinitesimal $\epsilon$ however, the market becomes the absolute-loss SRM with $S(r,y) = -|r-y|$, because $d^r = \tfrac {d}{dr} S(r,y)$.

\paragraph{Complexity.}

Finally, we remark that the recent concept of \emph{elicitation complexity} brings interesting implications for our study.
Here one asks, given a property $\Gamma$ of interest, how many dimensions $k$ does one need for there to exist an elicitable $\Gamma':\P\to\reals^k$ from some ``nice'' class of properties, where one can compute $\Gamma$ from $\Gamma'$ via a \emph{link function} $f$, i.e.~$\Gamma = f \circ \Gamma'$.
In our context, one may choose ``nice'' to mean a property having an SRM satisfying any one of the axioms we discuss.
The most natural may be TN, in which case one is essentially asking $\Gamma'$ to be an expected value, a case studied by~\citet{agarwal2015consistent}.
It would be interesting to characterize properties having markets satisfying WN, and identify properties with low complexity with respect to these WN properties.

\paragraph{Acknowledgments.}
Thanks to Yiling Chen for helpful comments and pointers.

\bibliographystyle{plainnat}
\bibliography{diss,refs}

\begin{thebibliography}{20}
\providecommand{\natexlab}[1]{#1}
\providecommand{\url}[1]{\texttt{#1}}
\expandafter\ifx\csname urlstyle\endcsname\relax
  \providecommand{\doi}[1]{doi: #1}\else
  \providecommand{\doi}{doi: \begingroup \urlstyle{rm}\Url}\fi

\bibitem[Abernethy and Frongillo(2012)]{abernethy2012characterization}
J.~Abernethy and R.~Frongillo.
\newblock A characterization of scoring rules for linear properties.
\newblock In \emph{Proceedings of the 25th {Conference} on {Learning}
  {Theory}}, pages 1--27, 2012.
\newblock URL
  \url{http://jmlr.csail.mit.edu/proceedings/papers/v23/abernethy12/abernethy12.pdf}.

\bibitem[Abernethy et~al.(2013)Abernethy, Chen, and
  Vaughan]{abernethy2013efficient}
Jacob Abernethy, Yiling Chen, and Jennifer~Wortman Vaughan.
\newblock Efficient market making via convex optimization, and a connection to
  online learning.
\newblock \emph{ACM Transactions on Economics and Computation}, 1\penalty0
  (2):\penalty0 12, 2013.
\newblock URL \url{http://dl.acm.org/citation.cfm?id=2465777}.

\bibitem[Abernethy et~al.(2014{\natexlab{a}})Abernethy, Kutty, Lahaie, and
  Sami]{abernethy2014information}
Jacob Abernethy, Sindhu Kutty, Sébastien Lahaie, and Rahul Sami.
\newblock Information aggregation in exponential family markets.
\newblock In \emph{Proceedings of the fifteenth {ACM} conference on {Economics}
  and computation}, pages 395--412. ACM, 2014{\natexlab{a}}.
\newblock URL \url{http://dl.acm.org/citation.cfm?id=2602896}.

\bibitem[Abernethy and Frongillo(2011)]{abernethy2011collaborative}
Jacob~D. Abernethy and Rafael~M. Frongillo.
\newblock A collaborative mechanism for crowdsourcing prediction problems.
\newblock In \emph{Advances in {Neural} {Information} {Processing} {Systems}
  24}, pages 2600--2608, 2011.

\bibitem[Abernethy et~al.(2014{\natexlab{b}})Abernethy, Frongillo, Li, and
  Vaughan]{abernethy2014general}
Jacob~D. Abernethy, Rafael~M. Frongillo, Xiaolong Li, and Jennifer~Wortman
  Vaughan.
\newblock A {General} {Volume}-parameterized {Market} {Making} {Framework}.
\newblock In \emph{Proceedings of the {Fifteenth} {ACM} {Conference} on
  {Economics} and {Computation}}, {EC} '14, pages 413--430, New York, NY, USA,
  2014{\natexlab{b}}. ACM.
\newblock ISBN 978-1-4503-2565-3.
\newblock \doi{10.1145/2600057.2602900}.
\newblock URL \url{http://doi.acm.org/10.1145/2600057.2602900}.

\bibitem[Agarwal and Agarwal(2015)]{agarwal2015consistent}
Arpit Agarwal and Shivani Agarwal.
\newblock On consistent surrogate risk minimization and property elicitation.
\newblock In \emph{{JMLR} {Workshop} and {Conference} {Proceedings}},
  volume~40, pages 1--19, 2015.
\newblock URL \url{http://www.jmlr.org/proceedings/papers/v40/Agarwal15.pdf}.

\bibitem[Brier(1950)]{brier1950verification}
G.W. Brier.
\newblock Verification of forecasts expressed in terms of probability.
\newblock \emph{Monthly weather review}, 78\penalty0 (1):\penalty0 1--3, 1950.
\newblock ISSN 1520-0493.

\bibitem[Chen and Pennock(2007)]{chen2007utility}
Y.~Chen and D.M. Pennock.
\newblock A utility framework for bounded-loss market makers.
\newblock In \emph{Proceedings of the 23rd {Conference} on {Uncertainty} in
  {Artificial} {Intelligence}}, pages 49--56, 2007.

\bibitem[Frongillo and Kash(2014)]{frongillo2014general}
Rafael Frongillo and Ian Kash.
\newblock General truthfulness characterizations via convex analysis.
\newblock In \emph{Web and {Internet} {Economics}}, pages 354--370. Springer,
  2014.

\bibitem[Frongillo and Kash(2015)]{frongillo2015vector-valued}
Rafael Frongillo and Ian Kash.
\newblock Vector-{Valued} {Property} {Elicitation}.
\newblock In \emph{Proceedings of the 28th {Conference} on {Learning}
  {Theory}}, pages 1--18, 2015.

\bibitem[Gneiting(2011)]{gneiting2011making}
T.~Gneiting.
\newblock Making and {Evaluating} {Point} {Forecasts}.
\newblock \emph{Journal of the American Statistical Association}, 106\penalty0
  (494):\penalty0 746--762, 2011.

\bibitem[Gneiting and Raftery(2007)]{gneiting2007strictly}
Tilmann Gneiting and Adrian~E. Raftery.
\newblock Strictly proper scoring rules, prediction, and estimation.
\newblock \emph{Journal of the American Statistical Association}, 102\penalty0
  (477):\penalty0 359--378, 2007.

\bibitem[Hanson(2003)]{hanson2003combinatorial}
R.~Hanson.
\newblock Combinatorial {Information} {Market} {Design}.
\newblock \emph{Information Systems Frontiers}, 5\penalty0 (1):\penalty0
  107--119, 2003.

\bibitem[Lambert and Shoham(2009)]{lambert2009eliciting}
Nicolas~S. Lambert and Yoav Shoham.
\newblock Eliciting truthful answers to multiple-choice questions.
\newblock In \emph{Proceedings of the 10th {ACM} conference on {Electronic}
  commerce}, pages 109--118, 2009.

\bibitem[Lambert et~al.(2008)Lambert, Pennock, and
  Shoham]{lambert2008eliciting}
Nicolas~S. Lambert, David~M. Pennock, and Yoav Shoham.
\newblock Eliciting properties of probability distributions.
\newblock In \emph{Proceedings of the 9th {ACM} {Conference} on {Electronic}
  {Commerce}}, pages 129--138, 2008.

\bibitem[Newey and Powell(1987)]{newey1987asymmetric}
Whitney~K. Newey and James~L. Powell.
\newblock Asymmetric least squares estimation and testing.
\newblock \emph{Econometrica: Journal of the Econometric Society}, pages
  819--847, 1987.
\newblock URL \url{http://www.jstor.org/stable/1911031}.

\bibitem[Rockafellar(1997)]{rockafellar1997convex}
R.T. Rockafellar.
\newblock \emph{Convex analysis}, volume~28 of \emph{Princeton {Mathematics}
  {Series}}.
\newblock Princeton University Press, 1997.

\bibitem[Savage(1971)]{savage1971elicitation}
L.J. Savage.
\newblock Elicitation of personal probabilities and expectations.
\newblock \emph{Journal of the American Statistical Association}, pages
  783--801, 1971.

\bibitem[Urruty and Lemaréchal(2001)]{urruty2001fundamentals}
Jean-Baptiste~Hiriart Urruty and Claude Lemaréchal.
\newblock \emph{Fundamentals of {Convex} {Analysis}}.
\newblock Springer, 2001.
\newblock ISBN 978-3-540-42205-1.

\bibitem[Wolfers and Zitzewitz(2004)]{wolfers2004prediction}
J.~Wolfers and E.~Zitzewitz.
\newblock Prediction {Markets}.
\newblock \emph{Journal of Economic Perspective}, 18\penalty0 (2):\penalty0
  107--126, 2004.

\end{thebibliography}

\vfill
\break

\appendix

\section{Axioms} \label{app:derive-srms}

Our first axiom, incentive compatibility, is our formalization of the purpose behind designing the mechanism in the first place: it ensures that we can reasonably interpret the reports of the participants as predictions about the unknown outcome $y\in\Y$.

\begin{axiom}[Incentive Compatability (IC)] \label{ax:ic}
  Given a property $\Gamma : \P \toto \R$, a market $F$ satisfies IC for $\Gamma$ if for all histories $r_1,\ldots,r_t\in\R$, and all beliefs $p\in\P$, we have $\Gamma(p) = \argmax_{r\in\R} \E_p[F(r,Y|r_0,\ldots,r_t)]$.
\end{axiom}

\begin{axiom}[Path Independence (PI)] \label{ax:pi}
  A mechanism $F$ satisfies PI if for all $r,r'\in\R$ and $r_1,\ldots,r_t\in\R$, $\vec F(r|r_1,\ldots,r_t) = \vec F(r'|r_1,\ldots,r_t) + \vec F(r|r_1,\ldots,r_t,r')$.
\end{axiom}
PI states that, given a history $r_1,\ldots,r_t$, the following contracts are equivalent: reporting some $r'$, then coming back immediately and reporting $r$; and directly reporting $r$.
This enforces that agents cannot gain by making spurious prophecies and then immediately correcting them.

The following result says that any mechanism satisfying IC and PI must be an SRM.
The idea conceptually as well as the proof are very similar to that of \citet{abernethy2013efficient}.
The implications are more general, however, as the result applies to any mechanism that offers contracts at each point in time.
\begin{theorem*}[\ref{thm:char-srm}]
  A mechanism satisfies PI and IC for property $\Gamma$ if and only if it is a scoring rule based market (SRM) with some scoring rule $S$ that elicits $\Gamma$.
\end{theorem*}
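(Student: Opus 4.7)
The plan is to prove both directions by direct computation, with the reverse direction built around an induction on history length that extracts a scoring rule $S$ from the mechanism's initial contracts.

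For the easy direction (SRM implies PI and IC), I would just unpack the SRM formula $\vec F(r|r_1,\ldots,r_t) = \vec S(r)-\vec S(r_t)$. PI follows from telescoping: $\vec F(r'|r_1,\ldots,r_t) + \vec F(r|r_1,\ldots,r_t,r') = (\vec S(r')-\vec S(r_t)) + (\vec S(r)-\vec S(r')) = \vec S(r)-\vec S(r_t)$. IC follows because the $\vec S(r_t)$ term is an additive constant independent of the new report, so $\argmax_r \E_p[S(r,Y) - S(r_t,Y)] = \argmax_r \E_p S(r,Y) = \Gamma(p)$ by properness of $S$.

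For the harder direction, I would fix the initial market state $r_0$ (supplied by the mechanism's setup) and \emph{define} the scoring rule candidate by $\vec S(r) \defeq \vec F(r \mid r_0)$. A quick preliminary observation: specializing PI to the case $r=r'$ yields $\vec F(r \mid r_1,\ldots,r_t,r) = 0$, so in particular $\vec S(r_0) = \vec F(r_0\mid r_0) = 0$. The core claim is that for every history $r_0,r_1,\ldots,r_t$,
\[
\vec F(r \mid r_0,r_1,\ldots,r_t) \;=\; \vec S(r) - \vec S(r_t),
\]
which I would establish by induction on $t$. The base case $t=0$ is immediate from the definition of $S$ together with $\vec S(r_0)=0$. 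For the inductive step, PI applied to the history $(r_0,\ldots,r_{t-1})$ and reports $r,r_t$ gives $\vec F(r \mid r_0,\ldots,r_{t-1}) = \vec F(r_t \mid r_0,\ldots,r_{t-1}) + \vec F(r \mid r_0,\ldots,r_t)$, and solving for the last term, then invoking the induction hypothesis twice, yields $\vec F(r\mid r_0,\ldots,r_t) = (\vec S(r)-\vec S(r_{t-1})) - (\vec S(r_t)-\vec S(r_{t-1})) = \vec S(r)-\vec S(r_t)$. This is exactly the SRM formula. To finish, I would invoke IC at $t=0$: $\Gamma(p) = \argmax_r \E_p F(r,Y\mid r_0) = \argmax_r \E_p S(r,Y)$, so $S$ elicits $\Gamma$.

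The main obstacle, if there is one, is conceptual rather than technical: we must be comfortable taking the mechanism's ``initial state'' as a distinguished anchor and showing that PI alone forces arbitrary trader-facing contracts to collapse into a history-free, telescoping form. Once that induction is set up, PI does all the structural work and IC contributes only the last line guaranteeing properness of $S$. The forward direction is pure algebra, so I expect no real difficulty there.
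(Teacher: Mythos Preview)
Your proposal is correct and follows essentially the same approach as the paper: both directions are argued exactly as you outline, with the reverse direction defining $S$ from the mechanism's initial contracts and then using PI via induction on history length to force the telescoping SRM form, with IC supplying properness at the end. The only cosmetic difference is that the paper anchors $S$ at the empty history, $\vec S(r) = \vec F(r\mid\emptyset)$, rather than at $r_0$, so it does not need your preliminary observation that $\vec S(r_0)=0$; otherwise the inductive computation is line-for-line the same.
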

\begin{proof}
  Let scoring rule $S$ eliciting $\Gamma$ be given.  By standard arguments~\cite{hanson2003combinatorial,lambert2008eliciting}, the corresponding SRM satisfies IC: the agent's expected payoff is $\E_p F(r_t,Y|r_1,\ldots,r_{t-1}) = \E_p S(r_t,Y) - \E_p S(r_{t-1},Y)$, which is maximized at $r\in\Gamma(p)$ as $S$ elicits $\Gamma$ and the second term is a constant that does not depend on her report.
  This SRM also satisfies PI by the telescoping nature of the payments: $\vec F(r'|r_1,\ldots,r_t) + \vec F(r|r_1,\ldots,r_t,r') = \vec S(r') - \vec S(r_t) + \vec S(r) - \vec S(r') = \vec S(r) - \vec S(r_t) = \vec F(r|r_1,\ldots,r_t)$.  (Recall that we defined $\vec S:\R\to\reals^\Y$ by $\vec S(r)_y = S(r,y)$.)

  For the converse, let $F$ be given, and define $\vec S(r) = \vec F(r|\emptyset)$, where $\emptyset$ denotes an empty.  Applying PI for $t=1$ we have, $\vec F(r | r_1) = \vec F(r | \emptyset) - \vec F(r_1 | \emptyset) = \vec S(r) - \vec S(r_1)$.
  Take this as the base case for an induction on $t$.
  For any $t > 1$, PI gives us
  \begin{align*}
    \vec F(r | r_1,\ldots,r_t) &= \vec F(r | r_1,\ldots r_{t-1}) - \vec F(r_t | r_1,\ldots, r_{t-1})  \\
                               &= \vec S(r) - \vec S(r_{t-1}) ~ - ~ \left[ \vec S(r_t) - \vec S(r_{t-1}) \right]  \;\;
                               = \;\; \vec S(r) - \vec S(r_t)~.
  \end{align*}
  Finally, by definition of $S$ and because $F$ satisfies IC for $\Gamma$, $S$ elicits $\Gamma$. 
\end{proof}

\section{Mode and Finite-Property Markets} \label{app:finite}
\begin{theorem*}[Theorem \ref{thm:finite-markets}]
  Any market for any finite property satisfies Axiom \ref{ax:wcl} (WCL), but not Axioms \ref{ax:tn} (TN) or \ref{ax:sf} (BTB).
\end{theorem*}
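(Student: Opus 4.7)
The setting is a finite outcome space $\Y$ and a finite report space $\R$ (the latter because $\Gamma$ has finite range), and finiteness is what drives all three conclusions. For WCL the bound is immediate: since $S : \R \times \Y \to \reals$ takes finitely many values, we may take $B = \max_{r,y} S(r,y) - \min_y S(r_0,y)$, which bounds $\sup \vec F(r|r_0)$ uniformly over $r$.

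For the failure of TN, the plan is to invoke Lemma~\ref{lem:d-group}. If TN held, then $\Dc$ would be an additive subgroup of $\reals^\Y$. But $\Hc = \Srange / \ones$ has at most $|\R|$ elements, so $\Dc = \Hc - \Hc$ is a finite additive subgroup of $\reals^\Y$ and must therefore equal $\{0\}$. This forces $\vec S(r) = \vec S(r_0) + c_r \ones$ for some scalars $c_r$, so the ranking of expected scores is $p$-independent. Any non-maximal report then has empty $\Gamma$-preimage, violating non-redundancy; the degenerate case where all $c_r$ coincide gives $\Gamma^{-1}(r) = \Gamma^{-1}(r')$ for every pair $r \neq r'$, again contradicting non-redundancy.

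For the failure of BTB, I combine ARB (Proposition~\ref{prop:arb}) with the finiteness of $\R$. Fix any $r \in \R$ and $p$ with $r \notin \Gamma(p)$, which exist by non-redundancy whenever $|\R| \geq 2$, and set $A = \{r' \in \R : \E_p F(r',Y|r) > 0\}$; $A$ is nonempty by IC. BTB applied with $\epsilon \to 0$, together with ARB, forces $\sup_{r' \in A} \inf \vec F(r'|r) = 0$, and since $A$ is finite this supremum is attained at some $r^* \in A$ with $\inf \vec F(r^*|r) = 0$, i.e.\ $S(r^*,y) \geq S(r,y)$ pointwise. Then for any $q$ with $r \in \Gamma(q)$, IC forces equality of expected scores and hence $r^* \in \Gamma(q)$, so $\Gamma^{-1}(r) \subseteq \Gamma^{-1}(r^*)$ with $r \neq r^*$ (since $\E_p F(r^*|r) > 0$), a final contradiction with non-redundancy.

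The main hurdle is the TN step: one must recognize that the subgroup structure from Lemma~\ref{lem:d-group}, which allows rich behavior when $\R$ is infinite (as in the LMSR discretization of Figure~\ref{fig:discretized}), is incompatible with $|\R| < \infty$. The WCL and BTB arguments are then pigeonhole consequences of combining finiteness of $\R$ with ARB and non-redundancy.
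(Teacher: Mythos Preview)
Your proof is correct and matches the paper's outline closely for WCL and TN: both argue WCL from finiteness of the set of contract values, and both derive the failure of TN from Lemma~\ref{lem:d-group} together with the observation that a finite additive subgroup of $\reals^\Y$ must be $\{0\}$ (the paper phrases this as ``they must be an infinite set or consist only of the zero vector''). For BTB you take a somewhat different route: the paper simply picks $\epsilon$ below the smallest liability among the finitely many contracts, whereas you argue by contradiction, combining ARB with finiteness of the set $A$ to produce an $r^*$ with $\inf \vec F(r^*|r) = 0$ and then invoking non-redundancy. Your version is in fact tighter, since the paper's one-line argument does not explicitly dispose of the boundary case $\inf d = 0$ with $d \neq 0$, which is exactly what your non-redundancy step handles. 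One small remark: you assume $\Y$ is finite, which the theorem statement does not say, but this (or some boundedness of $S$) is needed for the WCL claim and is implicit in the paper's own proof as well.
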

\begin{proof}
  There is a finite set of possible contracts $D = \{F(r|r') : r,r' \in \R\}$.
  This implies WCL, since by the telescoping property of payoffs, the mechanism's loss is some contract, hence bounded.
  It also implies violation of BTB, for the case where a trader's budget is less than $\inf_{d \in D} \inf d$.

  Lemma \ref{lem:d-group} shows that the set of contracts must be closed under addition in order to satisfy TN.
  In particular, they must be an infinite set (or consist only of the zero vector, which will not elicit anything).
  So finite property markets cannot satisfy TN.
\end{proof}

\section{Mean and Expectation Markets} \label{app:expectation}

We note the characterization of scoring rules for the expectation, which has been established with varying degrees of generality \cite{savage1971elicitation,abernethy2012characterization,frongillo2015vector-valued}:
\begin{theorem}
  \label{thm:mean-sr-char}
  A scoring rule $S$ elicits $\Gamma : p \mapsto \E_p[\phi(Y)]$ if and only if $S(r,y) = G(r)+dG_r\cdot(\phi(y)-r) + f(y)$ for some $G$ strictly convex with subgradients $\{dG_r\}_{r\in\R}$, and $f$ an arbitrary $\P$-measurable function.\footnote{Typically scores are allowed to take on values in $\reals\cup\{\infty\}$, essentially to accommodate the log scoring rule, but we will typically restrict to the relative interior of the domain anyhow, thus avoiding this issue.}
\end{theorem}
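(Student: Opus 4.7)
The plan is to handle the two implications separately. The \emph{if} direction is a direct computation: fix $p\in\P$ with $\Gamma(p)=x$ and evaluate
\[
  \E_p S(r,Y) \;=\; G(r) + dG_r\cdot(x-r) + \E_p f(Y).
\]
Since $G$ is strictly convex with $dG_r\in\partial G(r)$, the strict subgradient inequality gives $G(x) > G(r) + dG_r\cdot(x-r)$ whenever $r\neq x$. Hence $\E_p S(x,Y) = G(x) + \E_p f(Y)$ strictly exceeds $\E_p S(r,Y)$ for every $r\neq x$, so $x=\Gamma(p)$ is the unique argmax, as required.

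For the \emph{only if} direction, the plan is to reconstruct $G$ as the upper envelope of the expected scores and then read off the claimed representation. First I would fix a reference report $r_0\in\R$ and replace $S$ by $S(r,y) - S(r_0,y)$, absorbing the difference into the freely-chosen $f(y)$ term. Next I define the candidate convex potential by $G(x) := \E_p S(x,Y)$ for any $p\in\P$ with $\E_p\phi(Y) = x$, which by IC equals $\sup_{r\in\R}\E_p S(r,Y)$. Assuming well-definedness (see below), each $r\in\R$ gives rise to an affine-in-$x$ lower bound $x\mapsto \E_p S(r,Y)$, so $G$ is a pointwise supremum of affine functions, hence convex, and strict convexity follows from uniqueness of the IC optimum. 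Taking $dG_r$ to be the gradient (or subgradient) of the supporting affine piece at $x=r$, the identity $\E_p S(r,Y) = G(r) + dG_r\cdot(x-r)$ holds for every $p$ with mean $x$, and since this must match the actual $\E_p S(r,Y)$ distribution-by-distribution we can read off $S(r,y) = G(r) + dG_r\cdot(\phi(y)-r) + f(y)$ pointwise in $y$.

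The main obstacle is the well-definedness of $G$: one must show that $p\mapsto \E_p S(r,Y)$ factors through the mean $\E_p\phi$ (up to an additive $p$-dependent constant that can be folded into $f$). This is where the richness of $\P$ enters, via the assumption that $\R = \Gamma(\P)$ is full-dimensional. The classical argument, made precise in \citep{abernethy2012characterization,frongillo2015vector-valued}, proceeds by picking distinct $p_1,p_2\in\P$ with $\E_{p_i}\phi = x$, forming mixtures $p_\lambda = \lambda p_1 + (1-\lambda)p_2$, and using the fact that $x$ must be an IC-optimal report for every $p_\lambda$ together with the linearity of $\E_p S(r,Y)$ in $p$. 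Combining this factorization with convex-analytic machinery (e.g.\ \cite[Thm.~E.1.4.1]{urruty2001fundamentals}, which relates subgradients of $G$ to its conjugate) yields the representation globally. The remaining details -- verifying measurability of $f$, handling the possible non-uniqueness of the subgradient selection $dG_r$, and extending to distributions with infinite support -- are routine once the factorization is in hand.
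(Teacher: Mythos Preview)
The paper does not supply its own proof of this theorem; it is stated as a known characterization and attributed to \cite{savage1971elicitation,abernethy2012characterization,frongillo2015vector-valued}. So there is no in-paper argument to compare against. Your sketch follows the standard route taken in those references: the \emph{if} direction via the strict subgradient inequality, and the \emph{only if} direction by defining $G$ as the pointwise supremum of expected scores and reading off the affine representation. You also correctly identify the one nontrivial step---showing that $p\mapsto \E_p S(r,Y)$ factors through $\E_p\phi$ (after normalizing by $S(r_0,\cdot)$)---and you defer it to exactly the sources the paper cites. In that sense your proposal is aligned with the paper's treatment, which is simply to quote the result.

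One small caution on your sketch of the factorization: the mixture argument you outline (taking $p_1,p_2$ with the same mean and noting $x$ is optimal for every $p_\lambda$) by itself only gives that the \emph{optimal} expected score is affine along the segment, not that $\E_p S(r,Y)$ is affine in $p$ for each fixed $r$ modulo a function of $y$. The full argument in \cite{abernethy2012characterization,frongillo2015vector-valued} uses the richness of $\P$ more carefully---essentially varying $p$ over enough directions to pin down the $y$-dependence of $S(r,\cdot)$ as affine in $\phi(y)$. Your parenthetical ``up to an additive $p$-dependent constant that can be folded into $f$'' is where this gets delicate, since a $p$-dependent constant is not the same as a $y$-dependent function; the passage from one to the other is exactly the content of the factorization lemma. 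This is not a gap so much as a place where ``routine'' is doing real work.
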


\begin{proposition*}[Proposition \ref{prop:msr-expectation-axioms}]
  The linear property $\Gamma(p) = \E_p \phi(Y)$ has an expectation market satisfying WCL if and only if its domain $\conv(\phi(\Y))$ is bounded, in which case the market defined by any bounded $G$ satisfies WCL.
\end{proposition*}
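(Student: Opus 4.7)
The plan is to invoke Theorem~\ref{thm:mean-sr-char} to parametrize every expectation-eliciting scoring rule as $S(r,y) = G(r) + dG_r \cdot (\phi(y) - r)$ with $G$ strictly convex (the $f(y)$ term cancels in $F(r,y\mid r_0)$ and is irrelevant). A direct calculation gives
\[
F(r, y \mid r_0) \;=\; C(r) + (dG_r - dG_{r_0}) \cdot \phi(y),
\]
where $C(r)$ depends only on $r$. Since $\sup_{y\in\Y} w\cdot \phi(y) = \sup_{x \in K} w \cdot x$ for any $w \in \reals^k$ (where $K := \conv(\phi(\Y))$), WCL reduces to the condition that, for every $r \in \R$, the linear functional $w_r := dG_r - dG_{r_0}$ is bounded above on $K$, uniformly in $r$.

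For sufficiency ($\Leftarrow$), if $K$ is bounded then every linear functional is bounded on $K$. For a bounded $G$, I would bound $S(r,y) \leq G(\phi(y)) \leq \sup_{\R} G$ using the subgradient inequality (valid since $\phi(y) \in K \subseteq \R$), while $S(r_0, y)$ is an affine function of $\phi(y) \in K$ and hence bounded on $\Y$. Subtracting gives a uniform bound on $\sup_y F(r,y\mid r_0)$.

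For necessity ($\Rightarrow$), I would argue the contrapositive: if $K$ is unbounded then no choice of $G$ gives WCL. Taking a sequence $x_n \in K$ with $\|x_n\|\to \infty$ and $x_n/\|x_n\| \to u$ for some unit vector $u$, one has $\sup_{x \in K} w\cdot x = +\infty$ for any $w$ with $w\cdot u > 0$. To produce such a $w$ of the form $dG_r - dG_{r_0}$, I would use the full-dimensionality of $\R$: WLOG $r_0 \in \interior(\R)$ (shifting the initial state only changes the WCL bound by a constant), so $r := r_0 + \epsilon u \in \R$ for small $\epsilon>0$. Strict monotonicity of $\partial G$ then yields $(dG_r - dG_{r_0})\cdot u = \epsilon^{-1}(dG_r - dG_{r_0})\cdot(r - r_0) > 0$, giving $\sup_y F(r, y \mid r_0) = +\infty$ and violating WCL.

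The main obstacle is the necessity direction: aligning a subgradient difference of $G$ with a direction in which $K$ extends to infinity. This step uses two ingredients in tandem, namely full-dimensionality of $\R$ (so $r_0$ can be perturbed in the prescribed direction $u$ and remain in $\R$) and the strict monotonicity of $\partial G$ guaranteed by strict convexity (so the perturbation changes the subgradient with a positive component along $u$). The remaining steps are routine convex-analytic bookkeeping.
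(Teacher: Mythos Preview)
Your approach is essentially identical to the paper's: it too splits into a sufficiency lemma (bounded $K$ plus bounded $G$ yields WCL via the subgradient bound $S(r,y)\le G(\phi(y))$, together with boundedness of the affine map $y\mapsto S(r_0,y)$) and a necessity lemma (extract a limiting direction $u$ from an unbounded sequence in $K$ by compactness of the sphere, perturb the initial state to $r_0+\epsilon u$, and invoke strict monotonicity of subgradients to get $(dG_{r_0+\epsilon u}-dG_{r_0})\cdot u>0$). One caveat: your justification for assuming $r_0\in\interior(\R)$---that shifting the initial state changes the WCL bound only by a constant---does not hold, since the shift $S(r_0,y)-S(r_0',y)$ is affine in $\phi(y)$ and hence generally unbounded over $y$ precisely when $K$ is unbounded. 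The paper does not justify this reduction either; it simply states the interior hypothesis explicitly in its Lemma~\ref{lem:msr-domain-wcl-btb}, so you should do the same rather than claim it is without loss of generality.
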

\begin{proof}
  Lemma \ref{lem:msr-g-bound-wcl} proves that on a bounded domain, WCL is equivalent to boundedness of $G$.
  (Some bounded convex $G$ always exists for a bounded domain.)
  Lemma \ref{lem:msr-domain-wcl-btb} proves that WCL cannot be satisfied on an unbounded domain.
\end{proof}

\begin{lemma} \label{lem:msr-g-bound-wcl}
  On a bounded domain $\conv(\phi(\Y))$, an expectation market defined by $G$ satisfies WCL if and only if $G$ is bounded.
\end{lemma}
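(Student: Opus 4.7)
The plan is to work directly with the explicit form of the SRM payoff given by Theorem~\ref{thm:mean-sr-char}. Writing $X \defeq \conv(\phi(\Y))$, we have
\[
F(r|r_0)(y) = G(r)-G(r_0) + dG_r\cdot(\phi(y)-r) - dG_{r_0}\cdot(\phi(y)-r_0),
\]
which is affine in $\phi(y)$. Since the supremum of an affine function over a set equals the supremum over its convex hull, $\sup_y F(r|r_0)(y)$ equals the supremum of the same expression over $x \in X$. The proof then hinges on two consequences of convexity and subgradient duality, one per direction.

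For the ``if'' direction, suppose $G$ is bounded above on $X$ by $M$. The subgradient inequality $G(r)+dG_r\cdot(x-r)\leq G(x)\leq M$ bounds the first part of the affine expression uniformly in $r$, while $G(r_0)+dG_{r_0}\cdot(x-r_0)$ is a fixed affine function whose infimum over the bounded set $X$ is some finite $m$. This yields $\sup_y F(r|r_0)(y) \leq M - m$ for every $r\in\R$, establishing WCL.

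For the ``only if'' direction, assume WCL with bound $B$. Since $\R = \Gamma(\P)\subseteq X$ (expectations lie in the convex hull of $\phi(\Y)$), for any $r\in\R$ we may plug $x=r$ into the supremum and obtain
\[
\sup_y F(r|r_0)(y) \;\geq\; G(r)-G(r_0)-dG_{r_0}\cdot(r-r_0) \;=\; D_G(r,r_0),
\]
the Bregman divergence. Combined with WCL, this gives $G(r)\leq G(r_0)+dG_{r_0}\cdot(r-r_0)+B$, an affine function bounded above on the bounded set $X$, so $G$ is bounded above on $\R$. A matching lower bound on $G$ follows from convexity: $G(r)\geq G(r_0)+dG_{r_0}\cdot(r-r_0)$, which is again affine and bounded on bounded $X$. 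Hence $G$ is bounded.

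The main subtlety I anticipate is the bookkeeping between $\R$, $X$, and $\dom(G)$: one needs $\R\subseteq X$ (used in the ``only if'' direction when evaluating at $x=r$) and $X\subseteq\dom(G)$ (so that the subgradient inequality $G(r)+dG_r\cdot(x-r)\leq G(x)$ applies on all of $X$). Under the standing assumption that $\R=\Gamma(\P)$ is full-dimensional and that $G$ is defined on (at least) $\conv(\phi(\Y))$, both inclusions hold, and the reduction from $\sup_y$ over $\Y$ to $\sup_x$ over $X$ via convex-hull/affinity is routine.
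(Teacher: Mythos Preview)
Your proposal is correct and, at its core, uses the same idea as the paper: the worst-case market loss is controlled by $\sup G$ because the affine tangents $x\mapsto G(r)+dG_r\cdot(x-r)$ sit below $G$. The paper packages this slightly differently---it observes that for each $y$ the supremum over reports is attained at $r'=\phi(y)$ by properness, giving $\sup_{r',y}S(r',y)=\sup_y G(\phi(y))$, so WCL is bounded iff $G$ is---whereas you unpack the same fact via the subgradient inequality and the Bregman divergence $D_G(r,r_0)$. Your route is a touch more explicit about the convex-analytic bookkeeping (and correctly flags the $\R\subseteq X$ and $X\subseteq\dom G$ inclusions), while the paper's route is a line shorter by leaning on the ``optimal expected score equals $G$'' interpretation; substantively they coincide.
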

\begin{proof}
  
  Let the initial market state be any $r$ where $d_G(r)$ is finite, such as any $r$ in the interior of the domain.\footnote{This technical condition only rules out boundary such as the $\log$ scoring rule with initial prediction $p=0$.}
  We show that for this fixed $r$, worst-case loss is bounded by a constant if and only if $G$ is bounded.

  Recall from the characterization that $S(r,y) = G(r) + d_G(r) \cdot \phi(y)$.
  By bounded domain, $\|\phi(y)\|$ is bounded, so there exists a constant $B$ with $-B \leq S(r,y) \leq B$ for all $y$.\footnote{If $f(y) \neq 0$, then we can say $S(r,y) + f(y) \in [-B,B] + f(y)$, and we would see $f(y)$ cancel out later in the proof.}

  The loss of the market maker when the final state is $r'$ is $S(r', y) - S(r, y)$, and the worst-case loss is $WCL \defeq \sup_{r',y} \left[ S(r',y) - S(r,y\right]$.
  We have $WCL \in \left(\sup_{r',y} S(r',y)\right) \pm B$, hence it is bounded if and only if the first term is.
  For each $y$, the supremum over $r$ is achieved at $r=\phi(y)$ by properness of the scoring rule ($r$ is the optimal report for the distribution $\delta_y$), so the first term is $\sup_y S(\phi(y),y) = \sup_y G(\phi(y))$.
  So worst-case loss is bounded if and only if $G$ is.
\end{proof}

\begin{lemma} \label{lem:msr-domain-wcl-btb}
  On an unbounded domain, no expectation market satisfies WCL, assuming the initial market prediction lies in the interior of $\R$.
\end{lemma}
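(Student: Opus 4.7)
The plan is to exhibit, for any choice of $G$ parameterizing the market, a report $r\in\R$ and a sequence of outcomes $y_n\in\Y$ such that $S(r,y_n) - S(r_0,y_n) \to +\infty$, directly violating WCL. The starting point is the characterization in Theorem~\ref{thm:mean-sr-char}, which lets me expand
\[ S(r,y) - S(r_0,y) \;=\; \bigl[G(r) - G(r_0) - dG_r\cdot r + dG_{r_0}\cdot r_0\bigr] \;+\; (dG_r - dG_{r_0})\cdot \phi(y) . \]
The bracketed term depends only on $r$, while the second term is linear in $\phi(y)$ with coefficient $dG_r - dG_{r_0}$. The strategy is to choose $r$ a small perturbation of $r_0$ so that this coefficient points in a direction along which $\phi(y)$ escapes to infinity.

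The key geometric input is that $\conv(\phi(\Y))$ being unbounded forces $\phi(\Y)$ itself to be unbounded, since the convex hull of a bounded set is bounded. Hence there exists a sequence $y_n\in\Y$ with $\|\phi(y_n)\| \to \infty$, and by compactness of the unit sphere in $\reals^k$ I may pass to a subsequence for which $\phi(y_n)/\|\phi(y_n)\|$ converges to some unit vector $u$. Using the interiority hypothesis $r_0 \in \interior(\R)$, I then pick $\epsilon>0$ small enough that $r := r_0 + \epsilon u \in \R$. By strict convexity of $G$, summing the subgradient inequalities at $r$ and $r_0$ yields $(dG_r - dG_{r_0})\cdot(r-r_0) > 0$, i.e., $(dG_r - dG_{r_0})\cdot u > 0$. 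Substituting $y_n$ into the expansion, the bracketed term is a fixed finite constant (since both subgradients are finite in the interior of the effective domain), while
\[ (dG_r - dG_{r_0})\cdot \phi(y_n) \;=\; \|\phi(y_n)\| \cdot (dG_r - dG_{r_0})\cdot \frac{\phi(y_n)}{\|\phi(y_n)\|} \;\longrightarrow\; +\infty, \]
because the scalar factor blows up and the inner product converges to a strictly positive limit. This contradicts WCL with any finite bound $B$.

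The main obstacle I expect is purely bookkeeping around subgradients: I must ensure that $dG_{r_0}$ and $dG_r$ are genuine elements of $\reals^k$ rather than unbounded affine functionals, and that $r_0 + \epsilon u$ genuinely lies in the report space $\R$ for some $\epsilon>0$. Both follow from standard convex analysis (subdifferentials are nonempty, bounded, and behave continuously at interior points of the effective domain) combined with the interiority hypothesis $r_0 \in \interior(\R)$, which gives an open ball around $r_0$ in $\R$. Beyond these routine verifications, the blow-up is a one-line computation.
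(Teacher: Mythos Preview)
Your proposal is correct and takes essentially the same approach as the paper's proof: both extract a subsequence of outcomes whose normalized images converge on the unit sphere, perturb the initial report in the limiting direction, invoke strict convexity to get $(dG_r - dG_{r_0})\cdot u > 0$, and conclude that the linear-in-$\phi(y)$ part of the score difference blows up. The only cosmetic differences are that the paper normalizes $\phi(y_i) - r$ rather than $\phi(y_n)$ and bounds the constant term via the subgradient inequality rather than simply observing it is finite.
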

\begin{proof}
  
  Let $r$ be the market's initial starting point.
  We assume that $r$ is in the interior of the convex hull of $\{\phi(y) : y \in \Y\}$, and in particular lies in an $\epsilon$-ball contained in the interior.
  The idea is to pick a ``direction'' and consider a sequence of outcomes that are farther and farther away.
  If the final market state is some distance in that direction, then loss will be unbounded.

  To formalize this, let $y_1,y_2,\dots$ be a sequence with $\|\phi(y_i) - r\|$ positive and increasing without bound.
  Such a sequence must exist by unboundedness of the domain.
  Consider the associated sequence of size-$\epsilon$ vectors $\{v_i = \epsilon \frac{\phi(y_i)-r}{\|\phi(y_i)-r\|} : i=1,2,\dots\}$.
  As a sequence in a compact set (the $\epsilon$-sphere), it has a convergent subsequence with a limit $v$.
  This is the ``direction''.
  By monotonicity of strictly convex functions, we have $\gamma \defeq v \cdot (dG_{r+v} - dG_r) > 0$.
  Let $K \subseteq \mathbb{N}$ be the indices of this subsequence.

  Now consider the following sequence of markets closing states and outcomes, indexed by $i \in K$.
  Each market $i$ has initial state $r$ and final state $r + v$.
  The outcome is $Y = y_i$.
  The initial state is valid because $r$ was fixed at the beginning of the proof, and $r + v$ is a valid final closing state because it lies on the $\epsilon$-ball around $r$, assumed to lie in the report space because $r$ was in the interior.
  The worst-case loss is at least the following quantity, which is then rearranged:
  \begin{align*}
    \text{Loss} &= S(r+v,\phi(y_i)) - S(r,\phi(y_i))  \\
                &= G(r+v) - G(r) + dG_{r+v}\cdot(\phi(y_i) - r - v) - dG_r \cdot(\phi(y_i) - r)  \\
                &= \left[G(r+v) - G(r) - dG_r\cdot (r+v-r)\right] + \left(dG_{r+v} - dG_r\right)\cdot\left(\phi(y_i) - r - v\right)  \\
                &\geq \left(dG_{r+v} - dG_r\right)\cdot\left(\phi(y_i) - r - v\right)
  \end{align*}
  using the definition of subgradient to conclude that the bracketed term is at least zero.
  Now we divide both sides by $\|\phi(y_i) - r\|$ and recall our definition of $v_i$:
  \begin{align*}
    \frac{\text{Loss}}{\|\phi(y_i) - r\|}
                &\geq  \epsilon \left[\left(dG_{r+v} - dG_r\right) \cdot v_i \right] - \left[\left(dG_{r+v} - dG_r\right) \cdot \frac{v}{\|\phi(y_i) - r\|} \right].
  \end{align*}
  The first term on the right side converges to $\epsilon \gamma$ because $v_i \to v$; the second term is equal to $\gamma/\|\phi(y_i) - r\|$ which converges to zero.
  The ratio on the left therefore either diverges or converges to some constant larger than $\epsilon \gamma$, and in either case, worst-case loss is unbounded.
\end{proof}

\begin{proposition*}[Proposition \ref{prop:msr-expectation-btb}]
  The linear property $\Gamma(p) = \E_p \phi(Y)$ has an expectation market satisfying BTB if its domain $\conv(\phi(\Y))$ is bounded, in which case the market defined by any differentiable $G$ satisfies BTB.
\end{proposition*}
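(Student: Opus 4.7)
The plan is: given initial state $r$, belief $p$ with $x := \Gamma(p) \neq r$, and $\epsilon > 0$, to choose the report $r'_t := r + t(x - r)$ for suitably small $t \in (0,1)$, and to verify both clauses of BTB. Convexity of $\R$ (the image of the convex set $\P$ under the linear map $p \mapsto \E_p \phi$) ensures $r'_t \in \R$.

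First, I would rewrite the expected payoff via the Bregman divergence $D_G(x,r) := G(x) - G(r) - \nabla G(r)\cdot(x-r)$: by Theorem~\ref{thm:mean-sr-char}, $\E_p S(r, Y) = G(r) + \nabla G(r)\cdot(x - r) = G(x) - D_G(x,r)$, so $\E_p F(r'_t, Y | r) = D_G(x, r) - D_G(x, r'_t)$. Introducing the one-variable restriction $g(t) := G(r + t(x-r))$, which is a strictly convex $C^1$ function on $[0,1]$ (since $G$ is strictly convex and differentiable and $x \neq r$), so that $g'$ is strictly increasing, a direct computation using $x - r'_t = (1-t)(x-r)$ gives $D_G(x, r'_t) = G(x) - g(t) - (1-t) g'(t)$. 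Rearranging,
\[
D_G(x, r) - D_G(x, r'_t) = \bigl[g(t) - g(0) - t\, g'(0)\bigr] + (1-t)\bigl[g'(t) - g'(0)\bigr].
\]
The first bracket is nonnegative by convexity of $g$, and the second is strictly positive for all $t \in (0,1)$ by strict monotonicity of $g'$; hence the expected payoff is strictly positive on the entire open segment.

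Next, I would bound the worst-case loss. Differentiable convex functions are automatically $C^1$ on the interior of their domain, so $G$ and $\nabla G$ are continuous. Since $\phi(\Y)$ is bounded and $r'_t \to r$, the expression
\[
F(r'_t, y | r) = G(r'_t) - G(r) + \nabla G(r'_t)\cdot(\phi(y) - r'_t) - \nabla G(r)\cdot(\phi(y) - r)
\]
tends to $0$ uniformly in $y$ as $t \to 0$. Hence $\inf_y F(r'_t, y | r) > -\epsilon$ for all sufficiently small $t$, and combined with the previous step, this $r'_t$ witnesses BTB. For the existence part of the ``if'' direction, any bounded domain admits a differentiable strictly convex $G$ (e.g.\ $G(r) = \tfrac12 \|r\|^2$).

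The main obstacle I foresee is obtaining the strict decrease of the Bregman divergence in the expected-payoff step using only differentiability of $G$, since a naive directional-derivative computation would want $\nabla^2 G(r)(x-r) \neq 0$. The one-variable restriction $g$ circumvents this: strict convexity of $G$ transfers to strict monotonicity of $g'$ directly, and that alone yields the needed strict positivity without any second-order assumption.
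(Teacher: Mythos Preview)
Your proposal is correct and follows essentially the same approach as the paper: move along the segment from the current state toward the belief, invoke strict convexity of $G$ to get strictly positive expected payoff, and use boundedness of $\phi(\Y)$ together with continuity of $\nabla G$ (from differentiability of a convex function) to make the worst-case loss vanish as $t\to 0$. Your Bregman-divergence decomposition and the one-variable restriction $g$ make the expected-payoff step more explicit than the paper's one-line appeal to ``monotonicity of strictly convex functions,'' but the underlying argument is the same.
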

\begin{proof}
  Suppose $G$ is differentiable and its domain is bounded, e.g. $\|x\| \leq B$ for all $x$.
  Let $x^0$ be the market state and consider belief $\mu$.
  By monotonicity of strictly convex functions, any trade $x' = \alpha \mu + (1-\alpha)x^0$ has strictly positive expected score.
  Meanwhile, the worst-case score for any trade from $x^0$ to $x'$ is
  \begin{align*}
    &\sup_x G(x^0) - G(x') + dG_{x'}\cdot(x-x') - dG_{x^0}\cdot(x-x^0) \\
    &= \sup_x G(x^0) - G(x') + x\cdot\left(dG_{x'} - dG_{x^0}\right) - x'\cdot dG_{x'} - x^0\cdot dG_{x^0} \\
    &\leq B\|dG_{x'} - dG_{x^0}\| + O(\|x^0 - x'\|)
  \end{align*}
  where both terms can be made arbitrarily small with $\alpha \to 0$, $x' \to x^0$: $G$ is continuous, and it is a convex differentiable function so $dG$ is as well.
\end{proof}

\begin{theorem*}[Theorem \ref{thm:linear-finiteY-pn}]
  On a finite outcome space, i.e. $|\Y| < \infty$, for any linear property there exists an expectation market satisfying PN, WCL, and BTB.
\end{theorem*}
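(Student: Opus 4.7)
The plan is to exhibit a generalized cost-function-based market with unrestricted share space $\Qc = \reals^k$ and then verify all three axioms using the propositions already established. Since $|\Y| < \infty$, the polytope $\Xc \defeq \conv(\phi(\Y))$ is bounded. A clean choice is an LMSR-style cost function $C(q) = \log \sum_{y \in \Y} \exp(q \cdot \phi(y))$, whose convex conjugate $G = C^*$ is a (restricted) negative entropy on $\mathrm{int}(\Xc)$: strictly convex, differentiable, and bounded on its domain, since $x \log x$ is bounded on $[0,1]$ and $\Xc$ is compact.

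For PN, I use the cost-function structure directly: any trader portfolio $\sum_{i=1}^{m} \vec F(r_i' \mid r_i)$ collapses to the form $\alpha_0 \ones + \alpha_1 \cdot \phi$, where $\alpha_1 = \sum_i(r_i' - r_i)$ is the net share vector. At any market state $r_2$, the neutralizing report $r_2' \defeq r_2 - \alpha_1$ is always legal (since $\Qc = \reals^k$) and exactly cancels the $\phi$-dependent component of the portfolio, leaving a constant total payoff. The strict inequality demanded by PN then reduces to the standard convex-conjugate price bound (Lemma~\ref{lem:cost-function-prices}, also invoked in the converse direction of Theorem~\ref{thm:tn-ic-implies-cost-func}): the cost $C(r_2 - \alpha_1) - C(r_2)$ of buying $-\alpha_1$ shares is strictly less than their worst-case payoff $\max_y (-\alpha_1) \cdot \phi(y)$. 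For WCL and BTB, I appeal to Propositions~\ref{prop:msr-expectation-axioms} and~\ref{prop:msr-expectation-btb} respectively, using that $\Xc$ is bounded and that the chosen $G$ is simultaneously bounded and differentiable on $\mathrm{int}(\Xc)$, with the market initialized at an interior point.

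The main subtlety is that a single cost function must deliver all three axioms simultaneously, which creates a tension: WCL demands $G$ bounded, BTB demands $G$ differentiable, and PN needs the share space to be all of $\reals^k$ so that every required neutralizing trade is legal. The LMSR-style choice threads the needle because its dual $G$ is a bounded, smooth negative entropy on $\mathrm{int}(\Xc)$ while $\nabla C$ covers $\mathrm{int}(\Xc)$ over the full share space $\reals^k$, even though $\nabla G$ is unbounded near the boundary of $\Xc$. For general affinely independent $\phi$, one verifies directly that $C(q) = \log \sum_y \exp(q \cdot \phi(y))$ inherits strict convexity and differentiability on $\reals^k$, so no additional machinery is needed.
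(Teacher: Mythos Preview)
Your proposal is correct and follows essentially the same route as the paper: both exhibit the LMSR/exponential-family cost function $C(q)=\log\sum_{y}\exp(q\cdot\phi(y))$, use boundedness and differentiability of the conjugate $G$ together with Propositions~\ref{prop:msr-expectation-axioms} and~\ref{prop:msr-expectation-btb} for WCL and BTB, and obtain PN from the fact that $\nabla C(\reals^k)=\mathrm{int}(\conv(\phi(\Y)))$ via Lemma~\ref{lem:cost-function-prices}. The only cosmetic difference is that the paper packages the PN step through Theorem~\ref{thm:cost-pn} (openness $\Rightarrow$ PN), whereas you inline that same argument directly.
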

\begin{proof}
  We utilize Theorem \ref{thm:cost-pn}, which says that if a cost-function based market is \emph{open} (Definition \ref{def:open-market}), then it satisfies PN.

  The primary examples are \emph{exponential-family} markets~\citep{abernethy2014information}, where
  \[ C(q) = \log \sum_{y \in \Y} \exp(q \cdot \phi(y)) , \]
  the ``log-partition'' function.
  We recall the key facts behind the construction and refer the reader to \citet{abernethy2014information}.
  One interprets $q\cdot \phi(y)$ as a ``weight'' on $y$ and define the (exponential-family) distribution $p \in \interior(\Delta_{\Y})$ with $p_y = \frac{\exp(q \cdot \phi(y))}{\sum_{y\in\Y} \exp(q\cdot \phi(y))}$.
  One obtains
  \[ \nabla C(q) = \sum_y p_y \phi(y) \]
  so the set of gradients is exactly the interior of $\conv(\phi(\Y))$, i.e. the market is open.

  By Theorem \ref{thm:cost-pn}, this implies TN.
  The convex conjugate $G(\mu)$ is bounded (equaling the negative entropy of $p$); this implies WCL by Lemma \ref{lem:msr-g-bound-wcl}.
  Finally, it and $C$ are both strictly convex and differentiable.
  This implies BTB by Proposition \ref{prop:msr-expectation-btb}.
\end{proof}

\begin{lemma}
  \label{lem:cost-function-prices}
  Given $\phi:\Y\to\reals^k$ such that $\Xc := \conv(\phi(\Y))$ is full-dimensional in $\reals^k$, let $C:\reals^k\to\reals$ be convex with subgradients $\partial C(\reals^k) \subseteq \interior(\Xc)$.  Then
For all $q,v\in \reals^k$, $\max_{y\in\Y} v\cdot\phi(y) > C(q+v) - C(q)$.  
\end{lemma}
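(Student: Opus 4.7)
The plan is to deduce this entirely from two elementary convex-analysis facts: the subgradient inequality, and the observation that an interior point of a convex polytope cannot attain the maximum of a nontrivial linear functional over it. The first step will be to pass from a statement about $C(q+v)-C(q)$ to a statement about a single subgradient of $C$ at $q+v$; the second step will be to use the hypothesis $\partial C(\reals^k) \subseteq \interior(\Xc)$ to turn a weak inequality into a strict one. (The edge case $v=0$ is implicitly excluded, as otherwise both sides equal $0$; the statement is used only for nontrivial trades.)

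First I would pick any $g \in \partial C(q+v)$, which is nonempty since $C$ is a finite convex function on all of $\reals^k$. Applying the subgradient inequality at the point $q+v$ to the point $q$ yields
\[
C(q) \;\geq\; C(q+v) + g\cdot(q-(q+v)) \;=\; C(q+v) - g\cdot v,
\]
so $C(q+v) - C(q) \leq g\cdot v$. This reduces everything to showing $g\cdot v < \max_{y\in\Y} v\cdot\phi(y)$.

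For that strict inequality, let $M = \max_{y\in\Y} v\cdot\phi(y)$. Since $\Xc = \conv(\phi(\Y))$, the linear functional $x\mapsto v\cdot x$ attains its maximum $M$ over $\Xc$ on the face $\{x\in\Xc : v\cdot x = M\}$, which is a proper face because $v\neq 0$. Hence this face is contained in the relative boundary of $\Xc$, and therefore disjoint from $\interior(\Xc)$. By hypothesis $g \in \interior(\Xc)$, so $g\cdot v < M$. Concretely, I can make the perturbation explicit: pick $\epsilon>0$ with the closed ball $\bar B(g,\epsilon)\subseteq \Xc$, then $g' := g + (\epsilon/\|v\|)v \in \Xc$ satisfies $v\cdot g' = v\cdot g + \epsilon\|v\| \leq M$, giving $g\cdot v \leq M - \epsilon\|v\| < M$. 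Combining the two steps yields $C(q+v) - C(q) \leq g\cdot v < M$, as desired.

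I do not anticipate a real obstacle here — both ingredients are standard. The only subtle point is being careful that $\partial C$ is nonempty at the relevant point (automatic because $C$ is real-valued on all of $\reals^k$), and that the interior-versus-boundary argument for the strict inequality requires $v\neq 0$, which is the degenerate case one should mention.
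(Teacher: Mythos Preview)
Your proof is correct, and it takes a genuinely different route from the paper's at the key step. Both arguments share the second ingredient---that a point of $\interior(\Xc)$ cannot maximize a nonzero linear functional over $\Xc$, so $g\cdot v < \max_{y\in\Y} v\cdot\phi(y)$ for any $g\in\partial C(\reals^k)$ and $v\neq 0$. They differ in the first ingredient: the paper writes $C(q+v)-C(q)=\int_0^1 dC(q+tv)\cdot v\,dt$ via a line-integral representation (invoking monotonicity of $t\mapsto dC(q+tv)\cdot v$ to justify integrability, citing an external result), and then bounds the integrand pointwise by $\max_{y\in\Y} v\cdot\phi(y)$. You instead apply the subgradient inequality once at $q+v$ to get $C(q+v)-C(q)\leq g\cdot v$ directly. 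Your route is more elementary---no integral representation or auxiliary citation is needed---while the paper's version yields the slightly stronger intermediate fact that \emph{every} directional subgradient along the segment is strictly below the max, which is not actually needed for the lemma. Your explicit handling of the degenerate case $v=0$ is also a point the paper's proof leaves implicit.
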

\begin{proof}
  As $\partial C(q') \subseteq \interior(\Xc)$ for all $q'\in\reals^k$, in particular $dC(q')\cdot v < \max_{x\in \Xc} x\cdot v = \max_{y\in\Y} v\cdot\phi(y)$~\cite[Prop A.2.4.6]{urruty2001fundamentals}, where $dC$ is a selection of subgradients of $C$.  By~\cite[Thm B.4]{frongillo2014general}, the function $t\mapsto dC(q+tv)\cdot v$ is monotone and therefore integrable, and thus $C(q+v) - C(q) = \int_{t=0}^1 dC(q+tv)\cdot v \,dt
  < \max_{y\in\Y} v\cdot\phi(y)$.
\end{proof}

\begin{theorem} \label{thm:cost-pn}
  Cost-function-based markets satisfy TN if and only if they are open.  Moreover, if they satisfy TN, the satisfy PN.
\end{theorem}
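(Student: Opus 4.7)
The plan is to prove the stronger conclusion PN directly in the forward direction (TN then follows by specializing to a single-element portfolio, and the ``Moreover'' clause of the theorem is covered simultaneously), and to prove the converse by exhibiting a trade that cannot be neutralized whenever openness fails.

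For the forward direction, suppose the market defined by $C$ and $\phi$ is open.  Given any portfolio of trades $r_i\to r_i'$ for $i=1,\ldots,m$ and any current state $r$, let $v = \sum_i (r_i'-r_i)$ be the portfolio's net share position, and propose the neutralizing trade $r \to r-v$.  Summing the $m+1$ payoffs and using the cost-function form of $F$, the terms linear in $\phi(y)$ cancel telescopically, leaving a scalar $c\in\reals$ independent of $y$.  After simplification, the strict inequality demanded by PN reduces to $C(r-v)-C(r) < \max_{y} (-v)\cdot\phi(y)$, which is exactly the conclusion of Lemma~\ref{lem:cost-function-prices} applied at $q=r$ in direction $-v$; its hypothesis $\partial C(\reals^k) \subseteq \interior(\conv(\phi(\Y)))$ is precisely what openness provides (since openness includes differentiability, so subgradients are singletons equal to $\nabla C$, whose range is $\interior(\conv(\phi(\Y)))$).

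For the converse, assume TN holds and suppose toward a contradiction that the market is not open.  I first claim that failure of openness produces some $q\in\reals^k$, a subgradient $x\in\partial C(q)$, and a direction $v\in\reals^k$ with $x\cdot v \geq \max_{y} v\cdot\phi(y)$: if $C$ is non-differentiable at some $q$, the segment between two distinct subgradients yields such a $v$; and if instead $\nabla C(q)\notin \interior(\conv(\phi(\Y)))$, a supporting hyperplane of the convex hull produces $v$.  Now instantiate TN with $r_1=q$, $r_1'=q-v$, and $r_2=q$.  For the combined payoff to be constant in $y$, the neutralizing trade is forced to be $r_2'=q+v$, and the resulting constant equals $2C(q)-C(q-v)-C(q+v)$.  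The subgradient inequality gives $C(q+v)-C(q)\geq x\cdot v\geq \max_y v\cdot\phi(y)$; combining with a short algebraic comparison to $\inf \vec F(q-v\mid q)$, the constant fails the strict inequality required by TN, the desired contradiction.

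The forward direction is essentially bookkeeping once Lemma~\ref{lem:cost-function-prices} is in hand, and PN yields TN for free, so that entire half is routine.  The subtle step is the converse: cleanly producing the triple $(q,v,x)$ that witnesses failure of openness.  This requires treating two qualitatively different failure modes (non-differentiability of $C$, versus $\nabla C$ missing part of $\interior(\conv(\phi(\Y)))$), and each case needs a supporting- or separating-hyperplane argument to guarantee $x\cdot v\geq \max_y v\cdot\phi(y)$.  I expect this reduction to be the only real source of casework in the proof.
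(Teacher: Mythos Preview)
Your forward direction matches the paper's: reduce to Lemma~\ref{lem:cost-function-prices} and prove PN directly, whence TN and the ``Moreover'' clause. Your converse for the case where a (sub)gradient lies outside $\interior(\conv(\phi(\Y)))$ is also correct, and in fact cleaner than the paper's: you invoke the one-line subgradient inequality $C(q+v)-C(q)\ge x\cdot v\ge\max_y v\cdot\phi(y)$, whereas the paper integrates $\nabla C(q+tv)\cdot v$ over $t\in[0,1]$ and argues the integrand is pinned at the maximum throughout.

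The gap is the non-differentiability case. You assert that two distinct subgradients at $q$ produce a direction $v$ and some $x\in\partial C(q)$ with $x\cdot v\ge\max_y v\cdot\phi(y)$, but this is false in general: if every subgradient of $C$ already lies in $\interior(\conv(\phi(\Y)))$, then $x\cdot v<\max_y v\cdot\phi(y)$ for every such $x$ and every $v\ne 0$, regardless of how many subgradients there are at $q$. Worse, Lemma~\ref{lem:cost-function-prices} applies to such a $C$ (its hypothesis is $\partial C(\reals^k)\subseteq\interior(\Xc)$, not differentiability), so that market \emph{does} satisfy TN while failing to be open---the converse you are attempting is simply false at that level of generality. (The same obstruction arises in a failure mode your case split does not cover: $C$ differentiable but $\{\nabla C(q)\}$ a strict subset of $\interior(\conv(\phi(\Y)))$.) The paper sidesteps all of this by taking, as standing hypotheses on cost-function markets, that $C$ is differentiable with $\clo\{\nabla C(q):q\in\reals^k\}=\conv(\phi(\Y))$; under those, failure of openness collapses to the single boundary case your supporting-hyperplane argument already handles. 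Adopt the same assumptions and drop the non-differentiability branch.
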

\begin{proof}
  Suppose a cost-function-based market is open.
  We will show that it satisfies not only TN but PN.
  Consider a non-empty set of bundles purchased $v_i = q_i'-q_i$ for $1\leq i\leq m$, for a total cost of $c = \sum_i C(q_i')-C(q_i)$, and let $v = \sum_{i=1}^m v_i$ be their sum.  Clearly, to neutralize this position from the current market state $q$, the trader must sell $v$, for a cost of $C(q-v)-C(q)$.  After this trade, the trader's total contract is $c\ones - (C(q-v)-C(q))\ones$, so to establish PN, we need only show $c - C(q-v) + C(q) > \inf_{y\in\Y} [c + (-v)\cdot\phi(y)]$.  Subtracting $c$ from both sides and then negating, the rest follows from observing $\sup=\max$ as $\Y$ is finite, and applying Lemma~\ref{lem:cost-function-prices}.

  Now suppose a cost-function-based market satisfies TN; we will show it is open.
  By assumption, such a market has a differentiable $C$ with $\clo\left(\{\nabla C(q) : q \in \reals^k\}\right) = \conv(\phi(\Y)) =: \Xc$.
  Suppose the market is not open; this implies that $\nabla C(q)$ lies on the boundary of $\Xc$ for some $q$.
  As $\Xc$ is a convex polytope, $\nabla C(q)$ must lie on an exposed face of $\Xc$, so let $v\in\reals^k$ be a direction exposing that face, meaning $\max_{x\in\Xc} x\cdot v = \nabla C(q)\cdot v$~\cite[Sec A.2.4]{urruty2001fundamentals}.

  Now suppose the trader buys the bundle $(-v)$ at state $q$, and the market state has returned to $q$, which corresponds to $q_1=q$, $q_1'=q-v$, $q_2=q$.
  To satisfy TN, the trader must now purchase $v$, but we must additionally have $\inf[(C(q) - C(q-v))\ones + (C(q) - C(q+v))\ones] > \inf[(C(q)-C(q-v))\ones + (-v)\cdot \phi]$, which is equivalent to $C(q+v)-C(q) < \max_{y\in\Y} v\cdot\phi(y)$.
  By weak monotonicity~\cite[Thm 24.9]{rockafellar1997convex}, the map $t\mapsto \nabla C(q+tv)\cdot v$ is monotone increasing in $t$, and thus as $\nabla C(q+tv) \in \Xc$ and $\nabla C(q)\cdot v = \max_{x\in\Xc} x\cdot v$, we have $\nabla C(q+tv)\cdot v = \nabla C(q)\cdot v$ for all $t\geq 0$.
  But now we have $C(q+v) - C(q) = \int_{t=0}^1 \nabla C(q+tv) \cdot v dt = \nabla C(q) \cdot v = \max_{y\in\Y} \phi(y)\cdot v$, so the trader's minimum payoff has not increased, violating TN.
\end{proof}

\section{Median and Quantile Markets} \label{app:quantile}
Our setting in this section is as follows.  Let $\R=\Y=\reals$, so that reports and outcomes are both the real line.  For any $\alpha \in (0,1)$, we define the $\alpha$-quantile $q_\alpha$ of probability distribution $p$ on $\reals$ by $q_\alpha(p) = \{x : \lim_{x'\uparrow x}\Pr_p[Y\leq x'] \leq \alpha \leq \Pr_p[Y\leq x]\}$, which for continuous cumulative distribution functions is equivalent to the usual notion, $\Pr_p[Y\leq q_\alpha(p)] = \alpha$.  Of course, the median is simply $q_\alpha$ for $\alpha=1/2$.

Our first task will be to identify the full class of proper scoring rules eliciting the $\alpha$ quantile.  Several authors have proved such characterizations under slightly different assumptions on the class of probability distributions.  Our focus here is not on the technical details of the precise set of probability measures allowed, and in any case, the form of the scoring rules are nearly identical in all cases.
We give the characterization of Schervish et al..

\begin{theorem}[Schervish et al. 2013]
  Let $\P$ be a set of probability distributions on $\reals$ containing every distribution supported on $\{a,b\}$ for any $a,b\in\reals$.  Then scoring rule $S:\reals\times\reals\to\reals$ elicits $q_\alpha$ if and only if $\E_p S(r,Y) < \infty$ for all $r\in\reals,p\in\P$ and
  \begin{equation}
    \label{eq:msr-quantile}
    S(r, y) - S(y,y) = (\alpha - \ones\{r \geq y\})(g(r)-g(y))  = \begin{cases}
  \alpha (g(r) - g(y)) & r < y\\
  (\alpha-1)(g(y) - g(r)) & r \geq y
\end{cases}~,
  \end{equation}
  for some strictly increasing function $g$.
\end{theorem}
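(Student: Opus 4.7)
The plan is to prove both directions, with the main work in necessity. For \textbf{sufficiency}, I would plug the stated formula into $\E_p S(r,Y)$, subtract the constant $\E_p S(Y,Y)$ (which does not depend on $r$), and compute
\[
\E_p[S(r,Y) - S(Y,Y)] = \alpha\int_{y>r}(g(r)-g(y))\,dp(y) + (\alpha-1)\int_{y\le r}(g(r)-g(y))\,dp(y).
\]
Using strict monotonicity of $g$, I would show that differentiating (or taking difference quotients) of this expression with respect to $r$ gives a sign that flips exactly at $r = q_\alpha(p)$, since the quantity reduces to $\alpha - F_p(r)$ up to a positive factor $dg$. Thus $r = q_\alpha(p)$ is the maximizer.

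For \textbf{necessity}, I would pass to $h(r,y) := S(r,y) - S(y,y)$, which has the same optimality structure as $S$ but satisfies $h(y,y) = 0$. The central step exploits two-point distributions $p_\beta$ on $\{a,b\}$ with $a<b$ and $p_\beta(\{a\}) = \beta$. The $\alpha$-quantile is $\{a\}$ for $\beta > \alpha$, $\{b\}$ for $\beta < \alpha$, and the full interval $[a,b]$ for $\beta = \alpha$. Writing out optimality of $a$ versus $b$ at the critical value $\beta = \alpha$ gives the \emph{pairwise identity} $(1-\alpha)h(a,b) = \alpha\, h(b,a)$, and comparing with neighboring $\beta$ forces the sign $h(a,b) < 0$ for $a<b$.

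The next and most delicate step is to derive a \textbf{Cauchy-type additive equation} from only these two-point distributions. The key idea is that at $\beta = \alpha$ the optimal set is the entire interval $[a,b]$, so every intermediate $r \in (a,b)$ must also be optimal. Writing $\E_{p_\alpha} h(r, Y) = \E_{p_\alpha} h(a,Y)$ gives $\alpha h(r,a) + (1-\alpha)h(r,b) = (1-\alpha) h(a,b)$. Substituting the pairwise identity $h(r,a) = \frac{1-\alpha}{\alpha}h(a,r)$ and canceling $(1-\alpha)$ yields
\[
h(a,r) + h(r,b) = h(a,b) \qquad \text{for all } a < r < b.
\]
This is the hard part: we must recover a Cauchy equation from constraints only available at the single critical probability $\beta = \alpha$, and the trick is the interval-optimality property peculiar to quantiles.

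Finally, I would solve this additive equation by picking any base point $r_0$, defining $g(y) = \frac{1}{\alpha}h(r_0, y)$ for $y > r_0$ and $g(y) = -\frac{1}{\alpha}h(y, r_0)$ for $y < r_0$ (with $g(r_0) = 0$), and verifying via case analysis on the position of $r_0$ that $h(r,y) = \alpha(g(r) - g(y))$ whenever $r < y$. The pairwise identity then automatically upgrades this to $h(r,y) = (\alpha - 1)(g(r) - g(y))$ for $r > y$, matching the stated formula. Strict monotonicity of $g$ follows immediately from the sign constraint $h(a,b) < 0$ for $a < b$ derived earlier. The finiteness condition $\E_p S(r,Y) < \infty$ is a mild regularity assumption ensuring all of the above manipulations are well defined, and needs only to be carried along rather than proved.
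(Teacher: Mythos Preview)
The paper does not prove this theorem at all: it is quoted verbatim as a result of Schervish et al.\ (2013), introduced by the sentence ``We give the characterization of Schervish et al.''\ and used as a black box to derive Lemma~\ref{lem:msr-quantile-min} and the subsequent results on quantile markets. There is therefore no proof in the paper to compare your proposal against.

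That said, your outline is a reasonable self-contained argument for the cited result. The sufficiency direction is standard; the necessity direction correctly isolates the key structural fact, namely that at a two-point distribution with weight exactly $\alpha$ on the smaller atom, the entire interval $[a,b]$ is optimal, which is what forces the additive relation $h(a,r)+h(r,b)=h(a,b)$. One small point to watch: in the sufficiency step you invoke differentiation in $r$, but $g$ is only assumed strictly increasing, so you should carry through the difference-quotient version (as you parenthetically note) rather than a derivative. Also, in constructing $g$ from $h$ you should confirm that the two branches of your definition agree, i.e.\ that the function is well defined and that the additive equation actually pins down $h$ globally from its values relative to the single base point $r_0$; this is routine case analysis but should be written out. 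None of this is needed for the paper itself, which simply imports the characterization.
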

When $\alpha=1/2$, as $r<y \iff g(r)<g(y)$, the form~\eqref{eq:msr-quantile} simplifies to $S(r,y) = -\tfrac 1 2 |g(r)-g(y)|$.
The usual absolute loss then follows as the special case when $g$ is the identity function, $g(r) = r$ (though note we must negate the loss to obtain a score).

\begin{lemma}
  \label{lem:msr-quantile-min}
  For quantile market $S(r,y) = (\alpha - \ones\{r \geq y\})(g(r)-g(y))$, we have
  $\min_y F(r',y|r) = \alpha(g(r)-g(r')$ when $r'>r$, achieved at any $y \leq r$, and $\min_y F(r',y|r) = (\alpha-1)(g(r)-g(r'))$ when $r'<r$, achieved at any $y > r$.
\end{lemma}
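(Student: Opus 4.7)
The plan is a straightforward three-region case analysis on $y$. Plugging $S(r,y)=(\alpha-\ones\{r\geq y\})(g(r)-g(y))$ into $F(r',y\mid r)=S(r',y)-S(r,y)$ and focusing first on the case $r'>r$, the real line splits naturally at the two thresholds $r$ and $r'$: on the two outer regions $y\leq r$ and $y>r'$, the indicators $\ones\{r\geq y\}$ and $\ones\{r'\geq y\}$ agree (both $1$ on the lower tail, both $0$ on the upper tail), so the $g(y)$ terms cancel upon subtraction and $F$ collapses to a constant in $y$, whose closed form drops out immediately.

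On the middle region $r<y\leq r'$, only $\ones\{r'\geq y\}$ equals $1$, so the two occurrences of $S$ use different ``slopes'' in $g$ and the $g(y)$ terms no longer cancel. Expanding, $F$ becomes affine in $g(y)$ with a strictly positive coefficient of unit magnitude; since $g$ is strictly increasing, $F$ is strictly increasing on this interval, and by continuity at $y=r$ and $y=r'$ (which one can also verify by direct substitution) it interpolates monotonically between the two outer constants.

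Comparing those two constants using $g(r')>g(r)$ together with the sign pattern $\alpha-1<0<\alpha$, the outer value on $y\leq r$ is strictly smaller than the one on $y>r'$, so the global minimum is attained precisely on the entire half-line $y\leq r$, with the claimed closed-form value. The case $r'<r$ follows by the same argument applied symmetrically: the three regions become $y\leq r'$, $r'<y\leq r$, and $y>r$; monotonicity in the middle region now runs the opposite way (the coefficient of $g(y)$ flips sign); and the minimum is attained on $y>r$.

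No step looks conceptually hard; the proof is essentially bookkeeping. The only things worth double-checking carefully are the sign of the $g(y)$-coefficient in the middle region (which governs the direction of interpolation and hence which outer region carries the minimum), and the closed-form values on the two outer regions, obtained by direct substitution into the definition of $S$.
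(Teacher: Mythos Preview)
Your proposal is correct and follows essentially the same approach as the paper: a direct case analysis on the position of $y$ relative to $r$ and $r'$, after which the two outer regions collapse to constants and the middle region is bounded between them. Your organization is marginally cleaner---fixing $r'>r$ first and then noting that the middle region is strictly monotone in $g(y)$ (coefficient $+1$) is a tidier way to see the interpolation than the paper's raw lower bound on each case---but the substance is identical.
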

\begin{proof}
  We proceed by a simple calculation.

  \begin{align*}
    F(r',y|r) &= (\alpha - \ones\{r' \geq y\})(g(r')-g(y)) - (\alpha -
    \ones\{r \geq y\})(g(r)-g(y))
    \\
    &= \begin{cases}
      \alpha (g(r') - g(y)) - \alpha(g(r)-g(y)) & r,r' < y\\
      \alpha (g(r') - g(y)) - (\alpha-1)(g(r)-g(y)) & r < y \leq r'\\
      (\alpha-1) (g(r') - g(y)) - \alpha(g(r)-g(y)) & r' < y \leq r\\
      (\alpha-1) (g(r') - g(y)) - (\alpha-1)(g(r)-g(y)) & y \leq r,r'
    \end{cases}
    \\
    &= \begin{cases}
      \alpha (g(r') - g(r)) & r,r' < y\\
      \alpha (g(r') - g(r)) + (g(r)-g(y)) & r < y \leq r'\\
      \alpha (g(r') - g(r)) + (g(y)-g(r')) & r' < y \leq r\\
      (\alpha-1) (g(r') - g(r)) & y \leq r,r'
    \end{cases}
    \\
    &\geq \begin{cases}
      \alpha (g(r') - g(r)) & r,r' < y\\
      \alpha (g(r') - g(r)) + (g(r)-g(r')) & r < y \leq r'\\
      \alpha (g(r') - g(r)) + (g(r)-g(r')) & r' < y \leq r\\
      (\alpha-1) (g(r') - g(r)) & y \leq r,r'
    \end{cases}~.
  \end{align*}
\end{proof}

\begin{theorem*}[Theorem \ref{thm:quantile-tn}]
  No SRM for any $\alpha$-quantile satisfies WN.
\end{theorem*}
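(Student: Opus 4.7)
The plan is to exhibit one explicit triple $(r_1, r_1', r_2)$ for which no candidate counter-trade $r_2 \to r_2'$ can deliver the strict improvement demanded by WN, and to do so using only the Schervish-type characterization of quantile scoring rules recalled just above the theorem. That characterization gives $S(r,y) - S(y,y) = (\alpha - \ones\{r \geq y\})(g(r) - g(y))$ for some strictly increasing $g$, so every $F(r',y|r)$ is determined by $g$ up to a $y$-dependent constant that cancels in the difference $S(r',y) - S(r,y)$.

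Fix any $r_1 < r_1'$ and any $r_2 < r_1$. Two computations drive the argument. First, from Lemma~\ref{lem:msr-quantile-min} (or a direct substitution into the formula above with both indicators equal to $1$), the initial contract is the negative constant $(\alpha-1)(g(r_1') - g(r_1))$ on the entire ray $y \leq r_1$, and this value equals $\inf_y F(r_1', y|r_1)$. Second, the counter-trade evaluated at the single point $y = r_2$ simplifies to
\begin{equation*}
F(r_2', r_2 \mid r_2) \;=\; (\alpha - \ones\{r_2' \geq r_2\})(g(r_2') - g(r_2)),
\end{equation*}
which a two-case split on the sign of $r_2' - r_2$ shows is $\leq 0$ for every $r_2' \in \R$ (strictly $<0$ unless $r_2' = r_2$), using only that $\alpha \in (0,1)$ and that $g$ is strictly increasing.

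Combining the two observations, since $r_2 \leq r_1$ lies inside the minimum region of the initial position, evaluating the sum at $y = r_2$ gives $F(r_1', r_2 \mid r_1) + F(r_2', r_2 \mid r_2) \leq \inf_y F(r_1', y \mid r_1)$, so $\inf[F(r_1'|r_1) + F(r_2'|r_2)] \leq \inf F(r_1'|r_1)$, contradicting the strict inequality of WN. This holds for every $r_2'$ and is independent of $g$ and $\alpha$, so it rules out WN for every $\alpha$-quantile SRM. There is no substantial obstacle once the right evaluation point is spotted: intuitively, the counter-trade is morally a bet about whether $y$ exceeds $r_2$, and any such bet must pay a non-positive amount at $y = r_2$ itself; placing $r_2$ strictly below $r_1$ lands this ``bad'' point in the very region where the initial trade already attains its worst case, leaving the trader no room to strictly improve.
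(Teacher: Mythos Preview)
Your proof is correct and follows the same overall strategy as the paper---pick $r_2 < r_1 < r_1'$ so that the current state sits inside the flat ``minimum region'' of the held contract, then argue no counter-trade can lift the worst case---but your execution is cleaner. The paper splits into cases on the sign of $r_2' - r_2$ and chooses a different witness outcome $y$ in each case (namely $y < r_2$ when $r_2' > r_2$, and $r_2 < y < r_1$ when $r_2' < r_2$). You instead evaluate at the single point $y = r_2$ for every $r_2'$, relying on the observation that $F(r_2', r_2 \mid r_2) = (\alpha - \ones\{r_2' \geq r_2\})(g(r_2') - g(r_2)) \leq 0$ regardless of direction. This is essentially the statement that $r_2$ itself is an optimal report for the point mass at $r_2$, so any move away from it is weakly penalized there. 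The upshot is that your argument avoids the case analysis entirely and makes the mechanism of failure more transparent: the starting state $r_2$ is simultaneously a worst-case outcome for the held position and a point at which every available counter-trade pays nonpositively.
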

\begin{proof}
  Suppose the trader had previously moved the market state from $r_1\to r_2$ and the current market state is at $r_3$, with $r_3 < r_1 < r_2$.  Clearly taking $r_4=r_3$ has no effect, so we have two cases.  If $r_4 > r_3$, then consider any $y < r_3$.  (Intuitively, this is the ``bad'' outcome as both trades moved the market state away from $y$.)  By Lemma~\ref{lem:msr-quantile-min}, we see that $F(r_2,y|r_1) = \min_{y'}F(r_2,y'|r_1)$ already for this $y$, and by the same Lemma, $F(r_4,y|r_3) = (\alpha-1)(g(r_4)-g(r_3)) < 0$, so the minimum overall payoff has decreased.  Similarly, if $r_4 < r_3$, then we take any $y$ with $r_3 < y < r_1$.  (Again, this $y$ is chosen so that both trades move the market state away from $y$.)  Then by Lemma~\ref{lem:msr-quantile-min} we have we have $F(r_2,y|r_1) = \min_{y'}F(r_2,y'|r_1)$ and $F(r_4,y|r_3) = \alpha(g(r_4)-g(r_3)) < 0$, so again we have decreased the minimum overall payoff.  As taking $r_4 = r_3$ does not change the minimum overall payoff, there is no trade which increases it, and thus the market $F$ does not satisfy WN.
\end{proof}

\begin{theorem*}[Theorem \ref{thm:quantile-wcl}]
  For all $\alpha\in(0,1)$, there is an SRM for the $\alpha$-quantile satisfying WCL.
\end{theorem*}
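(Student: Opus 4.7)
The plan is to exhibit a concrete quantile SRM and bound its worst-case loss using the Schervish et al.\ characterization already cited in the appendix, namely that every scoring rule eliciting the $\alpha$-quantile satisfies
\[
S(r,y) - S(y,y) \;=\; (\alpha - \ones\{r \geq y\})(g(r)-g(y))
\]
for some strictly increasing $g$. The key observation is that this form only depends on $g$ through the values $g(r),g(y)$, so if we choose $g$ to take values in a \emph{bounded} interval, we can hope to bound $S$ uniformly.

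First I would fix $g$ to be any strictly increasing function from $\reals$ to $(0,1)$, e.g.\ the sigmoid $g(r) = e^r/(1+e^r)$ already mentioned in the body. Since $g$ is strictly monotone, it respects quantiles, and the Schervish et al.\ form does give a proper scoring rule for $q_\alpha$. Next, I would pick an arbitrary initial state $r_0 \in \reals$ and bound $\sup_y [S(r,y) - S(r_0,y)]$ by decomposing
\[
S(r,y) - S(r_0,y) \;=\; [S(r,y) - S(y,y)] + [S(y,y) - S(r_0,y)].
\]
Because $S(y,y) = \max_{r'} S(r',y)$ (the scoring rule elicits $y$ when $Y=y$ almost surely), the first bracketed term is nonpositive. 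So
\[
\sup_y [S(r,y) - S(r_0,y)] \;\leq\; \sup_y \bigl[ S(y,y) - S(r_0,y) \bigr].
\]

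Then I would expand the right-hand side using the characterization and split on the sign of $y - r_0$: for $y \leq r_0$ it equals $(1-\alpha)(g(r_0)-g(y))$, and for $y > r_0$ it equals $\alpha(g(y)-g(r_0))$. Since $g$ takes values in $(0,1)$, each case is bounded by $\max(\alpha,1-\alpha) < 1$ uniformly in $y$ and $r$. Taking $B := \max(\alpha,1-\alpha)$ then gives $\sup \vec F(r \mid r_0) \leq B$ for all $r\in\R$, which is exactly Axiom~\ref{ax:wcl}.

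I do not anticipate a serious obstacle here; the only mild subtlety is checking that $S(y,y) \geq S(r,y)$ for all $r$, which follows directly from the two-case formula (in both $r < y$ and $r \geq y$ the expression $(\alpha - \ones\{r \geq y\})(g(r)-g(y))$ is a product of factors with opposite sign, hence nonpositive). Everything else is an explicit bound on a bounded function, so the proof is essentially "pick bounded $g$, plug in, compute."
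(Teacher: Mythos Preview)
Your proposal is correct and follows essentially the same approach as the paper: choose a bounded strictly increasing $g$ (the sigmoid) so that the quantile scoring rule is uniformly bounded, whence WCL follows. The paper simply observes that bounded $g$ makes $S$ bounded above and below, so $S(r,y)-S(r_0,y)$ is bounded; your decomposition through $S(y,y)$ yields the slightly sharper explicit constant $B=\max(\alpha,1-\alpha)$, but the core idea is identical.
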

\begin{proof}
  PI and IC are satisfied by choosing a scoring rule of the form $S(r,y) = (\alpha - \ones\{r \geq y\})(g(r)-g(y))$, as above.
  Now take $g$ to be any bounded, positive, monotone increasing function, e.g. the sigmoid function $g(r) = e^r/(1+e^r)$.
  This $S$ still elicits the $\alpha$-quantile because the payoff for every $r$ and $y$ is the same, but $S$ is now bounded above and below.
  By the telescoping property of the SRM, the worst-case loss is of the form $S(r,y) - S(r',y)$, and this is bounded for all $r,y$.
\end{proof}

\begin{theorem*}[Theorem \ref{thm:quantile-btb}]
  If distributions in $\P$ do not contain point masses, then any $\alpha$-quantile market satisfies BTB.
\end{theorem*}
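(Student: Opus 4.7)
The plan is, given a market state $r$ and belief $p$ with $q_\alpha(p)\neq r$ and a budget $\epsilon>0$, to exhibit an explicit small trade in the direction of $q_\alpha(p)$ that satisfies the BTB requirements. By symmetry I assume $q_\alpha(p)>r$, which, since $p$ has no point masses, is equivalent to $\Pr_p[Y\leq r]<\alpha$; moreover the CDF $r\mapsto \Pr_p[Y\leq r]$ is continuous.

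I will consider the candidate trade $r\to r+\delta$ for small $\delta>0$. For the worst-case loss, Lemma~\ref{lem:msr-quantile-min} directly gives $\inf \vec F(r+\delta|r) = \alpha(g(r)-g(r+\delta))$; the magnitude $\alpha(g(r+\delta)-g(r))$ tends to $0$ as $\delta\downarrow 0$ (taking $g$ right-continuous at $r$, which is without loss of generality since the monotone $g$ is continuous outside a countable set), and so can be made $<\epsilon$.

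For the expected payoff, I will compute $H(r):=\E_p S(r,Y)$ explicitly. Expanding $S(r,y) = (\alpha-\ones\{r\geq y\})(g(r)-g(y))$ gives
\[
  H(r) = g(r)\bigl(\alpha - \Pr_p[Y \leq r]\bigr) + \E_p\bigl[g(Y)\ones\{Y\leq r\}\bigr] - \alpha\,\E_p g(Y),
\]
and a short rearrangement of $H(r+\delta)-H(r)$, using $\Pr_p[Y=r]=0$, yields
\[
  H(r+\delta) - H(r) \;\geq\; \bigl(g(r+\delta)-g(r)\bigr)\bigl(\alpha - \Pr_p[Y\leq r+\delta]\bigr),
\]
after absorbing the middle-region integral via the bound $g(Y)\geq g(r)$ on $\{r<Y\leq r+\delta\}$. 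Both factors are strictly positive for all sufficiently small $\delta>0$: the first by strict monotonicity of $g$, the second by continuity of the CDF combined with $\Pr_p[Y\leq r]<\alpha$.

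Combining the two pieces, for $\delta$ small enough the trade $r\to r+\delta$ has worst-case loss below $\epsilon$ and $\E_p F(r+\delta,Y|r)>0$, giving BTB. I expect the main obstacle to be the bookkeeping in the computation of $H(r+\delta)-H(r)$, which must split the integral across the three regions $y\leq r$, $r<y\leq r+\delta$, $y>r+\delta$; the clean lower bound only emerges after careful grouping and the atomlessness of $p$.
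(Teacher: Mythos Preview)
Your proposal is correct and follows essentially the same line as the paper: move the state a small amount toward the quantile, use Lemma~\ref{lem:msr-quantile-min} to control the worst-case loss by $O(g(r{+}\delta)-g(r))$, and bound the expected gain below by $(g(r{+}\delta)-g(r))(\alpha-\Pr_p[Y\le r{+}\delta])$, which is precisely the paper's bound $\epsilon'(\alpha-b)$ obtained there by a direct two-case payoff analysis rather than your explicit computation of $H(r{+}\delta)-H(r)$. One small caveat: your parenthetical ``taking $g$ right-continuous at $r$, WLOG since monotone $g$ has only countably many discontinuities'' is not a valid justification (the state $r$ is fixed, not chosen), but the paper's proof simply invokes ``continuity of $g$'' at the same step without further comment, so this is a shared implicit assumption rather than a gap specific to your argument.
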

\begin{proof}
  Let $y$ be the current market prediction.
  Let a budget $\epsilon$ be given.
  By continuity of $g$, there is a report $r^* > y$ with $g(r^*) \leq g(y) + \epsilon$.
  Any report $r \in (y, r^*]$ has worst-case budget of at most $\epsilon \cdot \max\{\alpha, 1-\alpha\} \leq \epsilon$.

  Suppose $y$ is a $c$-quantile according to a trader's belief, assumed to be a continuous CDF, with $c < \alpha$ without loss of generality.
  Pick a report $r \in (y, r^*]$ whose quantile according to the belief is some $b \in (c, \alpha)$ (such a report must exist by continuity).
  Let $\epsilon' = g(r) - g(y)$, in $(0,\epsilon]$ by assumption.
  We can divide expected payoff into two cases.
  If $Y > r$, the net payout for the trade is $\alpha \epsilon'$.
  If $Y \leq r$, the net payout for the trade is at least $-(1-\alpha)\epsilon'$, which is the minimum possible for any $Y \leq r$.
  So the expected payout according to this belief is at least $ (1-b)\alpha \epsilon' - b (1-\alpha) \epsilon' = \epsilon' \left[ \alpha - b \right] > 0 $.
\end{proof}

\section{Expectiles} \label{app:expectiles}

Let $\R = \Y = \reals$.   The $\tau$-expectile is defined as the solution $x=\mu_\tau(p)$ to the equation $\E_p\left[|\ones_{x\geq Y}-\tau|(x-Y)\right]=0$.  Again according to Gneiting~\citep{gneiting2011making}, subject to mild conditions, a scoring rule elicits $\mu_\tau$ if and only if it takes the form $S(r,y) = -|\ones\{y\leq r\} - \tau|(g(y)-g(r)-dg_r(y-r))$ where $g:\R\to\reals$ is strictly convex with selection of subgradients $dg$.  We will write $D_g(y,r) = g(y)-g(r)-dg_r(y-r)$ for the corresponding Bregman divergence, keeping $dg$ implicit.

\begin{lemma}
  \label{lem:breg-div-quasi-convex}
  The Bregman divergence $D_g(y,x)$ is continuous and convex in $y$, and quasi-convex in $x$.
\end{lemma}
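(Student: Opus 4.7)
The lemma has three claims, and my plan is to treat each separately, since they rely on different facts about $g$.

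\textbf{Continuity and convexity in $y$.} For fixed $x$, write $D_g(y,x) = g(y) + \bigl(-g(x) + dg_x \cdot x\bigr) - dg_x \cdot y$. The parenthesized quantity does not depend on $y$, so $D_g(\cdot,x)$ is the sum of the convex function $g$ and an affine function of $y$, hence convex. Continuity is then immediate from the standard fact that a convex function on $\reals$ is continuous on the interior of its domain (and, given the setting $\R = \Y = \reals$, on all of $\reals$).

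\textbf{Quasi-convexity in $x$.} Here the goal is to show that, for fixed $y$, every sublevel set $\{x : D_g(y,x) \leq c\}$ is an interval. The natural approach is to prove the stronger statement that $x \mapsto D_g(y,x)$ is nonincreasing on $(-\infty, y]$ and nondecreasing on $[y,\infty)$, from which quasi-convexity follows immediately (any sublevel set is then of the form $[a,b]$ for some $a \leq y \leq b$, possibly $\pm\infty$). Rewriting $D_g(y,x) = g(y) - A(x)$ with $A(x) := g(x) + dg_x(y-x)$ (the affine minorant of $g$ at $y$, based at $x$), the statement is equivalent to $A$ being nondecreasing on $(-\infty,y]$ and nonincreasing on $[y,\infty)$, with maximum value $g(y)$ achieved at $x=y$.

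The key tool is the monotonicity of the subgradient selection: strict convexity of $g$ yields $dg_{x_1} \leq dg_{x_2}$ for $x_1 < x_2$. Take $x_1 < x_2 \leq y$ and apply the subgradient inequality at $x_1$ to the point $x_2$, i.e., $g(x_2) - g(x_1) \geq dg_{x_1}(x_2-x_1)$. Substituting into $A(x_2) - A(x_1)$ and simplifying, the $dg_{x_1}$ terms collapse and one is left with $A(x_2) - A(x_1) \geq (dg_{x_2} - dg_{x_1})(y - x_2) \geq 0$, as desired. For $y \leq x_1 < x_2$, the symmetric move is to apply the subgradient inequality at $x_2$ to the point $x_1$, yielding $g(x_2) - g(x_1) \leq dg_{x_2}(x_2-x_1)$; the same bookkeeping produces $A(x_2) - A(x_1) \leq (dg_{x_2} - dg_{x_1})(y - x_1) \leq 0$.

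\textbf{Main obstacle.} The only subtlety is that $g$ need not be differentiable, so one cannot simply compute $\partial_x D_g(y,x) = dg'(x)(x-y)$ and read off the monotonicity from the sign. The small accounting above with the subgradient inequality replaces that calculus argument, and is essentially the only nonroutine step; everything else is algebra and appeals to standard properties of convex functions on $\reals$.
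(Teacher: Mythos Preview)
Your proof is correct. The paper states this lemma without proof, so there is no argument to compare against; your treatment of each part is sound, and in particular the subgradient-inequality bookkeeping for the monotonicity of $A(x)$ on either side of $y$ cleanly handles the nondifferentiable case.
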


\begin{lemma}
  \label{lem:expectile-monotone}
  Suppose $r'>r$.  If $F$ is IC for an expectile, $F(r',y|r)$ is continuous and strictly monotone increasing in $y$.  Moreover, $(2\tau-1)F(r',y|r)$ is convex in $y$.
\end{lemma}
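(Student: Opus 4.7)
The plan is to handle continuity, strict monotonicity, and convexity of $(2\tau-1) F(r',\cdot|r)$ simultaneously via case analysis on the position of $y$ relative to $r$ and $r'$. Writing $S(r,y) = -|\ones\{y \leq r\} - \tau|\, D_g(y,r)$ for the Bregman divergence $D_g(y,r) = g(y)-g(r)-dg_r(y-r)$, the function $F(r',y|r) = S(r',y) - S(r,y)$ decomposes into three pieces on the intervals $\{y \leq r\}$, $\{r < y \leq r'\}$, and $\{y > r'\}$, distinguished only by which of the weights $\tau$ and $1-\tau$ multiplies each divergence.

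In the two outer regions the coefficients of $D_g(y,r)$ and $D_g(y,r')$ coincide, so the $g(y)$ contributions cancel and $F$ is affine in $y$ with slopes $(1-\tau)(dg_{r'} - dg_r)$ on the left and $\tau(dg_{r'} - dg_r)$ on the right, both strictly positive by strict convexity of $g$ (which forces $dg_{r'} > dg_r$ for $r' > r$). In the middle region the two coefficients differ by exactly one, so expansion produces
\[ F(r',y|r) = (2\tau-1)\,g(y) + A\,y + B \]
for constants $A,B$ depending on $r,r',\tau$; hence $(2\tau-1) F(r',y|r) = (2\tau-1)^2 g(y) + \text{affine}$, which is convex in $y$ because $g$ is convex. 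Continuity of $F$ at the seams $y = r$ and $y = r'$ is then immediate from $D_g(r,r) = D_g(r',r') = 0$: the two adjacent pieces agree at each join regardless of which weight is applied.

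For strict monotonicity in the middle region, a subgradient computation gives slope $(2\tau-1) g'(y) + (1-\tau) dg_{r'} - \tau dg_r$, which is monotone in $y$ with the sign of $2\tau - 1$; so its minimum is attained at one of the two endpoints, and using $g'(r^+) \geq dg_r$ when $\tau > 1/2$ or $g'(r'^-) \leq dg_{r'}$ when $\tau < 1/2$, the minimum simplifies to $\min(\tau, 1-\tau)(dg_{r'} - dg_r) > 0$. Combined with the strictly positive outer slopes and the already-established continuity, this yields strict monotonicity of $F$ throughout $\reals$. Finally, convexity of $(2\tau-1) F$ across the seams reduces to verifying $(2\tau-1)^2 [g'(r^+) - dg_r] \geq 0$ at $y = r$ and $(2\tau-1)^2 [dg_{r'} - g'(r'^-)] \geq 0$ at $y = r'$; both hold because any subgradient selection satisfies $dg_r \in [g'(r^-), g'(r^+)]$.

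The main delicacy is that $g$ need not be differentiable at $r$ or $r'$, so the slope computations at the seams must track subgradient selections rather than true derivatives. Fortunately, the monotonicity $g'(r^-) \leq dg_r \leq g'(r^+)$ makes every boundary inequality point in the direction required for both strict monotonicity and convexity, so no strictness is lost when the pieces are glued.
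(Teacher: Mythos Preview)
Your proof is correct and follows essentially the same three-region case analysis as the paper: affine outer pieces with slope a positive multiple of $dg_{r'}-dg_r$, a middle piece of the form $(2\tau-1)g(y)+Ay+B$, continuity at the seams via $D_g(r,r)=D_g(r',r')=0$, and convexity of $(2\tau-1)F$ from the $(2\tau-1)^2 g(y)$ coefficient. The only substantive difference is that the paper handles monotonicity on $[r,r']$ by observing that $y\mapsto D_g(y,r)$ is increasing and $y\mapsto D_g(y,r')$ is decreasing there (so both terms $\tau D_g(y,r)$ and $-(1-\tau)D_g(y,r')$ are increasing), whereas you compute the slope $(2\tau-1)g'(y)+(1-\tau)dg_{r'}-\tau dg_r$ directly and bound it below by $\min(\tau,1-\tau)(dg_{r'}-dg_r)$; your route is a bit more explicit and, as you note, is more careful than the paper about subgradient selections at the seams and about checking convexity across the joins.
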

\begin{proof}
  First we show monotonicity.  By the definition of $S$, we have two cases:

  Case 1: $y \in [r,r']$.  By Lemma~\ref{lem:breg-div-quasi-convex}, $D_g(y,x)$ is quasiconvex in $y$, and takes a minimum at $y=x$, it must be increasing in $y$ for $y>x$ and decreasing for $y < x$.  Putting these together gives the claim for this case.

  Case 2: $y > r'$ or $y < r$.  Both reduce to $F(r',y|r) = \alpha (D_g(y,r) - D_g(y,r'))$ for $\alpha > 0$.  Simplifying and grouping constants, we have $F(r',y|r) = \alpha (dg_{r'} - dg_r) y + C$, and as $dg_x$ is strictly monotone increasing in $x$, the coefficient of $y$ is strictly positive, and hence $F$ is increasing.

  For continuity, we simply note that $D_g$ is continuous in $y$ in Case 1, affine in $y$ and thus continuous in Case 2, and one easily checks that the two cases coincide when $y=r$ or $y=r'$, as $D_g(r,r)=D_g(r',r')=0$.  Convexity follows from the fact that $F$ is linear in Case 2, and in Case 1 reduces to $(2\tau - 1)g(y) + C_1 y + C_2$ for constants $C_1,C_2$; multiplying again by $(2\tau-1)$ gives a positive coefficient to $g(y)$, which is convex.
\end{proof}

\begin{theorem}
  \label{thm:expectile-wn}
  Any  expectile market with differentiable $g$ satisfies WN.
\end{theorem}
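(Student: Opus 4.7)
The plan is to exploit the structure provided by Lemma \ref{lem:expectile-monotone}: every expectile trade contract $\vec F(r'|r)$ is continuous and strictly monotone in $y$, and—as can be read off directly from the case analysis in that lemma's proof—is \emph{affine} on each of the two tails $\{y < \min(r,r')\}$ and $\{y > \max(r,r')\}$.  In particular, whenever $r \neq r'$ the affine tail on one side has strictly positive (or negative) slope and therefore forces $\inf_y \vec F(r'|r) = -\infty$.  Since any finite number is strictly greater than $-\infty$, verifying WN reduces to exhibiting a counter-trade $r_2 \to r_2'$ such that $\vec F(r_1'|r_1) + \vec F(r_2'|r_2)$ is bounded below on $\reals$.

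Without loss of generality take $r_1' > r_1$.  A direct expansion of $D_g$ using differentiability of $g$ (exactly as in Case 2 of the proof of Lemma~\ref{lem:expectile-monotone}) shows that $\vec F(r_1'|r_1)$ has left-tail slope $(1-\tau)(g'(r_1')-g'(r_1))$ and right-tail slope $\tau(g'(r_1')-g'(r_1))$, both strictly positive by strict convexity of $g$.  To cancel them simultaneously it is enough to make $(g'(r_2')-g'(r_2)) = -(g'(r_1')-g'(r_1))$, so I would \emph{choose} $r_2' < r_2$ as the solution of
\begin{equation*}
g'(r_2') \;=\; g'(r_2) - \bigl(g'(r_1')-g'(r_1)\bigr).
\end{equation*}
Such an $r_2'$ exists by continuity and strict monotonicity of $g'$, provided the right-hand side lies in the range of $g'$; this is automatic, e.g., when $g'$ is surjective onto $\reals$, and is the one place where the hypothesis ``$g$ differentiable'' is used in earnest.

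With this choice, both tails of $\vec F(r_2'|r_2)$ have slopes exactly opposite those of $\vec F(r_1'|r_1)$, so the sum has slope $0$ on both tails, i.e.\ is eventually constant outside the compact interval $I = [\min\{r_1,r_1',r_2,r_2'\},\max\{r_1,r_1',r_2,r_2'\}]$.  Continuity of each summand (again Lemma~\ref{lem:expectile-monotone}) makes the sum bounded on $I$, hence globally bounded below by some finite $c > -\infty = \inf \vec F(r_1'|r_1)$, establishing WN.  The case $r_1' < r_1$ is symmetric: take $r_2' > r_2$ and match slopes in the same way, which is the ``analogous to Figure~\ref{fig:mean-position}'' picture alluded to in the discussion.

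The main obstacle is the existence of $r_2'$ in the second step: if the range of $g'$ is a proper subinterval of $\reals$, the required slope-matching may fail when $g'(r_2)$ sits too close to an endpoint of that range relative to the gap $g'(r_1')-g'(r_1)$.  This is the one delicate point I expect to need extra care—either by invoking an implicit richness assumption on $g'$, or by arguing that in the relevant setting the report space can be extended so that $g'$ is surjective, after which the slope-cancellation argument above completes the proof.
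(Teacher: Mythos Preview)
Your proposal is essentially identical to the paper's proof: both assume $r_1' > r_1$ without loss of generality, choose $r_2' < r_2$ solving $g'(r_2') - g'(r_2) = g'(r_1) - g'(r_1')$, and then observe via Lemma~\ref{lem:expectile-monotone} that the sum is constant on the two tails and bounded on the compact middle interval by continuity, hence bounded below while the original trade was not. The surjectivity-of-$g'$ issue you flag is real and is glossed over in the paper's proof as well; your instinct that it requires an implicit richness assumption on the range of $g'$ is correct.
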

\begin{proof}
  Suppose $r_1' > r_1$ without loss of generality, and let $r_2$ be arbitrary.  By continuity and monotonicity of the derivative $g'(\cdot)$ of $g$, there exists $r_2'<r_2$ such that $g'(r_2') - g'(r_2) = g'(r_1) - g'(r_1')$; we show that this $r_2'$ satisfies the conditions of WN.  By Lemma~\ref{lem:expectile-monotone}, we see that $F(r_1',y|r_1) + F(r_2',y|r_2)$ is constant for $y<\min\{r_1,r_2'\}$ and $y > \max\{r_1',r_2\}$, and by continuity, must be bounded in between; as $F(r_1',y|r_1)$ was unbounded before, we have satisfied WN.  
\end{proof}

\includegraphics{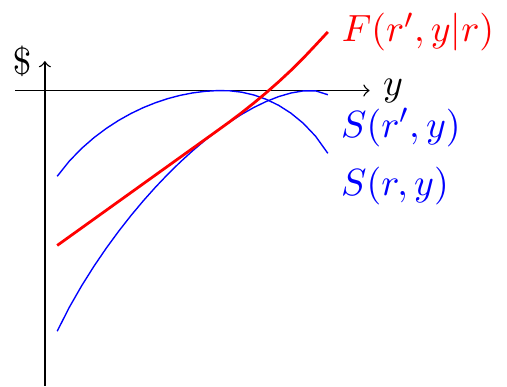}
\qquad
\includegraphics{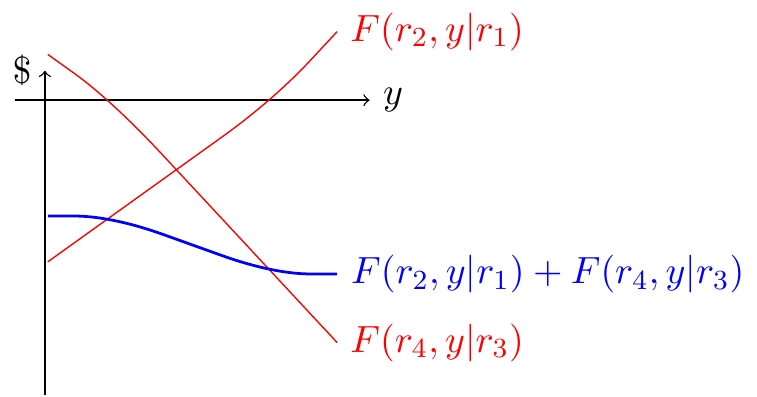}
%
%

%
%
%
%
%
%
%
%
%
%
%
%
%
%
%
%
%
%
%
%
%
%
%
%
%
%
%
%
%

\end{document}